\def\confversion{0}
\def\ifconf{\ifnum\confversion=1}
\def\ifnotconf{\ifnum\confversion=0}
\setlist[enumerate]{label={\roman*.}, ref={(\roman*)}}
\newtheoremstyle{plain}
  {\medskipamount}
  {\smallskipamount}
  {\slshape}
  {0pt}
  {\bfseries}
  {.}
  { }
  {\thmname{#1}\thmnumber{ #2}{\normalfont\thmnote{ (#3)}}}
\theoremstyle{plain}
\newtheorem{theorem}{Theorem}[section]
\newtheorem{proposition}[theorem]{Proposition}
\newtheorem{corollary}[theorem]{Corollary}
\newtheorem{lemma}[theorem]{Lemma}
\newtheorem{remark}[theorem]{Remark}
\newtheorem{definition}[theorem]{Definition}
\definecolor{darkred}{rgb}{0.5,0,0}
\definecolor{darkgreen}{rgb}{0,0.35,0}
\definecolor{darkblue}{rgb}{0,0,0.55}
\definecolor{orange}{HTML}{FF7F00}
\def\CC{\ensuremath{\mathbb{C}}}
\def\EE{\ensuremath{\mathbb{E}}}
\def\FF{\ensuremath{\mathbb{F}}}
\def\NN{\ensuremath{\mathbb{N}}}
\def\PP{\ensuremath{\mathbb{P}}}
\def\RR{\ensuremath{\mathbb{R}}}
\def\ZZ{\ensuremath{\mathbb{Z}}}
\newcommand{\F}{{\mathbb F}}
\newcommand{\N}{{\mathbb{N}}}
\def\One{\ensuremath{\mathbbm{1}}}
\def\cF{\ensuremath{\mathcal{F}}}
\DeclareMathOperator{\linspan}{span}
\DeclareMathOperator{\poly}{poly}
\DeclareMathOperator{\supp}{supp}
\DeclareMathOperator{\Lin}{Lin}
\DeclareMathOperator{\GL}{GL}
\DeclareMathOperator{\rk}{rk}
\DeclareMathOperator{\Valid}{Valid}
\DeclareMathOperator{\config}{config}
\DeclareMathOperator{\Config}{Config}
\DeclareMathOperator{\NConfig}{NConfig}
\DeclareMathOperator{\tr}{tr}
\DeclareMathOperator{\MRRW}{MRRW}
\newcommand{\Rdelta}{R_2(\delta)}
\let\emph\textit
\let\geq\geqslant
\let\leq\leqslant
\let\ge\geq
\let\epsilon\varepsilon
\newcommand{\comp}{\mathbin{\circ}}
\newcommand{\rest}{\vert}
\newcommand{\given}[1][]{\mathbin{#1\vert}}
\newcommand{\place}{\mathord{-}}
\newcommand{\pPr}{\widetilde{\PP}}
\def\rn{\bm}
\newcommand{\df}{\stackrel{\text{def}}{=}}
\newcommand{\gQR}[1][\tau]{g_{#1\text{-QR}}}
\newcommand{\GQR}[1][\tau]{G_{#1\text{-QR}}}
\newcommand{\gvu}[1][\tau]{g_{#1\text{-vertex-unif}}}
\newcommand{\Gvu}[1][\tau]{G_{#1\text{-vertex-unif}}}
\def\Lovasz{Lov\'{a}sz\xspace}
\def\Mobius{M\"{o}bius\xspace}
\title{%
  Higher-order Delsarte Dual LPs:\\
  Lifting, Constructions and Completeness%
}
\author{Leonardo Nagami Coregliano\thanks{{\tt University of Chicago}. {\tt lenacore@uchicago.edu}.} \and
        Fernando Granha Jeronimo\thanks{{\tt University of Illinois Urbana-Champaign}. {\tt granha@illinois.edu}.} \and
        Chris Jones\thanks{{\tt Bocconi University}. {\tt chris.jones@unibocconi.it}. } \and
        Nati Linial\thanks{{\tt The Hebrew University of Jerusalem}. {\tt nati@cs.huji.ac.il}. {\tt Supported in part by Advanced ERC Grant PaDiDom 101141253}.} \and
        Elyassaf Loyfer\thanks{{\tt The Hebrew University of Jerusalem}. {\tt elyassaf.loyfer@mail.huji.ac.il}}}
\date{\today}
\begin{document}
\maketitle
\pagenumbering{roman}

\begin{abstract}
A central and longstanding open problem in coding theory is the rate-versus-distance trade-off for binary
error-correcting codes. In a seminal work, Delsarte introduced a family of linear programs establishing
relaxations on the size of optimum codes. To date, the state-of-the-art upper bounds for binary codes
come from dual feasible solutions to these LPs. Still, these bounds are
exponentially far from the best-known existential constructions.

Recently, hierarchies of linear programs extending and strengthening Delsarte's original LPs were introduced for linear
codes, which we refer to as higher-order Delsarte LPs. These new hierarchies were shown to provably converge to the actual
value of optimum codes, namely, they are complete hierarchies. Therefore, understanding them and their dual formulations
becomes a valuable line of investigation. Nonetheless, their higher-order structure poses challenges. In fact, analysis of
all known convex programming hierarchies strengthening Delsarte's original LPs has turned out to be exceedingly difficult
and essentially nothing is known, stalling progress in the area since the 1970s.

Our main result is an analysis of the higher-order Delsarte LPs via their dual formulation.
Although quantitatively, our current analysis only matches the best-known upper bounds, it shows, for the first time,
how to tame the complexity of analyzing a hierarchy strengthening Delsarte's original LPs. In doing so, we reach a better understanding
of the structure of the hierarchy, which may serve as the foundation for
further quantitative improvements. We provide two additional structural results for this hierarchy.
First, we show how to \emph{explicitly} lift any feasible dual solution from level $k$ to a (suitable) larger level
$\ell$ while retaining the objective value. Second, we give a novel proof of completeness using the dual formulation.
\end{abstract}

\thispagestyle{empty}

\newpage
\tableofcontents
\clearpage

\pagenumbering{arabic}
\setcounter{page}{1}

\section{Introduction}

A central and longstanding open problem in coding theory is the rate-vs-distance tradeoff for binary error-correcting codes.
Roughly speaking, it asks for every $\delta \in (0,1/2)$, what is the largest exponent $\Rdelta$ such that there is a distance $\delta n$
error-correcting code of size $2^{\Rdelta\cdot n}$? Despite many decades of effort, the best upper and lower bounds on the rate
$\Rdelta$ are still far apart, implying that we do not understand the exponential growth rate of optimal binary codes.

Convex programming is not only fundamental to algorithm design but it can also be employed to study
combinatorial and mathematical structures. The best known upper bounds on $\Rdelta$ come from the analysis
of convex programming relaxations. In a seminal work, Delsarte~\cite{Del73} showed how to set up linear program
relaxations for the maximum possible size of an error-correcting code. The Delsarte LPs have unfolded into a far-reaching theory leading, for instance, to the best known upper bounds
on $\Rdelta$~\cite{MRRW77}, to breakthroughs in sphere packing~\cite{CE03,V17,CKMRV17}, and to improved bounds on packings
and codes in other types of geometric spaces~\cite{Lev98,Bac06,Bac08,Bar06}.

The success of convex relaxations is sometimes limited by an \emph{integrality gap} between
their optimum  and the true value of the combinatorial problem.
For error-correcting codes, it is known that the value of the Delsarte LP is exponentially far from the Gilbert--Varshamov lower bound~\cite{Sam01}.
If the true size of an optimal binary code is actually near the Gilbert--Varshamov bound (as conjectured by some specialists \cite{JV04,Gop93}),
then this family of relaxations needs to be substantially strengthened.

Given this context, stronger convex relaxations might be imperative to tighten the upper bounds.
In principle, powerful semi-definite programming (SDP) tools such as the Sum-of-Squares hierarchy~\cite{Las15} can be
applied to this problem~\cite{Lau07}. However, asymptotic analysis of these SDP-based relaxations remains
elusive even for the simplest cases~\cite{Sch05}, and only numerical results are known for small constant values of
blocklength~\cite{Gij12}.

To appreciate the difficulty of asymptotically analyzing convex relaxations, recall that the goal is to construct 
a feasible dual solution which upper bounds the primal objective value.
Typically, this requires an explicit construction and analysis.
This is a different goal from typical uses of convex programming in algorithm design, where the starting
point of the analysis is a solution returned by a convex programming solver. There, one does not need
to know the precise structure of the optimum but only the property that it is (near) optimum.

Recently, hierarchies of linear programs extending the Delsarte LPs were proposed for the important case of linear codes~\cite{CJJ22,LL23b}.
We refer to them informally as ``higher-order Delsarte LPs''.
The idea behind them is to strengthen the Delsarte LPs with additional natural
constraints which nonetheless might be simple enough to theoretically analyze. In fact,
these hierarchies were shown to converge to the true size of the code~\cite{CJJ22,CJJ23}, namely, they are complete.
Besides being LPs instead of SDPs, these hierarchies bear strong similarities with Delsarte LPs
for which we now have various theoretical analyses and a richer set of techniques~\cite{MRRW77,FT05,NS05,Bar06,Bar08,NS09,Sam23b,LL23a,CD24}.

Constructing dual solutions for the higher-order Delsarte LPs can lead to 
a breakthrough in the rate-versus-distance problem. Nonetheless, the higher-order
structure of these LPs may still require substantial effort to be understood and analyzed.
In this work, our main goal is to substantially increase our understanding of the structure of the higher-order Delsarte LP hierarchies
by establishing three new results about their dual formulations.

Before we present our results, we first recall these LPs with an informal and intuitive description
(see~\cref{sec:hierarchies,sec:otherform} for more details).
The Delsarte LP (used in the first LP bound) has a variable intended to count the number of codewords of each Hamming weight. The higher-order Delsarte
LPs form a hierarchy with a level parameter $\ell \in \N$. There is a variable intended to count the number of $\ell$-tuples of codewords with every
possible Hamming weight configuration of a subspace of dimension $\ell$.
For example, for $\ell = 2$, essentially there is a variable for each $(a,b,c) \in \{0,1,\dots, n\}^3$ 
which is intended to be the number of pairs of codewords $(x,y)$ such that $(|x|, |y|, |x+y|) = (a,b,c)$.

\subsection{Our Contributions}

We show three different ways of constructing dual solutions for the higher-order Delsarte LPs.
First, we show how to lift a solution from any level $k$ to a higher level $\ell$. 
Second, we show how to construct an explicit solution at a higher level. In contrast with the
lift that takes any solution as a black box, here we must directly understand and tackle the
additional complicated structure imposed by the higher levels. Lastly, by relaxing the constraints,
we are able to come up with a dual solution that shows completeness. We will now elaborate on each
of these three new constructions of higher-order dual solutions.

Motivated by the proven strength of these new hierarchies (their completeness) and
our extensive understanding of the first level of the hierarchy (i.e., Delsarte's original LPs),
a natural question is how to \emph{lift} a dual solution from level $1$ to an arbitrary level $\ell$, i.e., how to explicitly construct a level $\ell$ dual
solution from a level $1$ dual solution while (appropriately) retaining its objective value.
A lift is one way to identify an explicit solution to level $\ell$ of the hierarchy
whose value matches the Delsarte LP.
Therefore, there may be potential to perturb the lifted solution in a direction which improves the objective value.
Besides improving our understanding of how dual solutions are related
to each other across multiple levels of the hierarchy, the additional structure of the dual at higher levels has the potential of leading to improvements in
the objective value (in case the original Delsarte LPs suffer from integrality gap).
We prove a general lifting result from a level $k$
dual solution to level $\ell$ assuming that $k$ divides $\ell$. More precisely, our first structural result is given below.

\begin{theorem}[Lifting Dual Solutions (Informal version of~\cref{thm:lift})]
  Given an arbitrary dual feasible solution of level $k$, we can \emph{explicitly} construct a new dual feasible solution of
  level $\ell \ge k$ provided $k$ divides $\ell$ (this can be done over any finite field $\FF_q$).
  Furthermore, this new dual solution has (appropriately) the same objective value of the given starting solution.
\end{theorem}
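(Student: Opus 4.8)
The plan is to realize the level-$\ell$ dual solution as a suitable ``tensor power'' of the given level-$k$ dual solution, exploiting the assumption $k \mid \ell$. Recall that a dual feasible solution to the higher-order Delsarte LP is (essentially) a function on weight configurations of $k$-dimensional subspaces of $\FF_q^n$ that is positive semidefinite in the appropriate Krawtchouk/Delsarte sense and satisfies the right normalization and nonpositivity constraints on the ``forbidden'' configurations (those containing a nonzero vector of weight in the banned range $(0,\delta n)$). Writing $\ell = k \cdot m$, I would split the ambient coordinate set $[n]$ — or rather think of an $\ell$-dimensional subspace as being built from $m$ ``blocks'' of $k$-dimensional data — and define the lifted solution on a configuration of an $\ell$-dimensional space $U$ by decomposing $U$ (via a fixed flag or block structure) into $k$-dimensional pieces and taking a product of the given level-$k$ values on the induced configurations, summed/averaged over the choices of decomposition so as to restore $\GL$-invariance.

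The key steps, in order, would be: (i) set up the precise correspondence between level-$\ell$ configurations and tuples of level-$k$ configurations coming from a block/flag decomposition, and check it is well-defined on $\GL_\ell(\FF_q)$-orbits after symmetrization; (ii) verify positive semidefiniteness of the lifted solution — here I expect the product structure to help, since a product (or tensor product) of PSD operators is PSD, and symmetrization preserves PSD-ness, so the Krawtchouk-positivity constraints at level $\ell$ should follow from those at level $k$ applied blockwise; (iii) verify the sign constraints on forbidden configurations: an $\ell$-dimensional space containing a short nonzero vector $v$ forces at least one $k$-dimensional block in the decomposition to contain $v$ (or to contain a short vector), so that block's level-$k$ value is already nonpositive, and the remaining factors must be controlled in sign (likely by arranging all level-$k$ values on admissible configurations to be nonnegative, or by a more careful bookkeeping of signs across the product); (iv) compute the objective value of the lifted solution and check it matches the original ``appropriately'' — i.e., after the natural rescaling relating the level-$k$ and level-$\ell$ normalizations — which should again reduce to multiplicativity of the objective across blocks together with the normalization constraint pinning each block's contribution.

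The main obstacle I anticipate is step (iii): controlling the \emph{signs} of the product. Positivity and the objective value are ``multiplicative'' and should go through cleanly, but the dual feasibility constraints are sign constraints (nonpositive on forbidden configurations, with a normalization fixing the ``all-zero'' configuration), and a product of terms that are individually nonpositive on \emph{some} coordinates is not automatically nonpositive on the product configuration. Making this work will require either (a) showing that a generic dual solution can be preprocessed so that its values on all \emph{admissible} configurations are nonnegative (so that the one nonpositive block factor controls the product's sign), or (b) a finer combinatorial argument tracking, for each forbidden $\ell$-configuration, exactly how many blocks in the decomposition are forbidden versus admissible, and symmetrizing in a way that the bad-sign contributions cancel or dominate correctly. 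A secondary technical point is ensuring the block decomposition is canonical enough that the symmetrization sum is over a group action (so invariance is automatic) rather than an ad hoc average, and that the divisibility $k \mid \ell$ is genuinely what makes the blocks have uniform size $k$; I would set things up so that this is transparent from the start.
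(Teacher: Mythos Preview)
Your instinct that the sign control in step (iii) is the crux is correct, and in fact it is fatal for the tensor-product route: a product of factors each satisfying ``$\leq -1$ on forbidden configurations'' can be large and positive when an even number of blocks are forbidden, and there is no symmetrization over $\GL_\ell(\FF_q)$ that repairs this without destroying the PSD property you worked to obtain. The paper does \emph{not} fix the product; it abandons it entirely in favor of an additive, telescoping construction.

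Concretely (working first with $k=1$ and the shifted function $h = h'-1$ so that $\widehat{h}\geq 0$, $\widehat{h}(0)=0$, and $h\leq -1$ on valid nonzero points), the paper sets $f^{(1)}(X) \df h(X_1)$, which already satisfies validity whenever $X_1 \neq 0$ is valid, but fails when $X_1 = 0$. It then adds a correction supported on $\{X_1=0\}$:
\[
  f^{(2)}(X) \df f^{(1)}(X) + (1+h(0))\cdot h(X_2)\cdot \One[X_1=0],
\]
and iterates: $f^{(t)} = f^{(t-1)} + (1+h(0))^{t-1} h(X_t)\,\One[X_{1,\ldots,t-1}=0]$. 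Each correction term has nonnegative Fourier transform (it is $h$ times an indicator of a subgroup), so PSD-ness is preserved additively; the validity constraint for $X$ with $X_{1,\ldots,t}\neq 0$ valid is handled by the $t$th stage; and the objective value telescopes to $(1+h(0))^\ell$. For general $k\mid \ell$ one advances in chunks of $k$ rows. When passing to the $\GL_\ell$-symmetrized dual~\eqref{eq:symmpLPdual}, the indicator $\One[X_{1,\ldots,t}=0]$ is replaced, after averaging over $\GL_\ell(\FF_q)$, by a condition on $\rk(X)$, and the validity inequality becomes a telescoping sum of terms $\frac{(\ell-k(t+1))_{r,q}\,V_h - (\ell-kt)_{r,q}}{(\ell)_{r,q}}$ that collapses exactly to $-1$.

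So the missing idea is: do not multiply the level-$k$ solution with itself; instead, apply it to a single block of $k$ rows and \emph{add} geometrically weighted corrections supported on the event that earlier blocks vanish. The divisibility $k\mid\ell$ enters because the correction terms advance $k$ rows at a time, and the ``appropriate'' objective value is $V_h^{\ell/k}$ via the telescoping sum $\sum_{t=0}^{\ell/k-1} V_h^t(V_h-1)$, not via multiplicativity of a tensor.
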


\begin{remark}
  Unlike more structured convex programming hierarchies such as the Sum-of-Squares SDP hierarchy or Sherali-Adams LP
  hierarchy, establishing a lift for the higher-order Delsarte dual LPs is not trivial. We also stress that the value of the above theorem
  lies in its \emph{explicitness}; ``monotonicity''
  of the objective value was already established~\cite{CJJ22} (using the primal formulation), and this is not the point of the preceding theorem.  
\end{remark}

Another natural question is whether we can construct dual feasible solutions for higher levels of
these new hierarchies from scratch. As noted above, there are now a wealth of perspectives and techniques
to construct dual feasible solutions to level $1$ (the original Delsarte LPs). For instance, the original MRRW
proof relies on properties of the Krawtchouk polynomials, which form a family of orthogonal
polynomials, whereas some more recent proofs use spectral graph theory and Fourier analysis. Curiously, these various analyses
are largely different perspectives or small variations of a single construction. Nonetheless, having multiple perspectives can be
very helpful, and they can serve as (seemingly) different starting points for analyzing the hierarchies.

Although these hierarchies are structurally similar to the original LPs (coinciding at level $1$), there are challenges
to be addressed. First, the hierarchy at level $\ell \ge 2$ inherently relies on multivariate
versions of Krawtchouk polynomials, as opposed to the univariate version of level $1$. The asymptotic behavior of the first root of univariate Krawtchouk polynomials
plays a crucial role in the original analysis, while establishing an analogous property in the multivariate case is less clear.
Moreover, while level $1$ is the same regardless of whether a code is linear or not (only the meaning of the variables changes), higher
levels of these hierarchies have new constraints associated with linearity which
pose new challenges.

Our second structural and main result is an explicit construction of dual feasible solutions to constant levels of the hierarchy
for the important class of balanced linear codes\footnote{Recall that, for $\epsilon \in (0,1)$, an
$\epsilon$-balanced linear code is a code in which every non-zero codeword has Hamming weight in $[(1-\epsilon)n/2,(1+\epsilon)n/2]$.},
giving the first theoretical analysis of a convex programming hierarchy containing Delsarte's original LP.
The main contribution here is to make sense of the higher-order structure of the hierarchy, suitably generalizing spectral-based techniques
for the Delsarte LP. Obtaining such suitable generalization was met with substantial challenges (see~\cref{sec:overview})
as it may be expected in analyzing \emph{any} convex programming hierarchy strengthening Delsarte's LP since progress in this area has stalled in 1970s.
The objective value of our constructed solutions approximately matches the state-of-the-art MRRW bound up to lower-order terms in $\epsilon$. Our main result is stated below.

\begin{theorem}[Higher-order Dual Solution (Informal version of~\cref{cor:spectral_construction} of~\cref{theo:spectral_construction})]
  For every constant level $\ell\in\NN_+$, there is an explicit construction of dual feasible solutions at level $\ell$ for binary
  $\epsilon$-balanced linear codes with rate upper bound $R_2^{\ell}(\delta)$, with $\delta=(1-\epsilon)/2$, satisfying
  $$
  R_2^{\ell}(\delta) = (1 + o_{\epsilon}(1)) \cdot R_2^{\textup{MRRW}}(\delta),
  $$
  where $R_2^{\textup{MRRW}}(\delta)$ is the rate upper bound of the first LP bound of~\cite{MRRW77}.
\end{theorem}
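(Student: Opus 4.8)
The plan is to lift the known MRRW dual solution from level $1$ to level $\ell$, and then verify that the resulting solution is feasible for the higher-order Delsarte LP restricted to $\epsilon$-balanced linear codes, with only a lower-order loss in the objective. Since the informal lifting theorem (\cref{thm:lift}) requires $k \mid \ell$, and $k=1$ divides every $\ell$, a naive first attempt is simply to apply the lift to the MRRW solution. However, the stated lift retains the objective value exactly, so if it were feasible for the full (non-balanced, possibly non-linear) level-$\ell$ LP it would already beat nothing and contradict nothing — but it also would not use the $\epsilon$-balanced hypothesis, which the theorem crucially invokes. So the real content must be that the \emph{clean} lift is not what we want: instead I expect one needs a construction tailored to the multivariate Krawtchouk structure, where the $\epsilon$-balanced restriction is what makes the multivariate first-root behavior tractable.

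Concretely, here is the route I would take. First, recall the spectral/Fourier reformulation of the level-$1$ dual: one exhibits a function on the Hamming cube (equivalently a conic combination of Krawtchouk polynomials $K_1, \dots, K_t$ with $t \approx$ the first root location) that is nonnegative off a Hamming ball of radius $\delta n$ and has a controlled ratio of its value at $0$ to its average. The level-$\ell$ dual lives over $\ell$-tuples of codewords, whose joint weight configurations range over $\config$-space — i.e., over weight vectors indexed by the $2^\ell - 1$ nonzero functionals on $\FF_2^\ell$. The natural lift of the MRRW witness is a product/tensor-type object built from $\ell$ copies of the univariate witness, one per coordinate of an appropriate parametrization; the multivariate Krawtchouk polynomials factor along the eigenspaces of the associated Cayley-type graph, so a product witness is again a conic combination of multivariate Krawtchouks. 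The objective value of a product of $\ell$ copies is the $\ell$-th power of the level-$1$ value, but the level-$\ell$ LP normalizes the objective by taking an $\ell$-th root (``appropriately retaining the objective value'' in the lifting statement), so the rate comes back to $R_2^{\mathrm{MRRW}}(\delta)$ up to the normalization.

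The main obstacle — and where the $\epsilon$-balanced hypothesis enters — is feasibility: the lifted witness must be nonnegative on every \emph{valid} weight configuration that is not ``too close'' to the all-codewords-equal-weight pattern, and must vanish/behave correctly on the excluded region dictated by the distance constraint at level $\ell$. For an arbitrary linear code the set of realizable configurations is complicated (governed by the $\Valid$/$\Config$ constraints and the linearity constraints the excerpt flags as the new difficulty at $\ell \geq 2$), and a product witness need not be nonnegative there because cross-terms in different coordinates can have opposite signs. Restricting to $\epsilon$-balanced codes forces every nonzero codeword, and hence every nonzero linear combination of the $\ell$ tuple-coordinates, to have weight in $[(1-\epsilon)n/2, (1+\epsilon)n/2]$; this confines all the $2^\ell-1$ weight parameters to a small neighborhood of $n/2$, on which I would show the product witness is nonnegative by a perturbation/continuity argument around the ``balanced point'' together with the univariate MRRW sign pattern. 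The cost of this localization is exactly the $o_\epsilon(1)$ slack: one shrinks the degree parameter $t$ slightly and absorbs the error from the $\epsilon$-width of the balanced band, yielding $R_2^\ell(\delta) = (1+o_\epsilon(1)) R_2^{\mathrm{MRRW}}(\delta)$. I would then assemble the pieces: (1) write down the product witness and identify it as a conic combination of multivariate Krawtchouks; (2) check the linearity and configuration constraints of the level-$\ell$ dual are satisfied by this combination; (3) verify nonnegativity on the $\epsilon$-balanced configuration region via the localization argument; (4) compute the normalized objective and track the $\epsilon$-dependence to get the stated bound. Step (3) is the crux and the place I expect the bulk of the technical work — controlling the multivariate analogue of ``nonnegative beyond the first root'' without an actual multivariate root-location theory, by leaning on the narrowness of the balanced window.
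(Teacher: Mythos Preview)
Your proposal has a genuine structural gap. The ``product/tensor of $\ell$ univariate MRRW witnesses, one per coordinate'' fails the validity constraints even in the $\epsilon$-balanced setting, and not for a reason that localization fixes. Consider $\ell=2$ and $X=(X_1,X_2)\in\Valid_{n,2}^\epsilon$ with $X_1=X_2\neq 0$. Then both $X_1$ and $X_2$ are balanced, so both univariate factors are $\leq 0$ and the product is $\geq 0$; yet this $X$ is a valid nonzero configuration (its row span is one-dimensional). More generally, the sign of a product over the $\ell$ rows is governed by the \emph{parity of the number of nonzero rows}, whereas validity only cares about whether the row span is nonzero; these do not match. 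The paper's $\Phi_m$ is engineered precisely to defeat this: it is a product of $2^\ell-1$ factors $\phi_{m,u}$, one for each nonzero $u\in\FF_2^\ell$, and \cref{lem:constraints} shows that for valid $X\neq 0$ exactly the factors indexed by $u\in V^\perp\setminus\{0\}$ are nonpositive (where $V=\{v:vX=0\}$), which is always an odd number. Your perturbation-around-$n/2$ idea does not see this parity obstruction.

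You also inverted the location of the main difficulty. In the paper, once $\Phi_m$ is designed, the validity sign is a short argument; the real work is the \emph{Fourier nonnegativity} of $\Phi_m\cdot\widehat{\Lambda}^2$. This reduces (via~\eqref{eq:hatPhim}) to the matrix inequality $M_m\Lambda\geq 0$, where $M_m=\prod_u B_{m,u}$ is a product of $2^\ell-1$ commuting operators none of which is itself nonnegative. The paper handles this by the decomposition in \cref{lem:Mm}, which rewrites $M_m$ as a nonnegative combination of terms each containing a single factor of the form $\tfrac{1}{|S|}\sum_{\langle i,v\rangle=1}A_v^m-\text{const}$; it then suffices to find a configuration indicator $\Lambda$ with $A_v^m\Lambda\geq(2^{2\ell-1}\epsilon^m n^m+1)\Lambda$ for one $v$ of weight $1$ per $i$. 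That last inequality is established by an explicit walk-counting computation (\cref{lem:Apower}) for the vertex-uniform or quasirandom configurations, and the entropy of the chosen configuration is what produces the $(1+o_\epsilon(1))R_2^{\mathrm{MRRW}}$ rate. None of this spectral machinery is present in your outline, and it is not replaceable by a continuity argument: the operator $M_m$ is high-degree in $n$ and has no evident sign without the algebraic regrouping of \cref{lem:Mm}.
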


The proof of the above theorem establishes a footprint of how to construct higher-order dual solutions, breaking the ice on the daunting
complexity of higher-order convex programs.
It may serve as a technical foundation for further quantitative
improvements.


We now give some additional context before describing our third structural result. A feasible solution of the dual can be seen as a certificate
establishing a universal upper bound on the size of codes. Ideally, the better we understand the structure and nature of these dual certificates,
the better positioned we may be for designing new ones. The higher-order Delsarte hierarchies are known to converge to the
true value of a linear code; however, the known proofs \cite{CJJ22, CJJ23} are entirely based on the primal version of these hierarchies. It is then natural to ask
if we can use the dual hierarchies to prove completeness. Our third result is a novel completeness proof
of these hierarchies which uses their dual formulations. 

\begin{theorem}[Completeness from the Dual (Informal version of~\cref{thm:completeness_from_dual})]
  The dual higher-order Delsarte LPs obtain the true value of a linear code for any level $\ell \ge n$ and over any finite field $\FF_q$.
\end{theorem}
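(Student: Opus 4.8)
The plan is to prove completeness \emph{constructively from the dual side}: at every level $\ell \ge n$ I will write down an explicit dual feasible solution whose objective value equals the true optimum $M := M_q(n,d)$ (the maximum size, or its $\log_q$, of a linear code in $\FF_q^n$ with minimum distance $\ge d$). Two easy observations bracket the argument. First, plugging an optimal code $C^\star$ into the level-$\ell$ primal LP (in the form described in \cref{sec:hierarchies,sec:otherform}) gives a feasible point of value $M$, so the level-$\ell$ primal optimum is $\ge M$ at every level. Second, any dual feasible point upper bounds the primal optimum, so producing one dual feasible point of value $\le M$ forces, by LP duality, the common value of the level-$\ell$ primal and dual to equal $M$. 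Thus the theorem reduces to exhibiting, for $\ell \ge n$, a single explicit dual feasible solution of value at most $M$.

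To make the dual tractable I would first \emph{relax} the primal LP: discard the ``geometric'' consistency constraints that couple configurations of different shapes (and any other constraints not needed for weak duality against genuine codes), keeping only the normalization, the MacWilliams / multivariate-Krawtchouk positivity constraints, and the minimum-distance constraints. Relaxing the primal can only increase its optimum, so the relaxed primal still dominates $M$; and its dual is a far simpler LP, which is where I will build the certificate. (Equivalently phrased on the dual side: one weakens the constraints the dual inherits from those couplings, enlarging its feasible region, and then finds a certificate in the enlarged region.)

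The certificate itself I would build directly from the optimal code $C^\star$ and its dual code $(C^\star)^\perp$: take the dual variable (or variables) attached to the configuration realized by a generating $\ell$-tuple of $C^\star$. This is exactly where the hypothesis $\ell \ge n \ge \dim C^\star$ enters — for such $\ell$ a single $\ell$-tuple of codewords can span all of $C^\star$, so the ``global'' configuration of $C^\star$ is visible at level $\ell$ (for $\ell < n$ it need not be, which is precisely where the hierarchy can fail to be tight). The positivity part of dual feasibility should then reduce to the Fourier nonnegativity of $\One_{C^\star}$ together with that of its transform (a scalar multiple of $\One_{(C^\star)^\perp}$); the ``no short nonzero codeword'' sign condition is exactly the minimum-distance hypothesis on $C^\star$; and the objective evaluates to $\sum_{x}\One_{C^\star}(x) = |C^\star| = M$ once the value at the trivial configuration is normalized in the usual way.

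The step I expect to be the main obstacle is verifying that this code-based certificate satisfies \emph{all} retained dual constraints simultaneously — in particular the multivariate-Krawtchouk positivity constraints of the relaxed dual — and, dually, choosing the relaxation coarse enough that this verification goes through while keeping it fine enough that weak duality against actual codes is preserved; a secondary but fiddly point is getting the normalization (size versus dimension, and the value assigned to the empty/trivial configuration) to land exactly on $M$ rather than off by a lower-order factor. Once feasibility and the value are confirmed, the sandwich $M \le \val(\text{dual}_\ell) = \val(\text{primal}_\ell) \le \val(\text{relaxed primal}_\ell) \le M$ closes the proof for every $\ell \ge n$ and every $\FF_q$, and, being explicit, the construction yields an actual dual certificate witnessing $M$.
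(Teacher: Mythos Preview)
Your high-level strategy --- relax, then exhibit an explicit dual certificate --- is in the right spirit, but both the relaxation and the certificate you propose are off, and the paper's actual mechanism is quite different.

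The paper does not discard primal constraints; it \emph{enlarges the valid set}. Writing $k := \max\{\dim_{\FF_q}(C) : C \in \Valid_n\}$, one replaces $\Valid_n$ by $\Valid_n^{\dim\le k} := \{S \in L_{\FF_q}(\FF_q^n) : \dim_{\FF_q}(S)\le k\}$. Since $\Valid_n \subseteq \Valid_n^{\dim\le k}$, this makes the dual \emph{harder} (more validity constraints to satisfy), not easier --- but any feasible point for the harder dual is automatically feasible for the original. The payoff is that the harder problem depends only on the integer $k$, not on any specific optimal code, and is so symmetric (one works in the $\GL_\ell(\FF_q)$-symmetrized, subspace-indexed formulation~\eqref{eq:Mdual}) that the dual variables can be taken to depend only on $\dim_{\FF_q}(S)$. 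The equality-to-objective constraints then collapse to a one-parameter recurrence for $\widetilde\beta_s$, $s=k-1,\ldots,0$, whose nonnegativity is checked by reverse induction; the hypothesis $\ell\ge n$ enters precisely to make one sign in that recurrence come out right.

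By contrast, your relaxation wants to drop ``geometric consistency constraints that couple configurations of different shapes,'' but the LP~\eqref{eq:LPprimal} has no such constraints to drop --- its content is exactly normalization, validity, Fourier nonnegativity, and nonnegativity. And your proposed certificate, built from $\One_{C^\star}$ for a specific optimal $C^\star$, runs into a sign obstruction: any nonzero $X$ with rows in $C^\star$ lies in $\Valid_{n,\ell}\setminus\{0\}$, where the dual validity constraint demands $g(X)\le 0$, yet $\One_{C^\star}$-based constructions are positive there. Fourier nonnegativity of $\One_{C^\star}$ is the primal-side fact, not the dual one. (A smaller symptom of the same confusion: the level-$\ell$ optimum is $\lvert C^\star\rvert^\ell = q^{k\ell}$, not $\lvert C^\star\rvert$.) The paper's certificate is code-agnostic --- it depends only on $k$ --- and that is exactly what makes the verification tractable.
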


\begin{remark}
  Unlike other more structured convex programming hierarchies, such as the Sum-of-Squares SDP hierarchy or Sherali-Adams LP
  hierarchy, (exact) completeness for the higher-order Delsarte's LP is not immediate~\cite{CJJ22,CJJ23}.
\end{remark}

A better understanding of completeness from the dual may also help understand the power of natural LP hierarchies for lattice packings,
extending the celebrated Cohn and Elkies LP for sphere packing~\cite{CE03,V17,CKMRV17}. Recall that the Cohn and Elkies LP can be
seen as a close analog of Delsarte's \emph{dual} LP designed for sphere packing.

\subsection{Organization}

First, we recall the higher-order Delsarte LP hierarchies of~\cite{CJJ22,LL23b}
in~\cref{sec:hierarchies}. We provide several different formulations of the hierarchies which will be used to establish our results (other equivalent formulations that will
not be used in the present work are included in~\cref{sec:otherform} for the curious
reader). In~\cref{sec:overview}, we give the main technical intuition of the proofs. We formally prove the lifting in~\cref{sec:lifting}. The completeness from dual is
presented in~\cref{sec:completeness}. The spectral-based construction of higher-order dual
feasible solutions is given in~\cref{sec:spectral_construction}. We end with some concluding
remarks in~\cref{sec:conclusion}.

The reader should refer to~\cref{sec:notation} for notation as needed.

\section{A Brief Introduction to the Hierarchies}
\label{sec:hierarchies}

Both hierarchies of~\cite{CJJ22,LL23b} can be used to upper bound sizes of linear codes in an
arbitrary set of ``valid'' linear codes $\Valid_n\subseteq L_{\FF_q}(\FF_q^n)$. In the prototypical
cases, $\Valid_n$ is the set of all linear codes of distance at least $d$, or the set of all
$\epsilon$-balanced codes. Once $\Valid_n$ is fixed, at level $\ell\in\NN_+$ the hierarchies make
use of the set
\begin{align*}
  \Valid_{n,\ell} & \df \{X\in\FF_q^{\ell\times n} \mid \linspan(\{X_1,\ldots,X_\ell\})\in\Valid_n\}.
\end{align*}

The easiest way of stating the hierarchy of~\cite{CJJ22} at level $\ell$ is as the
\Lovasz\ $\vartheta'$ of the graph $G_{n,\ell}$ over the vertex set $\FF_q^{\ell\times n}$ in which
$X,Y\in\FF_q^{\ell\times n}$ are adjacent exactly when $X-Y\notin\Valid_{n,\ell}$. If
$C\in\Valid_n$, then the set $\{X\in\FF_q^{\ell\times n} \mid X_1,\ldots,X_\ell\in C\}$ is an
independent set in $G_{n,\ell}$ of size exactly $\lvert C\rvert^\ell$, which is upper bounded by
$\vartheta'(G_{n,\ell})$, giving us the first formulation of the hierarchy of~\eqref{eq:SDPprimal}
(which is deferred to \cref{subsec:Lovasz} as it will not be used in the present paper).

It turns out that the SDP arising in the \Lovasz\ $\vartheta'$ function can be explicitly
diagonalized, leading to a linear program. By noting that there is a natural ``global translation''
action of $\FF_q^n$ on the space $\FF_q^{\ell\times n}$ given by
\begin{align*}
  (z\cdot X)_{jk} & \df X_{jk} + z_k & (X\in\FF_q^{\ell\times n}, z\in\FF_q^n, j\in[\ell], k\in[n]),
\end{align*}
and that the program~\eqref{eq:SDPprimal} of $\vartheta'(G_{n,\ell})$ is $\FF_q^n$-symmetric, every
feasible solution can be symmetrized under this action without violating its feasibility or changing
its value. Furthermore, $\FF_q^n$-symmetric solutions are simultaneously diagonalizable and the
positive semidefinite constraint is then encoded by the Fourier transform (see~\cref{subsec:LP} for
more details) given by
\begin{gather*}
  \widehat{f}(X)
  \df
  \langle f,\chi_X\rangle
  =
  \frac{1}{q^{n\ell}}\sum_{X\in\FF_q^{\ell\times n}} f(X)\overline{\chi_Z(X)}
  \qquad (f\in\CC^{\FF_q^{\ell\times n}}, X\in\FF_q^{\ell\times n}),
  \\
  \chi_Z(X)
  \df
  \exp\left(\sum_{j\in[\ell]}\sum_{k\in[n]}\frac{2\pi i X_{jk} Z_{jk}}{q}\right)
  \qquad (X\in\FF_q^{\ell\times n}).
\end{gather*}

This yields the linear program~\eqref{eq:LPprimal} below, whose dual is~\eqref{eq:LPdual} and that
first appeared in~\cite{CJJ22}. A linear code $C\in\Valid_n$ yields a natural solution $f_C$
of~\eqref{eq:LPprimal} given by $f_C(X)\df \One[X_1,\ldots,X_\ell\in C]$, whose value is $\lvert
C\rvert^\ell$. Note that when $q$ is a power of $2$, due to $X = -X$, the symmetry constraints in
the primal are automatically enforced and we can therefore remove $\beta$ from the dual.

\begin{empheq}[box=\fbox]{equation}\label{eq:LPprimal}
  \begin{aligned}
    \text{Variables: }
    & \mathrlap{f\colon \FF_q^{\ell\times n}\to\RR}
    \\
    \max \qquad
    & \sum_{X \in \FF_q^{\ell\times n}} f(X)
    \\
    \text{s.t.} \qquad
    & f(0) = 1
    & &
    & (\text{Normalization})
    \\
    & f(X) = 0
    & & \forall X\in\FF_q^{\ell\times n}\setminus\Valid_{n,\ell}
    & (\text{Validity})
    \\
    & \widehat{f}(X) \geq 0
    & & \forall X \in \FF_q^{\ell\times n}
    & (\text{Fourier})
    \\
    & f(X) \geq 0
    & & \forall X \in \FF_q^{\ell\times n}
    & (\text{Non-negativity})
    \\
    & f(X) = f(-X)
    & & \forall X \in \FF_q^{\ell\times n}
    & (\text{Symmetry})
  \end{aligned}
\end{empheq}

\begin{empheq}[box=\fbox]{equation}\label{eq:LPdual}
  \begin{aligned}
    \text{Variables: }
    & \mathrlap{g\colon\FF_q^{\ell\times n} \to \RR, \beta\colon\FF_q^{\ell\times n}\to\RR}
    \\
    \min \qquad
    & g(0)
    \\
    \text{s.t.} \qquad
    & \widehat{g}(0) = 1
    & &
    & (\text{Normalization})
    \\
    & g(X) + \beta(X) - \beta(-X)\leq 0
    & & \forall X\in\Valid_{n,\ell}\setminus\{0\}
    & (\text{Validity})
    \\
    & \widehat{g}(X) \geq 0
    & & \forall X \in \FF_q^{\ell\times n}
    & (\text{Non-negativity})
  \end{aligned}
\end{empheq}

Next, one observes that there is a natural ``label permutation'' action of $S_n$ on
$\FF_q^{\ell\times n}$ given by
\begin{align*}
  (\sigma\cdot X)_{ij} & \df X_{i\sigma(j)}
  \qquad (X\in\FF_q^{\ell\times n}, \sigma\in S_n, i\in[\ell], j\in[n]).
\end{align*}
It is easy to see that if $\Valid_n$ is $S_n$-symmetric under the natural action of $S_n$ on
$\FF_q^n$, then so are $\Valid_{n,\ell}$ and~\eqref{eq:LPprimal} under the $S_n$-action above. This
allows us to further symmetrize the program to obtain the formulation in~\eqref{eq:KLPprimal} in
which the Fourier transform is encoded using multivariate Krawtchouk polynomials
(see~\cref{subsec:kraw}).

\medskip

Finally, we introduce the Partial Fourier Hierarchy of~\cite{LL23b}. This hierarchy follows from the
observation that the natural solutions $f_C(X)\df\One[X_1,\ldots,X_\ell\in C]$
to~\eqref{eq:LPprimal} not only have non-negative Fourier transforms, but in fact have non-negative
``partial Fourier transforms'' defined as follows.

First, we note that $\GL_\ell(\FF_q)$ also acts on $\FF_q^{\ell\times n}$ by left-multiplication,
which in turn induces a right-action of $\GL_\ell(\FF_q)$ on the set of functions $\FF_q^{\ell\times
  n}\to\CC$ given by $(f\cdot M)(X)\df f(M\cdot X)$. Then for $X,Y\in\FF_q^{\ell\times n}$,
$k\in\{0,1,\ldots,n\}$ and $M\in\GL_\ell(\FF_q)$, we let
\begin{align*}
  \chi_Y^{(k)}(X)
  & \df
  q^{(\ell-k)n}
  \cdot\left(\prod_{j=1}^k \chi_{Y_j}(X_j)\right)
  \cdot\left(\prod_{j=k+1}^n\One_{Y_j}(X_j)\right),
  &
  \chi_Y^{k,M}(X) & \df \chi_{M^{-1}\cdot Y}^{(k)}(M^{-1}\cdot X),
\end{align*}
where $\chi_y(x)\df\exp(\sum_{j\in[n]}2\pi i y_j x_j/q)$ is the usual character and we let
\begin{align*}
  \cF_k(f)(X)
  & \df
  \langle f,\chi_X^{(k)}\rangle
  =
  \frac{1}{q^{\ell n}}\cdot\sum_{Z\in\FF_q^{\ell\times n}} f(Z)\cdot\overline{\chi}_X^{(k)}(Z),
  & 
  \cF_{k,M}(f)(X)
  & \df
  \langle f,\chi_X^{k,M}\rangle,
\end{align*}
for every $f\colon\FF_q^{\ell\times n}\to\CC$. A straightforward calculation then yields
\begin{align}\label{eq:cFkMcFk}
  \cF_{k,M}(f) & = \cF_k(f\cdot M)\cdot M^{-1},
  &
  \cF_{k,M}^{-1}(f)
  & =
  q^{kn}\cdot\cF_{k,M}(f)\cdot R_k,
\end{align}
where $R_k$ is the diagonal matrix whose diagonal consists of $k$ entries $-1$ followed by $\ell-k$
entries $1$.

Noting that for every $C\in L_{\FF_q}(\FF_q^n)$ the function $f_C(X)\df\One[X_1,\ldots,X_\ell\in C]$
satisfies $\cF_{k,M}(f_C)\geq 0$ ($k\in[\ell]$, $M\in\GL_\ell(\FF_q)$), it follows that we can add
further constraints to~\eqref{eq:LPprimal} to obtain a stronger hierarchy,\footnote{In fact,
  \cite{LL23b} only includes partial Fouriers with $M=I$, but explicitly requires solutions to be
  $\GL_\ell(\FF_q)$-symmetric; here we opt for this formulation which can be shown to be equivalent
  straightforwardly.} called the partial Fourier hierarchy~\cite{LL23b}, formulated
in~\eqref{eq:pLPprimal} and whose rather technical dual~\eqref{eq:pLPdual} is deferred to
\cref{sec:lifting}. We will show in \cref{lem:symmpLPdual} that the dual of~\eqref{eq:pLPprimal} is
further equivalent to the simpler~\eqref{eq:symmpLPdual} below.

\begin{empheq}[box=\fbox]{gather}\label{eq:pLPprimal}
  \begin{aligned}
    \text{Variables: }
    & \mathrlap{f\colon\FF_q^{\ell\times n}\to\RR}
    \\
    \max \qquad
    & \sum_{X \in \FF_q^{\ell\times n}} f(X)
    \\
    \text{s.t.} \qquad
    & f(0) = 1
    & &
    & (\text{Normalization})
    \\
    & f(X) = 0
    & & \forall X\in\FF_q^{\ell\times n}\setminus\Valid_{n,\ell}
    & (\text{Validity})
    \\
    & \cF_{k,M}(f)(X) \geq 0
    & & \forall X \in \FF_q^{\ell\times n}, \forall k\in[\ell], \forall M\in\GL_\ell(\FF_q)
    & (\text{Partial Fourier})
    \\
    & f(X) \geq 0
    & & \forall X \in \FF_q^{\ell\times n}
    & (\text{Non-negativity})
    \\
    & f(X) = f(-X)
    & & \forall X \in \FF_q^{\ell\times n}
    & (\text{Symmetry})
  \end{aligned}
\end{empheq}

\begin{empheq}[box=\fbox]{gather}\label{eq:symmpLPdual}
  \begin{aligned}
    \text{Variables: }
    & \mathrlap{g_k\colon\FF_q^{\ell\times n}\to\RR \; (k\in[\ell])}
    \\
    \min \qquad
    & 1 + \sum_{k\in[\ell]} g_k(0)
    \\
    \text{s.t. } \qquad
    & 1
    + \frac{1}{\lvert\GL_\ell(\FF_q)\rvert}\cdot
    \sum_{\substack{k\in[\ell]\\M\in\GL_\ell(\FF_q)}} (g_k\cdot M)(X)
    \leq 0
    & & \forall X\in\Valid_{n,\ell}\setminus\{0\}
    & (\text{Validity})
    \\
    & \cF_k(g_k)\geq 0
    & & \forall k\in[\ell]
    & \mathllap{(\text{Partial Fourier})}
  \end{aligned}
\end{empheq}

\section{Technical Overview of the Proofs}\label{sec:overview}

The purpose of this section is to highlight the main ideas of the proofs
and provide intuition, in preparation for the full results.
For simplicity, we restrict ourselves here to $q=2$.

\subsection{Lifting Dual Solutions}

A lift transforms a level-$k$ solution of value $V$ into a level-$\ell$ solution, with objective
value $V^{\ell/k}$.  The scaling is correct, since solutions in the hierarchy's $\ell$-th level
provide an upper-bound on $\lvert C\rvert^\ell$ for $C\in\Valid_n$. In our approach we construct
functions $f^{(1)},f^{(2)},\ldots, f^{(\ell/k)}$ that satisfy increasingly more constraints, and
terminate with a feasible solution $f^{(\ell/k)}$.

Here we illustrate our method with the LP~\eqref{eq:LPdual} over $\FF_2$.  In~\cref{sec:lifting},
the lifts are developed in full for the stronger LP~\eqref{eq:symmpLPdual} over general finite
fields, $\FF_q$.

We start with a lift from level $1$. Let $h'\colon\F_2^n\to\RR$ be a feasible solution
for level 1 of the dual hierarchy~\eqref{eq:LPdual}. It will be more convenient to work with $h \df
h'-1$. Observe that
\begin{align*}
    \widehat{h} & \geq 0,
    &
    \widehat{h}(0) & = 0,
    &
    \forall x\in\Valid_{n,1}\setminus\{0\}, h(x) & \leq -1,
\end{align*}
To lift $h$ to level $\ell$ we start by defining
\begin{align*}
  f^{(1)}(X) & \df h(X_1)
  \qquad
  \forall X=(X_1,\ldots,X_\ell) \in \FF_2^{\ell\times n}.
\end{align*}
Namely, we ignore all of the rows of $X$ except for the first.

The Fourier transform of $f^{(1)}$ is non-negative, since $\widehat{h}\geq 0$. Also,
$\widehat{f}^{(1)}(0)=0$. These two properties persist throughout the process, for
$f^{(2)},f^{(3)},\ldots$ etc.

The validity constraints are only satisfied if $X_1\in\Valid_{n,1}\setminus\{0\}$: otherwise,
$f^{(1)}(X) = h(0)$, which not only is positive but in fact exponentially large. To handle the case
$X\in\Valid_{n,\ell}$ with $X_1=0$, we define
\begin{align*}
    f^{(2)}(X) 
    & \df
    f^{(1)}(X)
    + (1+h(0)) 
    \cdot h(X_2)
    \cdot \One[X_1=0]
\end{align*}
Observe that $f^{(2)}$ only differs from $f^{(1)}$ when $X_1=0$. The validity constraints now hold
if $(X_1,X_2)\in\Valid_{n,2}$, but not when $X_1=X_2=0$.

We continue by defining, for $t=3,\ldots,\ell$,
\begin{align*}
    f^{(t)}(X) 
    & \df
    f^{(t-1)}(X) 
    + (1+h(0))^{t-1}
    \cdot h(X_t) 
    \cdot \One[X_{1,\ldots,t-1}=0]
\end{align*}
It is not hard to verify that $f^{(t)}$ satisfies\footnote{In fact, there holds, moreover:
$f^{(t)}(X) \leq -1$ for every $X$ such that $X_{1,\ldots,t}\neq 0$ and
$\{X_1,\ldots,X_t\}\subset \Valid_{n,1}$.}
\begin{gather*}
  \begin{aligned}
    \widehat{f}^{(t)} & \geq 0,
    \qquad & \qquad
    \widehat{f}^{(t)}(0) & = 0,
  \end{aligned}
  \\
  \begin{aligned}
    f^{(t)}(X) & \leq -1
    & & \forall X\in\FF_q^{\ell\times n} \text{ with } X_{1,\ldots,t}\in \Valid_{n,t} \setminus\{0\}
    \\
    f^{(t)}(X) & = (1+h(0))^{t}-1
    & & \forall X\in\FF_q^{\ell\times n} \text{ with } X_{1,\ldots,t}= 0.
  \end{aligned}
\end{gather*}
Thus, the function $f\df f^{(\ell)}+1$ is a feasible solution and $f(0) = (1 + h(0))^\ell = h'(0)^\ell$. This
concludes the lift from level $1$ to level $\ell$ in the LP~\eqref{eq:LPdual}.

The lift from level $k$ to level $\ell$ proceeds similarly, except that we advance in chunks of $k$
rows per step. Suppose we have a level-$k$ feasible solution to the
hierarchy~\eqref{eq:LPdual}, $h'\colon\F_2^{k\times n}\to\RR$, and let $h\df h'-1$. We define
\begin{align*}
    f^{(0)} & \df 0
    \\
    f^{(t)} 
    & \df
    f^{(t-1)} 
    + (1+h(0))^{t-1} 
    \cdot h(X_{k\cdot (t-1)+1,\ldots,k\cdot t})
    \cdot \One[X_{1,\ldots,k\cdot(t-1)}]
    \qquad (t\in[\ell/k]).
\end{align*}
Then, similar arguments show that $f\df f^{(\ell/k)} + 1$ is feasible for level $\ell$, and its
value is $f(0) = h'(0)^{\ell/k}$.

Our strategy remains unchanged
as we move to the stronger
LP \eqref{eq:symmpLPdual}. 
However, the
symmetry operation in the validity constraints calls for a slight change 
in the argument. Rather than arguing in
terms of the number of zero rows in $X$,
we now account by $X$'s rank.
Let $h_1,\ldots,h_k$ be a feasible solution to level $k$, that is
\begin{align*}
  \cF_{i}(h_i) & \geq 0
  & &
  \forall i\in[k],
  \\
  1 + \sum_{i=1}^{k} 
  \EE_{\rn{M}\sim U(\GL_\ell(\FF_q))}[
    h_i(M\cdot X)]
  & \leq 0
  & &
  \forall X\in\Valid_{n,k} \setminus \{0\},
\end{align*}
where $U(\GL_\ell(\FF_q))$ is the uniform distribution on $\GL_\ell(\FF_q)$ and the value of
$(h_1,\ldots,h_k)$ is $V_h\df 1 + \sum_{i\in[k]} h_i(0)$.

We would like to put the information of this level-$k$ solution in the top $k$ levels of a
level-$\ell$ solution $g$, that is, we would like to put the information $h_1,\ldots,h_k$ into
$g_{\ell-k+1},\ldots,g_\ell$, respectively; we will then set $g_1\df\cdots\df g_{\ell-k}\df
0$. Furthermore, this needs to be organized so that the constraints $\cF_i(g_i)\geq 0$ follow
directly from the constraints $\cF_i(h_i)\geq 0$. To do so, the solution is slightly permuted around
when compared to the previous cases.

For each $i\in[k]$, we define a sequence of functions $f_i^{(1)},f_i^{(2)},\ldots,f_i^{(\ell/k)}$
as follows:
\begin{align*}
    f_i^{(0)} & \df 0,
    &
    f_i^{(t)} 
    & \df
    f_i^{(t-1)} 
    + V_h^{t-1}
    \cdot h_i(X_{\ell-k+1,\ldots,\ell})
    \cdot \One[X_{1,\ldots,kt}=0]
    \qquad (t\in[\ell/k]).
\end{align*}
We will then argue that for every $t\in[\ell/k]$, we have
\begin{align*}
  \cF_{\ell-k+i}(f_i^{(t)}) & \geq 0
  & &
  \forall i\in[k],
  \\
  1 + \sum_{i=1}^{k} 
  \EE_{\rn{M}\sim U(\GL_\ell(\FF_q))}[
    f_i^{(t)}(M\cdot X)]
  & \leq 0
  & &
  \forall X\in\Valid_{n,\ell} \setminus \{0\}
  \text{ with }
  \rk(X)\geq\ell-t\cdot k.
\end{align*}

Consequently, letting $g_i\df 0$ for eveery $i\in[\ell-k]$ and $g_i \df f^{(\ell/k)}_{i-\ell+k}$ for
$\ell-k+1\leq i\leq\ell$, we obtain a feasible solution whose value is $V_h^{\ell/k}$.

\subsection{Spectral-based Construction of Dual Solutions}
\label{subsec:overview:spectral}

We describe now how we use spectral techniques to construct dual solutions for the hierarchy.
We start with an abstract description of the idea, then move to its realization
in Delsarte's case, and finally to the way that we implement it in higher levels of the hierarchy.
Throughout this section, we refer only to the LP hierarchy~\eqref{eq:LPdual} over $\FF_2$,
of which level $1$ is Delsarte's LP.

We begin with the abstract construction, which is mostly inspired by~\cite{NS05}, and presented with
more detail in~\cite{LL22,Sam23a}. Although this abstract construction is relatively intuitive, fully
implementing it for the higher-order hierarchy is far from trivial as the reader will see in this paper.
One can form a feasible $f\colon\FF_2^{\ell\times n}\to\RR$ by defining
\begin{align*}
  f(X) 
  & \df
  \frac{\phi(X)\cdot\Gamma^2(X)}{\widehat{\phi\cdot\Gamma^2}(0)},
\end{align*}
where $\phi,\Gamma\colon\FF_2^{\ell\times n}\to\RR$ are not identically zero, and satisfy
\begin{align*}
  \forall X\in\Valid_{n,\ell}\setminus\{0\}, \phi(X) & \leq 0,
  &
  \widehat{\Gamma} & \geq 0,
  &
  2^{n\ell} \widehat{\phi} * \widehat{\Gamma} & \geq \widehat{\Gamma}.
\end{align*}
The sign of $f$ is governed by $\phi$, hence it fulfills the validity constraints. The Fourier
constraints are met, because, by the convolution theorem, $\widehat{f}$ is up to positive constants
equal to $\widehat{\phi}*\widehat{\Gamma}*\widehat{\Gamma}\geq 2^{-n\ell}\widehat{\Gamma}*\widehat{\Gamma}\geq
0$. An upper bound on the objective function
is derived using Cauchy-Schwarz
as follows
\begin{align*}
  f(0)
  \leq 
  2^{n\ell}\phi(0)\cdot \frac{\Gamma^2(0)}{(\widehat{\Gamma}*\widehat{\Gamma})(0)}
  =
  2^{n\ell}\phi(0)\cdot
  \frac{\lVert\widehat{\Gamma}\rVert_1^2}{\lVert\widehat{\Gamma}\rVert_2^2}
  & \stackrel{\text{C.S}}{\leq}
  \phi(0)\cdot\lvert\supp(\widehat{\Gamma})\rvert.
\end{align*}
One usually fixes $\phi$ as a low-degree polynomial, and seeks
a feasible $\Gamma$ so that $\lvert \supp(\widehat{\Gamma}) \rvert$ is minimal.

The operator ``$2^{n\ell}\widehat{\phi}*\place$'' of convolution by $\widehat{\phi}$ (up to
renormalization) can be represented by a matrix which we denote $M_\phi\in\RR^{\FF_2^{\ell\times
    n}\times\FF_2^{\ell\times n}}$, that is, we have
\begin{align*}
    M_\phi h & = 2^{n\ell}\widehat{\phi} * h,
    \quad
    (h\colon\F_2^{\ell\times n}\to\RR).
\end{align*}
Thus, finding $\Gamma$ becomes a spectral problem. 

When $\ell = 1$ and working with distance-$d$ codes, $\phi$ can be as simple as the linear function
$\phi_{\MRRW}(x) = 2(d-\lvert x\rvert)$, which is usually the case. The corresponding matrix is
$M_{\phi_{\MRRW}} = A - (n-2d)I$, where $A$ is the adjacency matrix of the Hamming graph,
$A(x,y)=\One[\lvert x-y\rvert=1]$.  The problem of finding an appropriate $\Gamma$ is well explored.
It can be done through different techniques, e.g., specific properties of Krawtchouk
Polynomials~\cite{MRRW77}, Perron-Frobenius Theorem~\cite{Bar06}, or by taking advantage of the fact
that the matrix $A$ is highly symmetric~\cite{NS05}.

As $\ell$ grows, however, the set $\Valid_{n,\ell}$ becomes increasingly complicated and cannot be
captured or closely approximated by a linear function. Constructing a satisfactory $\phi$ is a
problem in itself, which was first addressed in~\cite{LL22}. Finding a complementary $\Gamma$ was
left by the authors of~\cite{LL22} as an open problem. The methods used to find $\Gamma$ in
Delsarte's case $\ell=1$ are inapplicable here due to the high-dimensionality of the problem, and
the complicated structure of the corresponding matrix $M_\phi$. 

In the current work we solve this open problem for a variation of the suggested polynomial $\phi$,
which is valid for $\epsilon$-balanced linear codes.  The polynomial, denoted $\Phi_m$, is
defined in~\eqref{eq:Phim} and its necessary properties are established in \cref{lem:constraints}.
We find an appropriate $\Gamma$ for $\Phi_m$ which leads to a feasible solution whose value is
equivalent to MRRW, up to lower order terms.

Our strategy is as follows.  First, we show (\cref{lem:Mm}) that the matrix $M_m$, which corresponds
to the operator ``$2^{n\ell}\widehat{\Phi}_m*\place$'', is a sum of terms of the form
\begin{align*}
  \left(\prod_{u\in U}\sum_{\substack{v\in\FF_2^\ell\\\langle u,v\rangle=1}} A_v^m\right)
  \cdot
  \left(
  \frac{1}{k}\cdot\sum_{\substack{v\in\FF_2^\ell\\\langle i,v\rangle=1}} A_v^m
  - \frac{2^{\ell-1}(\epsilon n)^m}{2^\ell-k}
  \right)
\end{align*}
with non-negative coefficients. Here, 
\begin{itemize}
\item $A_v$, for every $v\in\FF_2^\ell\setminus\{0\}$, is the adjacency matrix of a graph over the
  vertex set $\FF_2^{\ell\times n}$, where $X,Y$ are adjacent if $X$ is obtained from $Y$ by adding
  $v$ to one of its columns.
\item $m\in \N$ is even,
\item $1 \leq k \leq 2^{\ell}-1$,
\item $i\in\FF_2^\ell\setminus\{0\}$,
\item $U\subseteq\FF_2^\ell\setminus\{0\}$.
\end{itemize}

Noting that for every $i\in\FF_2^\ell\setminus\{0\}$, there exists at least one $v\in\FF_2^\ell$
with $\langle i,v\rangle=1$ and $\lvert v\rvert=1$, a sufficient condition for
$M_m\cdot\widehat{\Gamma}\geq\widehat{\Gamma}$ is that
\begin{align}\label{eq:overview_gamma_sufficient}
  A_v^m \widehat{\Gamma} & \geq (2^{2\ell-1}\epsilon^m n^m+1) \widehat{\Gamma}
\end{align}
for every $v\in\FF_2^\ell$ with $\lvert v\rvert=1$ .

Solving for each $A_v$ individually is analogous to the $\ell=1$ case. It is less clear, however,
how the above methods can be employed to solve jointly for all $A_v$. To this end we use the
combinatorial argument of~\cite{LL23a}, as follows. Let $F\subseteq\FF_2^{\ell\times n}$ and let
$\widehat{\Gamma}(X) = \One[X\in F]$. Consider the inequalities
in~\eqref{eq:overview_gamma_sufficient}: if $X\in\FF_2^{\ell\times n}\setminus F$, the right-hand
side is zero, while the left-hand side is non-negative.  Otherwise, $X\in F$ and the left-hand side
is the number of walks on the graph of $A_v$, of length $m$, that start at $X$ and end in $F$.

It remains to choose a set $F$ with minimal size and at least $2^{2\ell-1}\epsilon^m n^m+1$ many
returning walks. The symmetry of the problem suggests to seek $F$ among the \emph{configuration}
sets, i.e., the orbits of $\FF_2^{\ell\times n}$ with respect to the $S_n$-action.
In~\cref{lem:Ahpsi,lem:Apower}, we count
the returning walks for configurations. In \cref{sec:find_good_configs}, we choose configurations
that lead to the desired result.

\subsection{Completeness via Subspace Symmetric Dual LPs}

We now provide an overview of some ingredients and ideas in the completeness
proof from~\cref{sec:completeness}. As mentioned above, this new proof will take
place in the dual formulation of these hierarchies, as opposed to the proof of~\cite{CJJ23},
which takes place entirely in their primal formulation. Consequently, this new proof
will be useful in shedding new light on the structure of the dual.

Recall that our goal is to prove that the hierarchy~\eqref{eq:LPdual} is {\it exactly complete} at
level $n$: its optimum is the maximum $\lvert C\rvert^{\ell}$ for $C\in\Valid_n$, for every
$\ell\geq n$. Note that this will imply the same for the stronger partial Fourier hierarchy
of~\cite{LL23b} (see~\eqref{eq:pLPprimal}, \eqref{eq:symmpLPdual} and~\eqref{eq:pLPdual}).  Our
starting point will be the subspace symmetric formulation of these hierarchies from~\cite{CJJ23}. We
recall this formulation later, in~\eqref{eq:Mprimal}, and provide its dual in~\eqref{eq:Mdual}, but
we will not need their precise details in this high-level overview.

The proof proceeds as follows. If the hierarchy is indeed complete at level $\ell$,
then there exists a dual solution whose value is $q^{k\ell}$, where
$k\df\max\{\dim_{\FF_q}(C) \mid C\in \Valid_n\}$. 
However, constructing such a solution directly seems extremely hard.
Instead, we consider a weaker hierarchy, by replacing
the set $\Valid_{n}$, which depends on the code's distance,
with the set $\Valid_{n}^{\dim\leq k}$, which includes all linear
codes of dimension at most $k$. We observe that
\begin{align*}
    \Valid_{n} &\subseteq \Valid_{n}^{\dim\leq k},
    &
    \max \{\lvert C\rvert \mid C\in \Valid_{n}\} &= \max \{\lvert C\rvert \mid C \in \Valid_{n}^{\dim\leq k}\}
\end{align*}
We then proceed to analyze this weaker hierarchy since it suffices to prove
its completeness to deduce that the original hierarchy is also complete.
A key observation is that to obtain the desired tight objective value $q^{k\ell}$
several LP variables are forced to be zero. This will simplify the structure
of the dual, leading to a recurrence relating the value of the remaining variables.

\section{Lifting Dual Solutions}\label{sec:lifting}

In this section we show that dual solutions lift. That is, from a solution $h$ at a level $k$ of
value $V_h$, we can construct a natural solution at any level $\ell$ divisible by $k$ with value
$V_h^{\ell/k}$. Let us point out that in terms of values, it was already known
from~\cite[Corollary~6.6]{CJJ22} that the value of the hierarchy~\eqref{eq:LPprimal} at level $\ell$
was at most the $\ell/k$th power of its value at level $k$ (provided $k$ divides $\ell$); the main
contribution of this section is an explicit lift of dual solutions and the analogous result for the
partial Fourier hierarchy~\eqref{eq:pLPprimal}, which does not immediately follow from the results
of~\cite{CJJ22}.

\subsection{Further Symmetrization of the Dual}

Our first order of business is to use the $\GL_\ell(\FF_q)$-symmetry to simplify the dual
program. We start by recalling that the standard dual of the partial Fourier hierarchy
of~\eqref{eq:pLPprimal} is~\eqref{eq:pLPdual} below.

\begin{empheq}[box=\fbox]{equation}\label{eq:pLPdual}
  \begin{aligned}
    \text{Variables: }
    & \mathrlap{%
      h_{k,M}\colon\FF_q^{\ell\times n} \to \RR \;(k\in[\ell],M\in\GL_\ell(\FF_q),
      \beta\colon\FF_q^{\ell\times n}\to\RR%
    }
    \\
    \min \quad
    & 1 + \sum_{\mathclap{\substack{k\in[\ell]\\M\in\GL_\ell(\FF_q)}}} \cF_{k,M}(h_{k,M})(0)
    \\
    \text{s.t.} \quad
    & 1 + \sum_{\mathclap{\substack{k\in[\ell]\\M\in\GL_\ell(\FF_q)}}} \cF_{k,M}(h_{k,M})(X)
    + \beta(X) - \beta(-X)\leq 0
    & & \forall X\in\Valid_{n,\ell}\setminus\{0\}
    & (\text{Validity})
    \\
    & h_{k,M}(X) \geq 0
    & & \qquad\quad
    \mathllap{\forall X \in \FF_q^{\ell\times n}, \forall k\in[\ell], \forall M\in\GL_\ell(\FF_q)}
    & \mathllap{(\text{Non-negativity})}
  \end{aligned}
\end{empheq}

\begin{remark}\label{rmk:LPvspLP}
  It will also be useful to think of hierarchy \eqref{eq:LPdual} as a special case of~\eqref{eq:pLPdual} above. For
  this, note that every solution of~\eqref{eq:LPdual} yields a solution of~\eqref{eq:pLPdual} with
  the same value by setting $h_{\ell,I}\df 2^{n\ell}(\widehat{g} - \One_0)$ and setting all
  other $h_{k,M}$ to zero. Conversely, if $((h_{k,M})_{k,M},\beta)$ is a solution
  of~\eqref{eq:pLPdual} such that $h_{k,M}=0$ whenever $(k,M)\neq(\ell,I)$, then we can obtain a
  solution of~\eqref{eq:LPdual} of better or equal value by taking $g\df (1 +
  \widehat{h}_{\ell,I})/(1+2^{n\ell}h_{\ell,I}(0))$. Thus, hierarchy \eqref{eq:LPdual} is equivalent
  to~\eqref{eq:pLPdual} with the extra constraints that $h_{k,M}=0$ whenever $(k,M)\neq(\ell,I)$.
\end{remark}

We will now symmetrize~\eqref{eq:pLPdual} and pass to the Fourier basis, proving that it is
equivalent to~\eqref{eq:symmpLPdual}.

\begin{lemma}\label{lem:symmpLPdual}
  If $((h_{k,M})_{k,M},\beta)$ is a solution of~\eqref{eq:pLPdual}, then letting
  \begin{align*}
    g_k
    & \df
    \sum_{M\in\GL_\ell(\FF_q)}\cF_{k,M}(h_{k,M})\cdot M
    \qquad (k\in[\ell])
  \end{align*}
  yields a solution of~\eqref{eq:symmpLPdual} with the same value.

  Conversely, if $(g_k)_k$ is a solution of~\eqref{eq:symmpLPdual}, then letting
  \begin{align*}
    h_{k,M} & \df \frac{q^{kn}}{\lvert\GL_\ell(\FF_q)\rvert}\cdot\cF_{k,M}(g_k\cdot M)
    \qquad (k\in[\ell], M\in\GL_\ell(\FF_q)),
    \\
    \beta & \df 0,
  \end{align*}
  yields a solution of~\eqref{eq:pLPdual} with the same value.
\end{lemma}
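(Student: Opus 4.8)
The plan is to prove \cref{lem:symmpLPdual} by a direct, mechanical verification that the stated substitutions preserve feasibility and value in both directions, leaning on the identities~\eqref{eq:cFkMcFk} and on the fact that summing/averaging over $\GL_\ell(\FF_q)$ is what converts the ``one constraint per $M$'' structure of~\eqref{eq:pLPdual} into the ``averaged over $M$'' structure of~\eqref{eq:symmpLPdual}. First I would treat the forward direction: given $((h_{k,M})_{k,M},\beta)$ feasible for~\eqref{eq:pLPdual}, set $g_k\df\sum_{M\in\GL_\ell(\FF_q)}\cF_{k,M}(h_{k,M})\cdot M$ and check three things. (a) \emph{Partial Fourier:} I need $\cF_k(g_k)\geq 0$. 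Expanding $\cF_k$ of the sum and using the first identity in~\eqref{eq:cFkMcFk}, $\cF_{k,M}(f)=\cF_k(f\cdot M)\cdot M^{-1}$, one rearranges $\cF_k(\cF_{k,M}(h_{k,M})\cdot M)$; the point is that $\cF_k$ acting on a function precomposed with $M$, together with the right $M$-translate, collapses (because $\cF_k$ is, up to the $R_k$ sign flips, an involution as recorded in the second identity of~\eqref{eq:cFkMcFk}) to something proportional to $h_{k,M}$ itself, which is $\geq 0$ by the non-negativity constraint of~\eqref{eq:pLPdual}. (b) \emph{Validity:} plug $g_k$ into the validity constraint of~\eqref{eq:symmpLPdual}; the term $\frac{1}{|\GL_\ell(\FF_q)|}\sum_{k,M}(g_k\cdot M)(X)$ becomes a double sum over $M$ and an auxiliary group element, which I reindex (substituting the product of the two group elements) so that it matches $\sum_{k,M}\cF_{k,M}(h_{k,M})(X)$ exactly; here I also use that the validity constraint in~\eqref{eq:pLPdual} holds at \emph{every} point of $\Valid_{n,\ell}\setminus\{0\}$ and that this set is $\GL_\ell(\FF_q)$-invariant, so averaging the constraint over the group is legitimate and kills the $\beta(X)-\beta(-X)$ part (its average telescopes, or one simply notes the validity constraint holds pointwise so its average is $\leq 0$). (c) \emph{Value:} evaluate at $X=0$; since $\cF_{k,M}(h_{k,M})\cdot M$ evaluated at $0$ is just $\cF_{k,M}(h_{k,M})(0)$ (the action of $M$ fixes $0$), the value $1+\sum_k g_k(0)$ of~\eqref{eq:symmpLPdual} equals $1+\sum_{k,M}\cF_{k,M}(h_{k,M})(0)$, the value of~\eqref{eq:pLPdual}.

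For the converse direction, given $(g_k)_k$ feasible for~\eqref{eq:symmpLPdual}, I set $h_{k,M}\df\frac{q^{kn}}{|\GL_\ell(\FF_q)|}\cF_{k,M}(g_k\cdot M)$ and $\beta\df 0$, and verify the three constraints of~\eqref{eq:pLPdual}. \emph{Non-negativity} of $h_{k,M}$: by the first identity in~\eqref{eq:cFkMcFk}, $\cF_{k,M}(g_k\cdot M)=\cF_k(g_k\cdot M\cdot M)\cdot M^{-1}=\cF_k((g_k\cdot M^2))\cdot M^{-1}$ — wait, more carefully, $\cF_{k,M}(f)=\cF_k(f\cdot M)\cdot M^{-1}$ so $\cF_{k,M}(g_k\cdot M)=\cF_k(g_k\cdot M\cdot M)\cdot M^{-1}$; I should instead use the second identity, $\cF_{k,M}^{-1}(f)=q^{kn}\cF_{k,M}(f)\cdot R_k$, which tells me $q^{kn}\cF_{k,M}(\cdot)$ is essentially the inverse transform, so $h_{k,M}$ is (up to the benign diagonal sign matrix $R_k$ and the group-order normalization) the inverse partial Fourier transform applied to $g_k\cdot M$; the cleanest route is to observe that $\cF_k(g_k)\geq 0$ is exactly the hypothesis, and that composing with the genuinely-positive operations (precomposition by $M$, postcomposition by $R_k$ which just permutes the rows' signs but $h_{k,M}\geq0$ is about the \emph{function} values, so one must be careful) — I expect the correct statement to be that $h_{k,M}(X)$ unwinds to a non-negative multiple of $\cF_k(g_k)$ evaluated at a point determined by $X$ and $M$, hence $\geq 0$. \emph{Validity}: with $\beta=0$ I need $1+\sum_{k,M}\cF_{k,M}(h_{k,M})(X)\leq 0$; substituting the definition of $h_{k,M}$ and using identity~\eqref{eq:cFkMcFk} twice (once to pull $\cF_{k,M}$ of $h_{k,M}$ back to $\cF_k$, once to recognize $q^{kn}\cF_{k,M}\cF_{k,M}$ as identity-up-to-$R_k$), the inner double application collapses and $\sum_{k,M}\cF_{k,M}(h_{k,M})(X)$ reduces precisely to $\frac{1}{|\GL_\ell(\FF_q)|}\sum_{k,M}(g_k\cdot M)(X)$, which is $\leq -1$ by the validity constraint of~\eqref{eq:symmpLPdual}. \emph{Value}: again evaluate at $0$ and use that $M$ fixes $0$ and that the two transforms compose to identity there, yielding equality of the objectives.

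The main obstacle I anticipate is bookkeeping the composition of the $\GL_\ell(\FF_q)$-right-action with the $\cF_{k,M}$ operators correctly — in particular getting the normalization constants ($q^{kn}$, $1/|\GL_\ell(\FF_q)|$) and the $R_k$ sign-flip matrix to land in exactly the right places so that ``$\cF_{k,M}$ composed with $q^{kn}\cF_{k,M}$'' really is the identity on the relevant space and the averaging over $M$ cleanly reindexes. Concretely, the subtle point is that $\cF_{k,M}$ maps functions to functions while also carrying a right $\GL_\ell$-action attached (the ``$\cdot M^{-1}$'' in~\eqref{eq:cFkMcFk}), so when I write $g_k\df\sum_M\cF_{k,M}(h_{k,M})\cdot M$ and then apply $\cF_k$ to it, I must track how $\cF_k$ interacts with the trailing $\cdot M$; the identity $\cF_{k,M}(f)=\cF_k(f\cdot M)\cdot M^{-1}$ rearranges to $\cF_k(f) = \cF_{k,M}(f\cdot M^{-1})\cdot M$, and iterating these is where errors creep in. I would handle this by first proving a clean auxiliary identity — essentially that $\sum_{M}\cF_{k,M}(\cF_{k,M}(f)\cdot M)\cdot M^{-1}$ telescopes via~\eqref{eq:cFkMcFk} to $|\GL_\ell(\FF_q)|\cdot q^{-kn}\cdot (f\cdot R_k)$ up to a harmless symmetry — and then both directions of the lemma become a one-line substitution. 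The feasibility-preservation is then routine; the only other thing to double-check is that the validity constraint of~\eqref{eq:pLPdual} is invariant enough (over the $\GL_\ell$-orbit and under $X\mapsto-X$, the latter being why $\beta$ can be dropped, especially transparent when $q$ is a power of $2$) for the averaging argument to go through.
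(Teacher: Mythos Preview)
Your approach is correct and essentially identical to the paper's: both directions proceed by direct substitution, using the identities in~\eqref{eq:cFkMcFk} to collapse $\cF_k(g_k)$ (resp.\ $\cF_{k,M}(h_{k,M})$) back to the original data, and using $\GL_\ell(\FF_q)$-invariance of $\Valid_{n,\ell}$ together with the reindexing/averaging over $M$ to pass between the pointwise validity constraint of~\eqref{eq:pLPdual} and the averaged one of~\eqref{eq:symmpLPdual} (the $\beta$ terms indeed cancel upon averaging since $-I\in\GL_\ell(\FF_q)$). Your hesitation in the non-negativity step of the converse is just bookkeeping anxiety, and the outcome is exactly what you predict: unwinding via~\eqref{eq:cFkMcFk} gives $h_{k,M}$ as a positive multiple of $\cF_k(g_k)$ precomposed with an element of $\GL_\ell(\FF_q)$, hence $\geq 0$.
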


\begin{proof}
  For the first direction, note that for $X\in\Valid_{n,\ell}$ (zero or not), we have
  \begin{align*}
    & \!\!\!\!\!\!
    1
    + \frac{1}{\lvert\GL_\ell(\FF_q)\rvert}
    \cdot\sum_{\substack{k\in[\ell]\\M\in\GL_\ell(\FF_q)}} (g_k\cdot M)(X)
    \\
    & =
    1
    + \frac{1}{\lvert\GL_\ell(\FF_q)\rvert}
    \cdot\sum_{k\in[\ell]}\sum_{M,N\in\GL_\ell(\FF_q)}(\cF_{k,N}(h_{k,N})\cdot (N\cdot M))(X)
    \\
    & =
    \frac{1}{\lvert\GL_\ell(\FF_q)\rvert}\cdot\sum_{M\in\GL_\ell(\FF_q)}
    \left(
    1
    + \sum_{k\in[\ell]}\sum_{N\in\GL_\ell(\FF_q)}
    \cF_{k,N}(h_{k,N})(M\cdot X)
    +
    \beta(M\cdot X)  - \beta(-M\cdot X)
    \right),
  \end{align*}
  where the last equality follows by a change of variables and since the $\beta$ contributions
  cancel out when we sum over $M$.

  Since $\Valid_{n,\ell}$ is $\GL_\ell(\FF_q)$-invariant, if $X\neq 0$, then the above is simply an
  average of the left-hand side of the validity constraints in~\eqref{eq:pLPdual}, so it must be
  non-positive.

  On the other hand, if $X=0$, then the first expression in the above is the objective value
  of~\eqref{eq:symmpLPdual} and the last expression is the objective of~\eqref{eq:pLPdual} (as both the
  average over $M$ goes away and the $\beta$ contributions cancel out since $M\cdot 0 = 0$).

  Finally, note that by~\eqref{eq:cFkMcFk}, we have
  \begin{align*}
    \cF_k(g_k)
    & =
    \sum_{M\in\GL_\ell(\FF_q)}\cF_k(\cF_{k,M}(h_{k,M})\cdot M)
    \\
    & =
    \sum_{M\in\GL_\ell(\FF_q)}\cF_{k,M}(\cF_{k,M}(h_{k,M}))\cdot M
    \\
    & =
    q^{-kn}\cdot\sum_{M\in\GL_\ell(\FF_q)} h_{k,M}\cdot R_k^{-1}\cdot M.
  \end{align*}
  Since $h_{k,M}\geq 0$ for every $M\in\GL_\ell(\FF_q)$, we conclude that $\cF_k(g_k)\geq 0$.

  \medskip

  We now prove the converse. Note that for $X\in\Valid_{n,\ell}$ (zero or not), we have
  \begin{align*}
    & \!\!\!\!\!\!
    1 + \sum_{\substack{k\in[\ell]\\M\in\GL_\ell(\FF_q)}} \cF_{k,M}(h_{k,M})(X) + \beta(X) - \beta(-X)
    \\
    & =
    1
    + \sum_{\substack{k\in[\ell]\\M\in\GL_\ell(\FF_q)}} \frac{q^{kn}}{\lvert\GL_\ell(\FF_q)\rvert}\cdot
    \cF_{k,M}(\cF_{k,M}(g_k\cdot M))(X)
    \\
    & =
    1
    + \frac{1}{\lvert\GL_\ell(\FF_q)\rvert}\cdot
    \sum_{\substack{k\in[\ell]\\M\in\GL_\ell(\FF_q)}} (g_k\cdot M\cdot R_k^{-1})(X),
  \end{align*}
  where the second equality follows from~\eqref{eq:cFkMcFk}.

  Since $\Valid_{n,\ell}$ is $\GL_\ell(\FF_q)$-invariant, if $X\neq 0$, then the above is non-positive as it
  is the left-hand side of a validity constraint in~\eqref{eq:symmpLPdual}.

  On the other hand, if $X=0$, then the first expression in the above is the objective value
  of~\eqref{eq:pLPdual} (as the $\beta$ contributions cancel out) and the last expression is the
  objective value of~\eqref{eq:symmpLPdual} (as the average over $M$ goes away in the latter since
  $M\cdot R_k^{-1}\cdot 0 = 0$).

  Finally, note that~\eqref{eq:cFkMcFk} implies
  \begin{align*}
    h_{k,M}
    & =
    \frac{q^{kn}}{\lvert\GL_\ell(\FF_q)\rvert}\cdot\cF_{k,M}(g_k\cdot M^{-1})
    =
    \frac{1}{\lvert\GL_\ell(\FF_q)\rvert}\cdot\cF_k(g_k)\cdot M^{-1}.
  \end{align*}
  Since $\cF_k(g_k)\geq 0$, we conclude that $h_{k,M}\geq 0$.
\end{proof}

\begin{remark}\label{rmk:fullFourier}
  Recalling from \cref{rmk:LPvspLP} that hierarchy \eqref{eq:LPdual} is equivalent to~\eqref{eq:pLPdual}
  with the extra constraints that $h_{k,M}=0$ whenever $(k,M)\neq(\ell,I)$, an analogue of
  \cref{lem:symmpLPdual} shows that the dual above is equivalent to~\eqref{eq:symmpLPdual} with the
  extra constraints that $g_k=0$ for every $k\in[\ell-1]$.
\end{remark}

\subsection{Basic Properties}

We now prove some basic combinatorial properties about matrices over $\FF_q$.

\begin{lemma}\label{lem:GLell}
  For a prime power $q$ and $\ell\in\NN$, the group
  \begin{align*}
    \GL_\ell(\FF_q)
    & \df
    \{M\in\FF_q^{\ell\times\ell} \mid \det(M)\neq 0\}
  \end{align*}
  has size exactly
  \begin{align*}
    (q-1)^\ell\cdot q^{\binom{\ell}{2}}\cdot\ell!_q
  \end{align*}
\end{lemma}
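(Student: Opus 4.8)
The plan is to count $\GL_\ell(\FF_q)$ by identifying an invertible matrix with an ordered basis of $\FF_q^\ell$ (namely its sequence of columns), to count such ordered bases greedily, and then to rearrange the resulting product into the claimed closed form. Throughout, $\ell!_q$ denotes the $q$-factorial $\ell!_q = \prod_{j=1}^{\ell}[j]_q$ with $[j]_q = 1 + q + \cdots + q^{j-1} = (q^j-1)/(q-1)$.

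First I would observe that $M\in\FF_q^{\ell\times\ell}$ lies in $\GL_\ell(\FF_q)$ if and only if its columns $M^{(1)},\ldots,M^{(\ell)}$ form a basis of $\FF_q^\ell$, equivalently, if and only if $M^{(i)}\notin\linspan(\{M^{(1)},\ldots,M^{(i-1)}\})$ for every $i\in[\ell]$. This gives a sequential count: having chosen linearly independent $M^{(1)},\ldots,M^{(i-1)}$, their span is an $(i-1)$-dimensional subspace with exactly $q^{i-1}$ elements, so there are precisely $q^\ell - q^{i-1}$ admissible choices for $M^{(i)}$, and crucially this number depends only on $i$, not on which independent tuple was chosen so far. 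Hence
\begin{align*}
  \lvert\GL_\ell(\FF_q)\rvert = \prod_{i=1}^{\ell}(q^\ell - q^{i-1}) = \prod_{i=0}^{\ell-1}(q^\ell - q^i).
\end{align*}

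Then I would massage this product into the stated form. Factoring $q^i$ out of the $i$-th factor and reindexing $j=\ell-i$,
\begin{align*}
  \prod_{i=0}^{\ell-1}(q^\ell - q^i)
  = \prod_{i=0}^{\ell-1} q^i\,(q^{\ell-i}-1)
  = q^{\sum_{i=0}^{\ell-1} i}\cdot\prod_{j=1}^{\ell}(q^j - 1)
  = q^{\binom{\ell}{2}}\cdot\prod_{j=1}^{\ell}(q^j - 1),
\end{align*}
using $\sum_{i=0}^{\ell-1} i = \binom{\ell}{2}$. Finally, writing $q^j - 1 = (q-1)\,[j]_q$ for each $j$ gives $\prod_{j=1}^{\ell}(q^j-1) = (q-1)^\ell\cdot\prod_{j=1}^{\ell}[j]_q = (q-1)^\ell\cdot\ell!_q$, whence $\lvert\GL_\ell(\FF_q)\rvert = (q-1)^\ell\cdot q^{\binom{\ell}{2}}\cdot\ell!_q$ as claimed.

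I do not expect a genuine obstacle here, since this is an entirely standard fact; the only points requiring care are the independence of the column counts in the greedy step (the number of forbidden vectors at step $i$ depends only on the dimension $i-1$ of the span built so far) and matching the convention for the $q$-factorial $\ell!_q$ so that the algebraic rearrangement lands exactly on the stated expression.
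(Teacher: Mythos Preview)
Your proof is correct and follows essentially the same approach as the paper: count invertible matrices column by column to obtain $\prod_{j=0}^{\ell-1}(q^\ell-q^j)$, then factor each term as $q^j(q-1)[\ell-j]_q$ to arrive at $(q-1)^\ell\cdot q^{\binom{\ell}{2}}\cdot\ell!_q$. The paper's version is simply terser.
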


\begin{proof}
  By iteratively counting how many columns preserve linear independence, we get
  \begin{align*}
    \lvert\GL_\ell(\FF_q)\rvert
    & =
    \prod_{j=0}^{\ell-1} (q^\ell-q^j)
    =
    (q-1)^\ell\cdot q^{\binom{\ell}{2}}\cdot\prod_{j=0}^{\ell-1} [\ell-j]_q
    =
    (q-1)^\ell\cdot q^{\binom{\ell}{2}}\cdot\ell!_q.
    \qedhere
  \end{align*}
\end{proof}

\begin{definition}\label{def:Mqst}
  Let $q$ be a prime power, let $s,t,\ell,n\in\NN$ with $s\leq t\leq\ell\leq n$ and let
  $X\in\FF_q^{\ell\times n}$. We define
  \begin{align*}
    M_q^{s,t}(X)
    & \df
    \{M\in\GL_\ell(\FF_q) \mid
    (M\cdot X)_{1,\ldots,s}=0
    \land
    (M\cdot X)_{t+1,\ldots,\ell}=0
    \}.
  \end{align*}
  When $t=\ell$, we will use the shorthand notation $M_q^s(X)\df M_q^{s,\ell}(X)$.

  Furthermore, we define the \emph{marginal action} of $\GL_s(\FF_q)$ on $\GL_\ell(\FF_q)$ by
  \begin{align*}
    N\cdot M & \df
    \begin{pmatrix}
      N & 0\\
      0 & I
    \end{pmatrix}
    \cdot M
    \qquad (N\in\GL_s(\FF_q), M\in\GL_\ell(\FF_q))
  \end{align*}
  (on the right-hand side, the identity matrix is of order $\ell-s$ and the product is the usual
  matrix product).
\end{definition}

\begin{lemma}\label{lem:Mqst}
  Let $q$ be a prime power, let $s,t,\ell,n\in\NN$ with $s\leq t\leq\ell\leq n$ and let
  $X\in\FF_q^{\ell\times n}$. Then the following hold.

  \begin{enumerate}
  \item\label{lem:Mqst:inv} The sets $M_q^{0,t}(X)$ and $M_q^{s,t}(X)$ are $\GL_s(\FF_q)$-invariant.
  \item\label{lem:Mqst:dist} If $\rn{M}$ is picked uniformly at random in $M_q^{0,t}(X)$, then the
    distribution of $(\rn{M}\cdot X)_{1,\ldots,s}$ is $\GL_s(\FF_q)$-invariant.
  \item\label{lem:Mqst:size} For $z = s+\ell-t$ and $r\df\rk(X)$, we have
    \begin{align*}
      \lvert M_q^{s,t}(X)\rvert
      & =
      \lvert M_q^z(X)\rvert
      =
      (q-1)^\ell
      \cdot q^{\binom{\ell}{2}}
      \cdot (\ell-z)_{r,q}
      \cdot (\ell-r)!_q.
    \end{align*}
  \end{enumerate}
\end{lemma}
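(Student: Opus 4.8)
The plan is to prove the three parts in order, with part~\ref{lem:Mqst:size} being the substantive one and the first two being quick structural observations that also feed into the counting.

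For part~\ref{lem:Mqst:inv}, I would simply unwind the definitions. Given $N\in\GL_s(\FF_q)$ and $M\in M_q^{s,t}(X)$, I need to check $N\cdot M\in M_q^{s,t}(X)$, where $N\cdot M=\bigl(\begin{smallmatrix}N&0\\0&I\end{smallmatrix}\bigr)M$. The key point is that left-multiplication by the block-diagonal matrix $\bigl(\begin{smallmatrix}N&0\\0&I\end{smallmatrix}\bigr)$ only mixes the first $s$ rows among themselves and leaves rows $s+1,\ldots,\ell$ untouched. Hence if $(M\cdot X)_{1,\ldots,s}=0$ then $(N\cdot M\cdot X)_{1,\ldots,s}=N\cdot 0=0$, and if $(M\cdot X)_{t+1,\ldots,\ell}=0$ then these rows are unchanged, so they remain $0$. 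The case $M_q^{0,t}(X)$ is the special case $s=0$ (where the first condition is vacuous). For part~\ref{lem:Mqst:dist}, I would observe that the marginal action of $\GL_s(\FF_q)$ on $M_q^{0,t}(X)$ is \emph{free} (since $N\cdot M=M$ forces $N=I$ by looking at the top-left block) and, by part~\ref{lem:Mqst:inv}, well-defined; therefore picking $\rn{M}$ uniformly and then acting by a fixed $N$ permutes $M_q^{0,t}(X)$, so $(\rn{M}\cdot X)_{1,\ldots,s}$ and $(N\cdot\rn{M}\cdot X)_{1,\ldots,s}=N\cdot(\rn{M}\cdot X)_{1,\ldots,s}$ are identically distributed.

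For part~\ref{lem:Mqst:size}, the first equality $\lvert M_q^{s,t}(X)\rvert=\lvert M_q^z(X)\rvert$ with $z=s+\ell-t$ should follow from a bijection that permutes the roles of the zero-rows: the constraint in $M_q^{s,t}(X)$ asks for $s$ zero-rows at the top and $\ell-t$ zero-rows at the bottom, a total of $z=s+(\ell-t)$ prescribed zero-rows, while $M_q^z(X)=M_q^{z,\ell}(X)$ asks for $z$ zero-rows at the top and none at the bottom. Conjugating $M$ by a fixed permutation matrix $P$ that sends the index set $\{1,\ldots,s\}\cup\{t+1,\ldots,\ell\}$ to $\{1,\ldots,z\}$ — i.e. $M\mapsto PM$ — gives the bijection, since $\det$ is preserved and the zero-row pattern is relabeled accordingly. (One checks $PM\cdot X$ has its first $z$ rows equal to the old prescribed rows of $M\cdot X$.) So it suffices to count $\lvert M_q^z(X)\rvert$, the number of $M\in\GL_\ell(\FF_q)$ whose bottom $\ell-z$ rows annihilate... more precisely, such that $(M\cdot X)_{z+1,\ldots,\ell}=0$, i.e. the last $\ell-z$ rows of $M$ lie in the left kernel of $X$ (as a linear functional on the column space), which is a subspace of $\FF_q^\ell$ of dimension $\ell-r$ where $r=\rk(X)$.

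The counting of $\lvert M_q^z(X)\rvert$ is then the main (though still routine) step: I would build $M$ row by row from the bottom. The bottom $\ell-z$ rows must be chosen inside the left-kernel $K\df\{w\in\FF_q^\ell : w^\top X=0\}$, which has dimension $\ell-r$, and they must be linearly independent; the number of such ordered tuples is $\prod_{j=0}^{\ell-z-1}(q^{\ell-r}-q^{j})$, which vanishes unless $\ell-z\le\ell-r$ (consistent with the $q$-falling-factorial notation $(\ell-z)_{r,q}$ appearing in the statement, which I would take to encode exactly this). Then the remaining $z$ rows (rows $1,\ldots,z$ of $M$, which correspond to the prescribed-zero rows, but wait — here they are \emph{unconstrained} beyond extending to a basis) must be chosen to complete the already-chosen $\ell-z$ vectors to a basis of $\FF_q^\ell$: the number of ways is $\prod_{j=\ell-z}^{\ell-1}(q^\ell-q^{j})$. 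Multiplying and regrouping powers of $q$ and $(q-1)$ — comparing with the formula $\lvert\GL_\ell(\FF_q)\rvert=(q-1)^\ell q^{\binom{\ell}{2}}\ell!_q$ from \cref{lem:GLell} — should collapse to $(q-1)^\ell q^{\binom{\ell}{2}}(\ell-z)_{r,q}(\ell-r)!_q$. The main obstacle is purely bookkeeping: correctly matching the two products of the form $\prod(q^a-q^b)$ against the target expression written in $q$-factorial/$q$-Pochhammer notation, and double-checking the degenerate regime (e.g. $z\ge r$ versus $z<r$, and $z=\ell$ recovering \cref{lem:GLell} with $r$ arbitrary) so the stated formula holds in all cases $s\le t\le\ell\le n$.
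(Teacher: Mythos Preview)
Your treatment of parts~(i) and~(ii), and of the row-permutation bijection giving $\lvert M_q^{s,t}(X)\rvert=\lvert M_q^z(X)\rvert$, matches the paper's proof.

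There is, however, a genuine slip in the counting of $\lvert M_q^z(X)\rvert$. You correctly state that $M_q^z(X)=M_q^{z,\ell}(X)$ asks for ``$z$ zero-rows at the top'', but in the next sentence you switch to the condition ``$(M\cdot X)_{z+1,\ldots,\ell}=0$'' and then place $\ell-z$ rows (rather than $z$) inside the left kernel $K$. This is backwards: the defining condition is $(M\cdot X)_{1,\ldots,z}=0$, so the \emph{first $z$} rows of $M$ must lie in $K$. Your own parenthetical (``rows $1,\ldots,z$ of $M$, which correspond to the prescribed-zero rows, but wait --- here they are \emph{unconstrained}'') is precisely where the swap happens. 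With only $\ell-z$ rows forced into $K$, the product $\prod_{j=0}^{\ell-z-1}(q^{\ell-r}-q^j)\cdot\prod_{j=\ell-z}^{\ell-1}(q^\ell-q^j)$ does not match the target; for instance at $(\ell,z,r)=(3,2,1)$ it overcounts by a factor of $[2]_q=q+1$. Your consistency check is also off: $(\ell-z)_{r,q}$ vanishes exactly when $z>\ell-r$ (i.e.\ when $K$ cannot contain $z$ independent vectors), whereas your product vanishes when $z<r$.

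With the fix (the first $z$ rows chosen in $K$, then the remaining $\ell-z$ rows extending to a basis), your row-by-row count becomes
\[
\prod_{j=0}^{z-1}(q^{\ell-r}-q^j)\cdot\prod_{j=z}^{\ell-1}(q^\ell-q^j)
=(q-1)^\ell q^{\binom{\ell}{2}}(\ell-r)_{z,q}\,(\ell-z)!_q,
\]
which agrees with the stated expression via $(\ell-r)_{z,q}(\ell-z)!_q=(\ell-z)_{r,q}(\ell-r)!_q$ (both sides equal $(\ell-r)!_q(\ell-z)!_q/(\ell-r-z)!_q$). This is a mildly different route from the paper: the paper first uses $M_q^z(N\cdot X)=M_q^z(X)\cdot N^{-1}$ to reduce to the canonical $X$ with $X_i=e_i$ for $i\le r$ and $X_i=0$ otherwise, observes that then $M\in M_q^z(X)$ forces the top-left $z\times r$ block of $M$ to vanish, and counts \emph{column by column}, landing directly on $(\ell-z)_{r,q}(\ell-r)!_q$ without the final $q$-identity. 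Both arguments are equally short once the indexing is right.
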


\begin{proof}
  \Cref{lem:Mqst:inv} follows since for every $N\in\GL_s(\FF_q)$ and every
  $M\in\GL_\ell(\FF_q)$, we have
  \begin{align*}
    \left(
    \begin{pmatrix}
      N & 0\\
      0 & I
    \end{pmatrix}
    \cdot
    M\cdot X
    \right)_{1,\ldots,s}
    =
    0
    & \iff
    (M\cdot X)_{1,\ldots,s} = 0,
    \\
    \left(
    \begin{pmatrix}
      N & 0\\
      0 & I
    \end{pmatrix}
    \cdot
    M\cdot X
    \right)_{t+1,\ldots,\ell}
    =
    0
    & \iff
    (M\cdot X)_{t+1,\ldots,\ell} = 0.
  \end{align*}

  \medskip

  For \cref{lem:Mqst:dist}, note that the distribution of $\rn{M}$ is
  $\GL_s(\FF_q)$-invariant. Thus, for every $N\in\GL_s(\FF_q)$, we have
  \begin{align*}
    N\cdot (\rn{M}\cdot X)_{1,\ldots,s}
    & =
    \left(
    \begin{pmatrix}
      N & 0\\
      0 & I
    \end{pmatrix}
    \cdot \rn{M}\cdot X
    \right)_{1,\ldots,s}
    \sim
    (\rn{M}\cdot X)_{1,\ldots,s}.
  \end{align*}

  \medskip

  It remains to prove \cref{lem:Mqst:size}.

  The fact that $\lvert M_q^{s,t}(X)\rvert = \lvert M_q^z(X)\rvert$ follows since there is a
  natural bijection between these sets obtained by permuting rows $s+1,\ldots,z$ with rows
  $t+1,\ldots,\ell$.

  Let us then compute the size of $M_q^z(X)$. First note that for every $N\in\GL_\ell(\FF_q)$,
  we have
  \begin{align*}
    M_q^z(N\cdot X)
    & =
    \{M\cdot N^{-1} \mid M\in M_q^z(X)\},
  \end{align*}
  so it suffices to show only the case when
  \begin{align*}
    X_1=e_1,X_2=e_2,\ldots,X_z=e_z,X_{z+1}=0,X_{z+2}=0,\ldots,X_\ell=0,
  \end{align*}
  where $e_i\in\FF_q^n$ is the $i$th canonical basis vector.

  By decomposing an element $M\in M_q^z(X)$ into blocks as
  \begin{align*}
    M & =
    \begin{pmatrix}
      A & B\\
      C & D
    \end{pmatrix}
  \end{align*}
  such that $A\in\FF_q^{z\times r}$, $B\in\FF_q^{z\times(n-r)}$, $C\in\FF_q^{(\ell-z)\times r}$ and
  $D\in\FF_q^{(\ell-z)\times(n-r)}$, we note that we must have $A = 0$, so we can count the elements
  of $M_q^z(X)$ by iteratively counting how many columns preserve linear independence to get
  \begin{align*}
    \lvert M_q^z(X)\rvert
    & =
    \left(\prod_{j=0}^{r-1}(q^{\ell-z}-q^j)\right)
    \cdot\prod_{j=r}^{\ell-1}(q^\ell-q^j)
    =
    (q-1)^\ell
    \cdot q^{\binom{\ell}{2}}
    \cdot (\ell-z)_{r,q}
    \cdot (\ell-r)!_q,
  \end{align*}
  as desired.
\end{proof}

\subsection{The Lifts}

We now have all the ingredients to lift dual solutions. We start with a warm-up by lifting solutions
from level $1$ to level $\ell$. The bold reader should feel free to skip directly to
\cref{thm:lift}.

\begin{proposition}\label{prop:lift}
  Let $q$ be a prime power. If $h$ is a solution of~\eqref{eq:symmpLPdual} with $\ell=1$, then for
  every $\ell\in[n]$, letting
  \begin{align*}
    g_1 & \df g_2 \df \cdots \df g_{\ell-1} \df 0,
    &
    g_\ell(X)
    & \df
    \sum_{t\in[\ell]} (1+h(0))^{\ell-t}\cdot h(X_1)\cdot\One[X_{t+1,\ldots,\ell} = 0]
  \end{align*}
  gives a solution of~\eqref{eq:symmpLPdual} whose objective value is the $\ell$th power of the
  objective value of $h$, i.e., we have
  \begin{align*}
    1 + \sum_{u\in[\ell]} g_u(0) & = (1+h(0))^\ell.
  \end{align*}
\end{proposition}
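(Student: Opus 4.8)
The plan is to verify the three families of constraints of~\eqref{eq:symmpLPdual} directly, exploiting the fact that $g_1=\cdots=g_{\ell-1}=0$ so that only the single function $g_\ell$ matters, and mimicking the telescoping argument sketched in~\cref{sec:overview}. First I would record what feasibility of $h$ at level $1$ means: $\cF_1(h)\ge 0$, and $1 + \EE_{\rn M\sim U(\GL_1(\FF_q))}[h(M\cdot x)] = 1 + h(x) \le 0$ for every $x\in\Valid_{n,1}\setminus\{0\}$, where $\GL_1(\FF_q)$ acts by scalars; hence $1+h(x)\le 0$, i.e. $h(x)\le -1$, for every non-zero valid $x$ (after the usual symmetrization over the scalar action, which here is harmless since $\Valid_{n,1}$ is closed under scaling). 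I would also note $\widehat h(0)=\cF_1(h)(0)$ enters the normalization trivially through the way the LP is set up, so the only real content is the partial-Fourier sign condition and the validity condition at level $\ell$.

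For the partial Fourier constraint $\cF_\ell(g_\ell)\ge 0$: write $g_\ell(X) = \sum_{t\in[\ell]} c_t\, h(X_1)\,\One[X_{t+1,\ldots,\ell}=0]$ with $c_t\df(1+h(0))^{\ell-t}\ge 0$ (note $1+h(0)=V_h\ge 0$ is the objective value, which is non-negative since it upper-bounds $|C|\ge 1$). The key structural point is that each summand factors as a function of $X_1$ times an indicator that certain rows vanish, and $\cF_\ell$ is a product of $n$ column-wise full characters; applying $\cF_\ell$ to $h(X_1)\One[X_{t+1,\ldots,\ell}=0]$ should give, up to a non-negative scalar, $\cF_1(h)$ evaluated on the first row tensored with point-mass-type non-negative factors coming from the vanishing rows (since the Fourier transform of $\One_0$ on a row is the all-ones function, non-negative). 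Concretely I expect $\cF_\ell\big(h(X_1)\One[X_{t+1,\ldots,\ell}=0]\big)(Y)$ to be a non-negative multiple of $\cF_1(h)(Y_1)\cdot\prod_{j=t+1}^{\ell}(\text{something}\ge 0)$, and since $\cF_1(h)\ge 0$ and $c_t\ge 0$, summing over $t$ keeps it non-negative. This computation is routine given the definition of $\chi_Y^{(k)}$ and $\cF_k$ from the excerpt.

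The validity constraint is the heart of the argument. Fix $X\in\Valid_{n,\ell}\setminus\{0\}$. Since $g_1=\cdots=g_{\ell-1}=0$, the constraint reads $1 + \frac{1}{|\GL_\ell(\FF_q)|}\sum_{M\in\GL_\ell(\FF_q)}(g_\ell\cdot M)(X)\le 0$, i.e. $1 + \EE_{\rn M}[g_\ell(\rn M\cdot X)]\le 0$. Expanding, $\EE_{\rn M}[g_\ell(\rn M\cdot X)] = \sum_{t\in[\ell]} c_t\,\EE_{\rn M}\big[h((\rn M X)_1)\,\One[(\rn M X)_{t+1,\ldots,\ell}=0]\big]$. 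Here I would use \cref{lem:Mqst} and \cref{lem:GLell}: condition on the event that rows $t+1,\ldots,\ell$ of $\rn M X$ vanish, whose probability is $|M_q^{0,t}(X)|/|\GL_\ell(\FF_q)|$ (computable from \cref{lem:Mqst:size} with $s=0$, $z=\ell-t$), and on that event the conditional distribution of $(\rn M X)_1$ is, by \cref{lem:Mqst:dist}, $\GL_1(\FF_q)$-invariant; moreover $(\rn M X)_1$ lies in the row span of $X$, hence in $\Valid_{n,1}$, and the relevant telescoping identity $\sum_{t} c_t\cdot\Pr[\text{rows }t+1,\ldots,\ell\text{ of }\rn M X=0]$ together with $h\le -1$ on non-zero valid vectors and $h(0)=V_h-1$ should collapse to exactly $-(1+h(0))^\ell = -V_h^\ell$, plus a correction of $+$ (something) when $(\rn M X)_1=0$, which is precisely cancelled by how $c_t$ was chosen (the factor $(1+h(0))$ being the "cost" of a zero row). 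The bookkeeping is the geometric-series identity $\sum_{t=1}^{\ell} V_h^{\ell-t}\big(p_{t-1}-p_t\big)\cdot(\text{stuff}) $ where $p_t$ is the probability that $\ge \ell-t$ suitable rows vanish; since $X\ne 0$, $p_\ell=1$ but $p_0$ is small, and the alternating structure telescopes.

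Finally, the objective value: $1 + \sum_{u\in[\ell]}g_u(0) = 1 + g_\ell(0) = 1 + \sum_{t\in[\ell]}(1+h(0))^{\ell-t} h(0) = 1 + h(0)\cdot\frac{(1+h(0))^\ell - 1}{(1+h(0))-1} = (1+h(0))^\ell$ (handling $h(0)=0$ as a limiting/trivial case), as claimed. \textbf{The main obstacle} I anticipate is the validity computation: correctly identifying the conditional distribution of $(\rn M X)_1$ given that a block of rows of $\rn M X$ vanishes, and checking that the contribution of the "$(\rn M X)_1=0$" event is exactly absorbed by the geometric weights $c_t$ so that the telescoping yields $-V_h^\ell$ rather than merely $\le -1$. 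Everything else — the partial Fourier sign, the normalization, the objective — should follow by direct and short calculation using the definitions and \cref{lem:Mqst}. (The full argument for general $k$ will then be the natural "chunks of $k$ rows" generalization, which is why the proposition is billed as a warm-up.)
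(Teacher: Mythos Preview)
Your approach is essentially identical to the paper's: the same three-way partition of $\GL_\ell(\FF_q)$ into $M_q^{1,t}(X)$, $M_q^{0,t}(X)\setminus M_q^{1,t}(X)$, and the complement via \cref{lem:Mqst}, the same use of the $\GL_1(\FF_q)$-invariance of $(\rn M X)_1$ on the middle set to invoke the level-$1$ validity of $h$, and the same telescoping for both the validity constraint and the objective value. One correction worth noting: the telescoping in the validity step yields exactly $-1$, not $-V_h^\ell$ as you twice state; writing $p_t\df\Pr[(\rn M X)_{t+1,\ldots,\ell}=0]=(t)_{r,q}/(\ell)_{r,q}$ for $r=\rk(X)\ge 1$, the per-term bound is $V_h\,p_{t-1}-p_t$ (since $\Pr[(\rn M X)_1=0,\ (\rn M X)_{t+1,\ldots,\ell}=0]=p_{t-1}$), so $\sum_{t\in[\ell]} V_h^{\ell-t}(V_h\,p_{t-1}-p_t)=V_h^\ell p_0-p_\ell=0-1=-1$, which is exactly the constraint required.
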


\begin{proof}
  Clearly $\cF_u(g_u)=0$ for every $u\in[\ell-1]$. Also, for every $X\in\FF_q^{\ell\times n}$, we
  have
  \begin{align*}
    \cF_\ell(g_\ell)(X)
    & =
    \sum_{t\in[\ell]}
    (1+h(0))^{\ell-t}
    \cdot\widehat{h}(X_1)
    \cdot\One[X_{2,\ldots,t}=0]
    \cdot 2^{-n(\ell-t)},
  \end{align*}
  which is non-negative as $\widehat{h}=\cF_1(h)\geq 0$.

  Let us now show the validity constraints. Let $X\in\Valid\setminus\{0\}$ and let $r\df\rk(X)$. We
  need to show that
  \begin{align*}
    1
    + \frac{1}{\lvert\GL_\ell(\FF_q)\rvert}\cdot
    \sum_{u\in[\ell]}\sum_{M\in\GL_\ell(\FF_q)} (g_u\cdot M)(X)
    \leq 0,
  \end{align*}
  which is equivalent to
  \begin{align*}
    \sum_{t\in[\ell]}
    (1+h(0))^{\ell-t}
    \cdot
    \frac{1}{\lvert\GL_\ell(\FF_q)\rvert}\cdot
    \sum_{M\in\GL_\ell(\FF_q)}
    h((M\cdot X)_1)\cdot\One[(M\cdot X)_{t+1,\ldots,\ell}=0]
    \leq -1,
  \end{align*}
  which in turn is equivalent to
  \begin{align}\label{eq:objective}
    \sum_{t\in[\ell]}
    (1+h(0))^{\ell-t}
    \cdot
    \EE_{\rn{M}\sim U(\GL_\ell(\FF_q))}[
      h((\rn{M}\cdot X)_1)\cdot\One[(\rn{M}\cdot X)_{t+1,\ldots,\ell}=0]
    ] & \leq -1,
  \end{align}
  where $U(\GL_\ell(\FF_q))$ is the uniform distribution on $\GL_\ell(\FF_q)$.

  To prove the above, fix $t\in[\ell]$ and let us study the expression $h((\rn{M}\cdot
  X)_1)\cdot\One[(\rn{M}\cdot X)_{t+1,\ldots,\ell}=0]$.

  First, recalling \cref{def:Mqst}, we partition $\GL_\ell(\FF_q)$ into the sets
  \begin{align*}
    M_q^{1,t}(X),
    & &
    M_q^{0,t}(X)\setminus M_q^{1,t}(X),
    & &
    \GL_\ell(\FF_q)\setminus M_q^{0,t}(X),
  \end{align*}
  which by \cref{lem:GLell,lem:Mqst} have sizes
  \begin{equation}\label{eq:sizes}
    \begin{gathered}
      (q-1)^\ell
      \cdot q^{\binom{\ell}{2}}
      \cdot (t-1)_{r,q}
      \cdot (\ell-r)!_q,
      \\
      (q-1)^\ell
      \cdot q^{\binom{\ell}{2}}
      \cdot ((t)_{r,q} - (t-1)_{r,q})
      \cdot (\ell-r)!_q,
      \\
      (q-1)^\ell
      \cdot q^{\binom{\ell}{2}}
      \cdot ((\ell)_{r,q} - (t)_{r,q})
      \cdot (\ell-r)!_q,
    \end{gathered}
  \end{equation}
  respectively.

  Note that $\One[(\rn{M}\cdot X)_{t+1,\ldots,\ell}=0]$ only takes non-zero values when $\rn{M}$ is
  in one of the first two sets.

  Clearly, if we condition on $\rn{M}\in M_q^{1,t}(X)$, then $(\rn{M}\cdot X)_1 = 0$.

  On the other hand, by \cref{lem:Mqst:inv,lem:Mqst:dist} of \cref{lem:Mqst}, we know that the
  conditional distribution of $(\rn{M}\cdot X)_1$ given $\rn{M}\in M_q^{0,t}(X)\setminus
  M_q^{1,t}(X)$ is $\GL_1(\FF_q)$-invariant. In particular, this implies that if $\rn{N}\sim
  U(\GL_1(\FF_q))$ is independent from $\rn{M}$, then the conditional distributions of $(\rn{M}\cdot
  X)_1$ and $\rn{N}\cdot(\rn{M}\cdot X)_1$ given $\rn{M}\in M_q^{0,t}(X)\setminus M_q^{1,t}(X)$ are
  the same. Finally, since $X\in\Valid\setminus\{0\}$, we know that when we condition on $\rn{M}\in
  M_q^{0,t}(X)\setminus M_q^{1,t}(X)$, then $(\rn{M}\cdot X)_1$ is always an element of
  $\Valid\setminus\{0\}$. Thus we conclude that
  \begin{gather*}
    \EE_{\rn{M}}[
      h((\rn{M}\cdot X)_1)\cdot\One[(\rn{M}\cdot X)_{t+1,\ldots,\ell}=0]
      \given
      \rn{M}\in M_q^{1,t}(X)
    ]
    =
    h(0),
    \\
    \begin{multlined}
    \EE_{\rn{M}}[
      h((\rn{M}\cdot X)_1)\cdot\One[(\rn{M}\cdot X)_{t+1,\ldots,\ell}=0]
      \given
      \rn{M}\in M_q^{0,t}(X)\setminus M_q^{1,t}(X)
    ]
    \\
    =
    \EE_{\rn{M}}[
      \EE_{\rn{N}}[
        h(\rn{N}\cdot(\rn{M}\cdot X)_1)
      ]
      \given
      \rn{M}\in M_q^{0,t}(X)\setminus M_q^{1,t}(X)
    ]
    \leq
    -1,
    \end{multlined}
    \\
    \EE_{\rn{M}}[
      h((\rn{M}\cdot X)_1)\cdot\One[(\rn{M}\cdot X)_{t+1,\ldots,\ell}=0]
      \given
      \rn{M}\in \GL_\ell(\FF_q)\setminus M_q^{0,t}(X)
    ]
    =
    0,
  \end{gather*}
  where the inequality follows from the validity constraints for $h$.

  Thus, we get
  \begin{align*}
    \EE_{\rn{M}\sim U(\GL_\ell(\FF_q))}[
      (\rn{M}\cdot X)_1\cdot\One[(\rn{M}\cdot X)_{t+1,\ldots,\ell}=0]
    ]
    & \leq
    \frac{
      \lvert M_q^{1,t}(X)\rvert\cdot h(0)
      - \lvert M_q^{0,t}(X)\setminus M_q^{1,t}(X)\rvert
    }{
      \lvert\GL_\ell(\FF_q)\rvert\cdot
    }
    \\
    & =
    \frac{(t-1)_{r,q}\cdot (1+h(0)) - (t)_{r,q}}{(\ell)_{r,q}},
  \end{align*}
  where the equality follows from \cref{lem:GLell} and~\eqref{eq:sizes}.

  Recalling that our goal is to show~\eqref{eq:objective}, we note that
  \begin{align*}
    & \!\!\!\!\!\!
    \sum_{t\in[\ell]}
    (1+h(0))^{\ell-t}
    \cdot
    \EE_{\rn{M}\sim U(\GL_\ell(\FF_q))}[
      (\rn{M}\cdot X)_1\cdot\One[(\rn{M}\cdot X)_{t+1,\ldots,\ell}=0]
    ]
    \\
    & \leq
    \frac{1}{(\ell)_{r,q}}
    \sum_{t\in[\ell]}
    (1+h(0))^{\ell-t}
    \cdot((t-1)_{r,q}\cdot (1+h(0)) - (t)_{r,q})
    \\
    & =
    \frac{(1+h(0))^\ell\cdot (0)_{r,q} - (\ell)_{r,q}}{(\ell)_{r,q}}
    =
    -1,
  \end{align*}
  where the first equality follows since the sum telescopes. Thus, $g$ is a feasible solution.

  It remains to compute the value of $g$. Note that
  \begin{align*}
    1 + \sum_{u\in[\ell]} g_u(0)
    & =
    1 +
    \sum_{t\in[\ell]}
    (1+h(0))^{\ell-t}
    \cdot h(0)
    \\
    & =
    1 +
    \sum_{t\in[\ell]}
    (1+h(0))^{\ell-t}
    \cdot ((1+h(0)) - 1)
    =
    (1+h(0))^\ell,
  \end{align*}
  where the last equality follows since the sum telescopes.
\end{proof}

\begin{remark}
  Note that since the lift in \cref{prop:lift} sets all $g_u$ with $u < \ell$ to $0$, it follows
  that this is also a lift of the dual of the full Fourier hierarchy (see~\cref{rmk:fullFourier}).
\end{remark}

We now prove the more general lift from level $k$ to level $\ell$ under the assumption that $k$
divides $\ell$. We point out that when we take $k=1$ in \cref{thm:lift} below, we recover
\cref{prop:lift}, except for the fact that the constructed solution has coordinates
slightly permuted so that it is appropriately compatible with the partial Fourier.

\begin{theorem}\label{thm:lift}
  Let $q$ be a prime power and $k\in\NN_+$. If $h$ is a solution of~\eqref{eq:symmpLPdual} with
  $\ell=k$ and objective value $V_h\df 1+\sum_{u\in[k]} h_u(0)$, then for every $\ell\in[n]$ divisible
  by $k$, letting
  \begin{align*}
    g_u(X) & \df
    \begin{dcases*}
      0,
      & if $u\leq\ell-k$,
      \\
      \sum_{t=0}^{\ell/k-1}
      V_h^t
      \cdot h_{u-\ell+k}(X_{\ell-k+1,\ldots,\ell})
      \cdot\One[X_{1,\ldots,kt}=0],
      & otherwise,
    \end{dcases*}
  \end{align*}
  gives a solution of~\eqref{eq:symmpLPdual} whose objective value is the $(\ell/k)$th power of the
  objective value of $h$, i.e., we have
  \begin{align*}
    1 + \sum_{u\in[\ell]} g_u(0) & = V_h^{\ell/k} = \left(1 + \sum_{u\in[k]} h_u(0)\right)^{\ell/k}.
  \end{align*}
\end{theorem}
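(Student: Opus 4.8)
The plan is to realize $g=(g_u)_{u\in[\ell]}$ as the terminal member of a sequence $g^{(0)},g^{(1)},\ldots,g^{(\ell/k)}=g$ of "partial" candidates, where $g^{(t)}_u\df 0$ for $u\le\ell-k$ and
$g^{(t)}_u(X)\df\sum_{s=0}^{t-1}V_h^s\cdot h_{u-\ell+k}(X_{\ell-k+1,\ldots,\ell})\cdot\One[X_{1,\ldots,ks}=0]$ for $\ell-k+1\le u\le\ell$; in particular $g^{(\ell/k)}_u$ is exactly the $g_u$ of the statement, and taking $k=1$ recovers \cref{prop:lift} up to a permutation of coordinates. Before anything else I would record that $V_h\ge 1$ (hence all weights $V_h^s$ are non-negative): by weak LP duality the value of any feasible dual of~\eqref{eq:symmpLPdual} is at least the primal optimum $\max_{C\in\Valid_n}\lvert C\rvert^k\ge 1$, the zero code being valid. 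It then remains to verify the Partial Fourier constraints and the validity constraints of~\eqref{eq:symmpLPdual} for $g$, and to compute its value.

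For the Partial Fourier constraints, the case $u\le\ell-k$ is trivial since $g_u=0$. For $u=\ell-k+i$ with $i\in[k]$, observe that each summand of $g_u$ factors as a product of a function of the top block of rows $1,\ldots,\ell-k$, namely $\One[X_{1,\ldots,ks}=0]$, and a function of the bottom block of rows $\ell-k+1,\ldots,\ell$, namely $h_i(X_{\ell-k+1,\ldots,\ell})$. Since $\cF_{\ell-k+i}$ performs a Fourier transform on rows $1,\ldots,\ell-k+i$ and acts as the identity on rows $\ell-k+i+1,\ldots,\ell$, it respects this splitting: applied to the top factor it yields, up to a positive constant, an indicator that rows $ks+1,\ldots,\ell-k$ of the transform variable vanish (non-negative), and applied to the bottom factor it yields exactly $\cF_i(h_i)$ evaluated on $(X_{\ell-k+1},\ldots,X_\ell)$ in the level-$k$ sense, which is $\ge 0$ by feasibility of $h$. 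Summing such non-negative products against the non-negative weights $V_h^s$ gives $\cF_u(g_u)\ge 0$.

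The crux is the validity constraint. Fix $X\in\Valid_{n,\ell}\setminus\{0\}$, put $r\df\rk(X)$, and for a uniformly random $\rn M\in\GL_\ell(\FF_q)$ set $p_s\df\Pr[(\rn M\cdot X)_{1,\ldots,ks}=0]=\lvert M_q^{ks}(X)\rvert/\lvert\GL_\ell(\FF_q)\rvert$, so that $p_0=1$, the $p_s$ are non-increasing, and $p_{\ell/k}=\Pr[\rn M\cdot X=0]=0$ since $X\ne 0$. I would then bound, term by term,
\begin{align*}
  1+\sum_{u\in[\ell]}\EE_{\rn M}\bigl[g^{(t)}_u(\rn M\cdot X)\bigr]
  &= 1+\sum_{s=0}^{t-1}V_h^s\,\EE_{\rn M}\!\Bigl[\One[(\rn M\cdot X)_{1,\ldots,ks}=0]\,\textstyle\sum_{i=1}^k h_i\bigl((\rn M\cdot X)_{\ell-k+1,\ldots,\ell}\bigr)\Bigr],
\end{align*}
splitting the $s$-th expectation according to whether $(\rn M\cdot X)_{\ell-k+1,\ldots,\ell}$ vanishes. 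The sub-event that it vanishes jointly with the first $ks$ rows has probability $\lvert M_q^{ks,\ell-k}(X)\rvert/\lvert\GL_\ell(\FF_q)\rvert=p_{s+1}$ by~\cref{lem:Mqst:size} (with $z=k(s+1)$), and contributes exactly $(V_h-1)p_{s+1}$. On the complementary sub-event, of probability $p_s-p_{s+1}$, an argument as in~\cref{lem:Mqst:dist} — using that the defining conditions of $M_q^{ks}(X)$ and of $M_q^{ks,\ell-k}(X)$ are invariant under left multiplication by $\mathrm{diag}(I_{\ell-k},N)$, $N\in\GL_k(\FF_q)$ — shows the conditional law of $(\rn M\cdot X)_{\ell-k+1,\ldots,\ell}$ is invariant under the marginal $\GL_k(\FF_q)$-action on the last $k$ rows; since $\Valid_n$ is closed under passing to subspaces, this vector lies in $\Valid_{n,k}\setminus\{0\}$ almost surely on the event, so averaging $h$'s validity constraint over this symmetry bounds the contribution by $-(p_s-p_{s+1})$. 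Hence the $s$-th term is $\le V_h^{s+1}p_{s+1}-V_h^s p_s$, and summing over $s$ telescopes to $V_h^t p_t$. Taking $t=\ell/k$ yields $1+\sum_u\EE_{\rn M}[g_u(\rn M\cdot X)]\le V_h^{\ell/k}p_{\ell/k}=0$, so $g$ is feasible. The value is immediate: at $X=0$ every indicator equals $1$, so $1+\sum_u g_u(0)=1+\bigl(\sum_{s=0}^{\ell/k-1}V_h^s\bigr)\bigl(\sum_{i\in[k]}h_i(0)\bigr)=1+\bigl(\sum_{s=0}^{\ell/k-1}V_h^s\bigr)(V_h-1)=V_h^{\ell/k}$.

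I expect the main obstacle to be the symmetrization step inside the validity argument: after conditioning a uniform $\GL_\ell(\FF_q)$-matrix $\rn M$ on a prescribed block of rows of $\rn M\cdot X$ being zero, one must extract a residual $\GL_k(\FF_q)$-invariance of the last $k$ rows that is precisely matched to the level-$k$ validity constraint of $h$, while simultaneously tracking the orbit sizes $\lvert M_q^{s,t}(X)\rvert$ so that the per-term estimates assemble into a clean telescoping sum. This is where the combinatorics of matrices over $\FF_q$ (\cref{lem:Mqst}) and the hereditary nature of $\Valid_n$ do their work; the Partial Fourier check and the value computation, by contrast, are essentially bookkeeping once the product structure of the $g_u$ is made explicit.
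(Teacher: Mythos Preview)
Your proposal is correct and follows essentially the same approach as the paper's proof: split the validity expectation according to whether the relevant $k$-row block of $\rn M\cdot X$ vanishes, use the residual $\GL_k(\FF_q)$-invariance (the paper's \cref{lem:Mqst}) to feed into $h$'s validity constraint, and telescope. The only cosmetic differences are that you work directly with the last $k$ rows rather than permuting to the first $k$ as the paper does, you package the orbit sizes as probabilities $p_s$ rather than $q$-falling factorials, and you make explicit both the assumption that $\Valid_n$ is subspace-closed and the fact $V_h\ge 1$ (needed for the signs), which the paper leaves implicit.
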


\begin{proof}
  Clearly $\cF_u(g_u)=0$ for every $u\in[\ell-k]$. Also, for every $X\in\FF_q^{\ell\times n}$, we
  have
  \begin{align*}
    \cF_\ell(g_\ell)(X)
    & =
    \sum_{t=0}^{\ell/k-1}
    V_h^t
    \cdot\cF_{u-\ell+k}(h_{u-\ell+k}(X_{\ell-k+1,\ldots,\ell}))
    \cdot\One[X_{kt+1,\ldots,\ell-k}=0]
    \cdot 2^{-nkt},
  \end{align*}
  which is non-negative as $\cF_u(h_u)\geq 0$ for every $u\in[k]$.

  Let us now show the validity constraints. Let $X\in\Valid\setminus\{0\}$ and let $r\df\rk(X)$. We
  need to show that
  \begin{align*}
    1
    +
    \frac{1}{\lvert\GL_\ell(\FF_q)\rvert}\cdot
    \sum_{u\in[\ell]} (g_u\cdot M)(X)
    \leq 0,
  \end{align*}
  which is equivalent to
  \begin{align*}
    \sum_{t=0}^{\ell/k-1}
    V_h^t
    \sum_{u=\ell-k+1}^\ell
    \frac{1}{\lvert\GL_\ell(\FF_q)\rvert}
    \sum_{M\in\GL_\ell(\FF_q)}
    h_{u-\ell+k}((M\cdot X)_{\ell-k+1,\ldots,\ell})
    \cdot\One[(M\cdot X)_{1,\ldots,kt}=0]
    \leq -1.
  \end{align*}
  By permuting the rows of the resulting matrix in the expression above, we see that the above is
  equivalent to
  \begin{align}\label{eq:objectiveho}
    \sum_{t=0}^{\ell/k-1}
    V_h^t
    \sum_{u\in[k]}
    \EE_{\rn{M}\sim U(\GL_\ell(\FF_q))}[
      h_u((M\cdot X)_{1,\ldots,k})
      \cdot\One[(M\cdot X)_{\ell-kt+1,\ldots,\ell}=0]
    ]
    \leq -1,
  \end{align}
  where $U(\GL_\ell(\FF_q))$ is the uniform distribution on $\GL_\ell(\FF_q)$.

  To prove the above, fix $t\in\{0,\ldots,\ell/k-1\}$ and let us study the expression
  \begin{align*}
    \sum_{u\in[k]}
    h_u((M\cdot X)_{1,\ldots,k})
    \cdot\One[(M\cdot X)_{\ell-kt+1,\ldots,\ell}=0],
  \end{align*}
  where $\rn{M}\sim U(\GL_\ell(\FF_q))$.

  First, recalling \cref{def:Mqst}, we partition $\GL_\ell(\FF_q)$ into the sets
  \begin{align*}
    M_q^{k,\ell-kt}(X),
    & &
    M_q^{0,\ell-kt}(X)\setminus M_q^{k,\ell-kt}(X),
    & &
    \GL_\ell(\FF_q)\setminus M_q^{0,\ell-kt}(X),
  \end{align*}
  which by \cref{lem:GLell,lem:Mqst} have sizes
  \begin{equation}\label{eq:sizesho}
    \begin{gathered}
      (q-1)^\ell
      \cdot q^{\binom{\ell}{2}}
      \cdot (\ell-k(t+1))_{r,q}
      \cdot (\ell-r)!_q,
      \\
      (q-1)^\ell
      \cdot q^{\binom{\ell}{2}}
      \cdot ((\ell-kt)_{r,q} - (\ell-k(t+1))_{r,q})
      \cdot (\ell-r)!_q,
      \\
      (q-1)^\ell
      \cdot q^{\binom{\ell}{2}}
      \cdot ((\ell)_{r,q} - (\ell-kt))_{r,q})
      \cdot (\ell-r)!_q,
    \end{gathered}
  \end{equation}
  respectively.

  Note that $\One[(\rn{M}\cdot X)_{\ell-kt+1,\ldots,\ell}=0]$ only takes non-zero values when
  $\rn{M}$ is in one of the first two sets.

  Clearly, if we condition on $\rn{M}\in M_q^{k,\ell-kt}(X)$, then $(\rn{M}\cdot X)_{1,\ldots,k}=0$.

  On the other hand, by \cref{lem:Mqst:inv,lem:Mqst:dist} of \cref{lem:Mqst}, we know that the
  conditional distribution of $(\rn{M}\cdot X)_{1,\ldots,k}$ given $\rn{M}\in
  M_q^{0,\ell-kt}(X)\setminus M_q^{k,\ell-kt}(X)$ is $\GL_k(\FF_q)$-invariant. In particular, this
  implies that if $\rn{N}\sim U(\GL_k(\FF_q))$ is independent from $\rn{M}$, then the conditional
  distributions of $(\rn{M}\cdot X)_{1,\ldots,k}$ and $\rn{N}\cdot(\rn{M}\cdot X)_{1,\ldots,k}$
  given $\rn{M}\in M_q^{0,\ell-kt}(X)\setminus M_q^{k,\ell-kt}(X)$ are the same. Finally, since
  $X\in\Valid\setminus\{0\}$, we know that when we condition on $\rn{M}\in
  M_q^{0,\ell-kt}(X)\setminus M_q^{k,\ell-kt}(X)$, then $(\rn{M}\cdot X)_{1,\ldots,k}$ is always an
  element of $\Valid\setminus\{0\}$. Thus we conclude that
  \begin{gather*}
    \sum_{u\in[k]}
    \EE_{\rn{M}}[
      h_u((M\cdot X)_{1,\ldots,u})
      \cdot\One[(M\cdot X)_{\ell-kt+1,\ldots,\ell}=0]
      \given
      \rn{M}\in M_q^{k,\ell-kt}(X)
    ]
    =
    \sum_{u\in[k]} h_u(0)
    =
    V_h-1,
    \\
    \begin{multlined}
      \sum_{u\in[k]}
      \EE_{\rn{M}}[
        h_u((M\cdot X)_{1,\ldots,u})
        \cdot\One[(M\cdot X)_{\ell-kt+1,\ldots,\ell}=0]
        \given
        \rn{M}\in M_q^{0,\ell-kt}(X)\setminus M_q^{k,\ell-kt}(X)
      ]
      \\
      =
      \sum_{u\in[k]}
      \EE_{\rn{M}}[
        \EE_{\rn{N}}[
          h_u(\rn{N}\cdot(M\cdot X)_{1,\ldots,u})
          \cdot\One[(M\cdot X)_{\ell-kt+1,\ldots,\ell}=0]
        ]
        \given
        \rn{M}\in M_q^{0,\ell-kt}(X)\setminus M_q^{k,\ell-kt}(X)
      ]
      \\
      \leq
      -1,
    \end{multlined}
    \\
    \sum_{u\in[k]}
    \EE_{\rn{M}}[
      h_u((M\cdot X)_{1,\ldots,u})
      \cdot\One[(M\cdot X)_{\ell-kt+1,\ldots,\ell}=0]
      \given
      \rn{M}\in \GL_\ell(\FF_q)\setminus M_q^{0,\ell-kt}(X)
    ]
    =
    0,
  \end{gather*}
  where the inequality follows from the validity constraints for $h$.

  Thus, we get
  \begin{align*}
    & \!\!\!\!\!\!
    \sum_{u\in[k]}
    \EE_{\rn{M}\sim U(\GL_\ell(\FF_q))}[
      h_u((M\cdot X)_{1,\ldots,u})
      \cdot\One[(M\cdot X)_{\ell-kt+1,\ldots,\ell}=0]
    ]
    \\
    & \leq
    \frac{
      \lvert M_q^{k,\ell-kt}(X)\rvert\cdot (V_h-1)
      - \lvert M_q^{0,\ell-kt}(X)\setminus M_q^{k,\ell-kt}(X)\rvert
    }{
      \lvert\GL_\ell(\FF_q)\rvert
    }
    \\
    & =
    \frac{(\ell-k(t+1))_{r,q}\cdot V_h - (\ell-kt)_{r,q}}{(\ell)_{r,q}},
  \end{align*}
  where the equality follows from \cref{lem:GLell} and~\eqref{eq:sizesho}.

  Recalling that our goal is to show~\eqref{eq:objective}, we note that
  \begin{align*}
    & \!\!\!\!\!\!
    \sum_{t=0}^{\ell/k-1}
    V_h^t
    \sum_{u\in[k]}
    \EE_{\rn{M}\sim U(\GL_\ell(\FF_q))}[
      h_u((M\cdot X)_{1,\ldots,u})
      \cdot\One[(M\cdot X)_{\ell-kt+1,\ldots,\ell}=0]
    ]
    \\
    & \leq
    \frac{1}{(\ell)_{r,q}}
    \sum_{t=0}^{\ell/k-1}
    V_h^t
    \cdot((\ell-k(t+1))_{r,q} V_h - (\ell-kt)_{r,q})
    \\
    & =
    \frac{V_h^{\ell/k}\cdot (0)_{r,q} - (\ell)_{r,q}}{(\ell)_{r,q}}
    =
    -1,
  \end{align*}
  where the first equality follows since the sum telescopes. Thus, $g$ is a feasible solution.

  It remains to compute the value of $g$. Note that
  \begin{align*}
    1 + \sum_{u\in[\ell]} g_u(0)
    & =
    1 +
    \sum_{t=0}^{\ell/k-1}
    V_h^t
    \sum_{u=\ell-k+1}^\ell
    h_{u-\ell+k}(0)
    =
    1 +
    \sum_{t=0}^{\ell/k-1} V_h^t(V_h-1)
    =
    V_h^{\ell/k},
  \end{align*}
  where the last equality follows since the sum telescopes.
\end{proof}

\section{Completeness via Subspace Symmetric Dual LPs}\label{sec:completeness}

We will now give a new proof that the hierarchy is complete, i.e., it recovers the true
size of a code at level $\ell\geq n$. For this proof, we recall yet another formulation of the
hierarchy from~\cite{CJJ23}.

Instead of symmetrizing~\eqref{eq:LPprimal} under the action of $S_n$, we recall that
$\GL_\ell(\FF_q)$ also acts on $\FF_q^{\ell\times n}$ by left-multiplication and observe
that~\eqref{eq:LPprimal} is also $\GL_\ell(\FF_q)$-symmetric. Inspired by terminology from
Sum-of-Squares algorithms, given a $\GL_\ell(\FF_q)$-symmetric solution $f$, for each $S\in
L_{\FF_q}(\FF_q^n)$, we define the notation
\begin{align*}
  \pPr[S\subseteq\rn{\widetilde{C}}] & \df f(X)
\end{align*}
for any $X\in\FF_q^{\ell\times n}$ with $\linspan(\{X_1,\ldots,X_\ell\})=S$ and interpret this as a
pseudo-probability that a pseudo-random variable $\rn{\widetilde{C}}$ over $L_{\FF_q}(\FF_q^n)$
contains $S$. Computing the pseudo-probabilities $\pPr[S]\df\pPr[S=\rn{\widetilde{C}}]$
amounts to a \Mobius\ inversion on the poset $L_{\FF_q}(\FF_q^n)$ under the inclusion partial
order. At levels $\ell\geq n$ and when $\Valid_n$ is closed under taking subspaces\footnote{It is
  possible to make this \Mobius\ inversion at lower levels and without the closure under subspaces
  assumption, but it yields more complicated constraints. Since our completeness result will only
  hold for levels $\ell\geq n$ anyway, we opt for the simpler formulation instead.}, this yields the
formulation in~\eqref{eq:Mprimal}, whose dual is~\eqref{eq:Mdual}; a code $C\in\Valid_n$ yields a
solution $\pPr_C[S]\df\One[S=C]$ of~\eqref{eq:Mprimal}, whose value is $\lvert C\rvert^\ell$. The
first completeness at levels $\ell\geq n$ of~\cite{CJJ23} was based on the primal
formulation~\eqref{eq:Mprimal} and crucially relied on the fact that non-negative solutions
to~\eqref{eq:Mprimal} are convex combinations of true solutions.

\begin{empheq}[box=\fbox]{gather}\label{eq:Mprimal}
  \begin{aligned}
    \text{Variables: } 
    & \mathrlap{(\pPr[S] \mid S\in L_{\FF_q}(\FF_q^n))}
    \\
    \max \qquad
    & \sum_{S\in L_{\FF_q}(\FF_q^n)} \lvert S\rvert^\ell \pPr[S]
    \\
    \text{s.t.} \qquad
    & \sum_{S\in L_{\FF_q}(\FF_q^n)} \pPr[S] = 1
    & &
    & (\text{Normalization})
    \\
    & \pPr[S] = 0
    & & \forall S\in L_{\FF_q}(\FF_q^n)\setminus\Valid_n
    & (\text{Validity})
    \\
    & \sum_{\substack{S\in L_{\FF_q}(\FF_q^n)\\ S\subseteq U}} \lvert S\rvert^\ell \pPr[S] \geq 0
    & & \forall U\in L_{\FF_q}(\FF_q^n)
    & (\text{Downward sums})
    \\
    & \sum_{\substack{S\in L_{\FF_q}(\FF_q^n)\\ U\subseteq S}} \pPr[S] \geq 0
    & & \forall U\in L_{\FF_q}(\FF_q^n)
    & (\text{Upward sums})
  \end{aligned}
\end{empheq}

\begin{empheq}[box=\fbox]{gather}\label{eq:Mdual}
  \begin{aligned}
    \text{Variables: }
    & \mathrlap{\alpha\in\RR, \beta,\gamma\colon L_{\FF_q}(\FF_q^n)\to\RR}
    \\
    \min \quad
    & \alpha
    \\
    \text{s.t.} \quad
    &
    \alpha
    =
    \lvert S\rvert^\ell
    + \lvert S\rvert^\ell\sum_{\mathclap{\substack{T\in L_{\FF_q}(\FF_q^n)\\S\leq T}}} \beta(T)
    + \sum_{\mathclap{\substack{T\in L_{\FF_q}(\FF_q^n)\\T\leq S}}} \gamma(T)
    & & \forall S\in\Valid_n
    & (\text{Equality to objective})
    \\
    & \beta(S) \geq 0
    & & \forall S\in L_{\FF_q}(\FF_q^n)
    & (\text{$\beta$ non-negativity})
    \\
    & \gamma(S) \geq 0
    & & \forall S\in L_{\FF_q}(\FF_q^n)
    & (\text{$\gamma$ non-negativity})
  \end{aligned}
\end{empheq}

It will also be convienient to define for every $k\in\NN$ the set
\begin{align*}
  \Valid_n^{\dim\leq k}
  & \df
  \{S\in L_{\FF_q}(\FF_q^n) \mid \dim_{\FF_q}(S)\leq k\}.
\end{align*}

It is clear that for any $\Valid_n\subseteq L_{\FF_q}(\FF_q^n)$ non-empty, if
$k\df\max\{\dim_{\FF_q}(S) \mid S\in\Valid_n\}$, then $\Valid_n\subseteq\Valid_n^{\dim\leq k}$. We
will show completeness of~\eqref{eq:Mdual} for valid sets of the form $\Valid_n^{\dim\leq k}$
($k\in\NN$) and leverage this to show completeness for arbitrary non-empty valid sets
$\Valid_n\subseteq L_{\FF_q}(\FF_q^n)$ that are closed under taking subspaces. We start with the
following key observation.

\noindent\textbf{Key observation:} With valid set $\Valid_n^{\dim\leq k}$, at completeness levels
(i.e., $\ell \ge n$), we must have $\alpha = q^{k\ell}$, and, for the dual to
achieve this optimum value, many variables $\beta(S)$ and $\gamma(S)$ will need to be zero. This
will greatly simplify the dual LP allowing us to establish a recurrence to determine bounds on the
remaining variables proving that they can be taken to be nonnegative thereby implying the
feasibility of the solution.

\begin{theorem}[Exact Completeness from the Dual]\label{thm:completeness_from_dual}
  For every $\ell\geq n$ and every $\Valid_n\subseteq L_{\FF_q}(\FF_q^n)$ non-empty and closed under
  taking subspaces, the optimum value of~\eqref{eq:Mdual} is $q^{\ell k}$, where
  \begin{align*}
    k & \df \max\{\dim_{\FF_q}(S) \mid S\in\Valid_n\}.
  \end{align*}
\end{theorem}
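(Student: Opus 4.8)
The plan is to prove the two inequalities $\mathrm{OPT}\eqref{eq:Mdual} \ge q^{\ell k}$ and $\mathrm{OPT}\eqref{eq:Mdual} \le q^{\ell k}$ separately, reducing throughout to the special valid set $\Valid_n^{\dim\le k}$. For the lower bound, I would exhibit a feasible primal solution of value $q^{\ell k}$: pick any $C\in\Valid_n$ with $\dim_{\FF_q}(C)=k$ and take $\pPr_C[S]\df\One[S=C]$, which is feasible for \eqref{eq:Mprimal} (normalization, validity since $C\in\Valid_n$, downward sums since $\lvert C\rvert^\ell \ge 0$, upward sums trivially) and has objective $\lvert C\rvert^\ell = q^{\ell k}$; weak LP duality then gives $\mathrm{OPT}\eqref{eq:Mdual}\ge q^{\ell k}$. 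Since $\Valid_n\subseteq\Valid_n^{\dim\le k}$, any dual-feasible solution for the larger valid set $\Valid_n^{\dim\le k}$ is also dual-feasible for $\Valid_n$ (fewer equality constraints to satisfy) with the same objective, so it suffices to construct a dual-feasible solution of value $q^{\ell k}$ for $\Valid_n^{\dim\le k}$; this is the content of the remaining (upper bound) direction.

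For the upper bound, I would work with $\Valid_n=\Valid_n^{\dim\le k}$ and try to build $(\alpha,\beta,\gamma)$ feasible for \eqref{eq:Mdual} with $\alpha=q^{\ell k}$. The "Equality to objective" constraint must hold for every subspace $S$ of dimension $\le k$, reading
\begin{align*}
  q^{\ell k}
  =
  \lvert S\rvert^\ell\Bigl(1 + \sum_{T \ge S} \beta(T)\Bigr)
  + \sum_{T\le S}\gamma(T)
  \qquad (\dim S \le k).
\end{align*}
The natural ansatz, exploiting $\GL$- and $S_n$-symmetry already invoked in the paper, is to let $\beta(T)$ and $\gamma(T)$ depend only on $\dim T$; write $\beta(T)=b_{\dim T}$, $\gamma(T)=c_{\dim T}$. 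Using the standard Gaussian-binomial counts $\binom{n-s}{t-s}_q$ for the number of $t$-dimensional $T\supseteq S$ and $\binom{s}{t}_q$ for the number of $t$-dimensional $T\subseteq S$ (when $\dim S=s$), the constraint becomes, for each $s\in\{0,1,\dots,k\}$,
\begin{align*}
  q^{\ell k}
  =
  q^{\ell s}\Bigl(1 + \sum_{t=s}^{n} \binom{n-s}{t-s}_q b_t\Bigr)
  + \sum_{t=0}^{s}\binom{s}{t}_q c_t.
\end{align*}
Here the "key observation" in the excerpt is that many of these variables are forced to vanish: for the top dimension $s=k$ the prefactor $q^{\ell k}$ already equals the target, forcing $1+\sum_{t\ge k}\binom{n-k}{t-k}_q b_t + \sum_{t\le k}\binom{k}{t}_q c_t/q^{\ell k}=1$, and combined with $\beta,\gamma\ge 0$ this pins down $b_t=0$ for $t\ge k$ and $c_t=0$ for $t\le k$. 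More generally one argues that $b_t=0$ for $t\ge k$ and $c_t=0$ for $t\le k$, so the only free variables are $b_0,\dots,b_{k-1}$ and $c_{k+1},\dots,c_n$. With those zero, the $s=k$ equation is automatically satisfied and the equations for $s<k$ involve only $b_0,\dots,b_{k-1}$ on the $\beta$-side and nothing on the $\gamma$-side, giving a triangular linear system in $b_0,\dots,b_{k-1}$ (indexed by $s=0,\dots,k-1$) that determines these uniquely; the constants $c_{k+1},\dots,c_n$ are likewise pinned by the equations — wait, they don't appear in any equation for $s\le k$, so they are genuinely free and can be set to $0$. The crux is then to prove that the resulting $b_0,\dots,b_{k-1}$ are all $\ge 0$.

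I would establish nonnegativity of the $b_s$ by setting up a recurrence. Subtracting consecutive equations (the $s$ equation from a suitable combination, using the Pascal-type identity $\binom{n-s}{t-s}_q = \binom{n-s-1}{t-s-1}_q + q^{t-s-1}\binom{n-s-1}{t-s}_q$ for Gaussian binomials) should eliminate all but finitely many terms and yield a clean recursion expressing, say, $b_{k-1}$ in terms of the gap $q^{\ell k} - q^{\ell s}$ and then $b_{k-2}$ in terms of $b_{k-1}$, and so on; because $q^{\ell k} > q^{\ell s}$ for $s<k$ and all the Gaussian-binomial coefficients and powers of $q$ are positive, an induction downward on $s$ (or upward — whichever the triangular structure dictates) should show each $b_s\ge 0$. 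I expect the main obstacle to be exactly this: verifying the sign of the solution to the triangular system, i.e., getting the recurrence into a manifestly sign-definite form. One should be careful that the recurrence may involve alternating-sign combinations of the $q^{\ell s}$, in which case one reorganizes using the factorization $q^{\ell k}-q^{\ell s} = q^{\ell s}(q^{\ell(k-s)}-1)$ and the identity $q^N - 1 = (q-1)(q^{N-1}+\cdots+1)$ to expose positivity. Once $b_0,\dots,b_{k-1}\ge 0$, $c_t\equiv 0$, $b_t=0$ for $t\ge k$, and $\alpha=q^{\ell k}$ give a dual-feasible point of value $q^{\ell k}$ for $\Valid_n^{\dim\le k}$, hence for $\Valid_n$, completing the upper bound and, with the primal witness above, the theorem. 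Finally I would double-check the edge case $k=0$ (only the zero subspace is valid, value $q^0=1$) and note that the hypothesis "closed under subspaces" is used only to legitimize the \Mobius-inverted formulation \eqref{eq:Mprimal}–\eqref{eq:Mdual}, not in the counting argument itself.
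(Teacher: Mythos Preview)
Your approach is exactly the paper's: reduce to $\Valid_n^{\dim\le k}$, set $\alpha=q^{\ell k}$, $\gamma\equiv 0$, $\beta(T)=b_{\dim T}$ with $b_t=0$ for $t\ge k$, and solve the triangular system for $b_0,\dots,b_{k-1}$ by reverse induction. You correctly flag the nonnegativity of the $b_s$ as the crux, but your sketch of how to resolve it is too vague, and in particular you never say where the hypothesis $\ell\ge n$ enters. This is the one genuine gap.

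The recurrence one obtains (reading the $s$th equality constraint with $\gamma\equiv 0$ and $b_t=0$ for $t\ge k$) is
\[
  b_s \;=\; q^{\ell(k-s)} - 1 \;-\; \sum_{i=s+1}^{k-1}\binom{n-s}{i-s}_q\, b_i,
\]
and the observation that ``$q^{\ell(k-s)}-1>0$ and all Gaussian binomials are positive'' does \emph{not} yield $b_s\ge 0$, since the sum carries a minus sign; your proposed factorization $q^{\ell k}-q^{\ell s}=q^{\ell s}(q^{\ell(k-s)}-1)$ does not help here either. The paper's move is to substitute the recurrence for $b_{s+1}$ back into that for $b_s$, obtaining
\[
  b_s \;=\; q^{\ell(k-s)}\Bigl(1-\tfrac{[n-s]_q}{q^{\ell}}\Bigr) \;+\; \bigl([n-s]_q-1\bigr) \;+\; \sum_{i=s+2}^{k-1}\binom{n-s}{i-s}_q\bigl([i-s]_q-1\bigr)\, b_i,
\]
each term of which is manifestly nonnegative: the first because $[n-s]_q\le q^{n-s}\le q^{\ell}$ (this is precisely where $\ell\ge n$ is used, and the inequality can fail otherwise), the second because $s\le k-2<n$, the coefficients in the sum because $i\ge s+2$, and the $b_i$ by the inductive hypothesis. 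So the induction runs from $b_{k-1}=q^\ell-1\ge 0$ downward, using two consecutive levels at each step rather than one.
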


\begin{proof}
  Let us make the key observation above formal. First note that since
  $\Valid_n\subseteq\Valid_n^{\dim\leq k}$, it follows that~\eqref{eq:Mdual} with $\Valid_n$ has
  less constraints than the same program with $\Valid_n^{\dim\leq k}$, so it suffices to produce a
  feasible solution for~\eqref{eq:Mdual} with $\Valid_n^{\dim\leq k}$ whose value is $\alpha\df
  q^{\ell k}$. Since for every $S\in L_{\FF_q}(\FF_q^n)$ with $\dim_{\FF_q}(S)=k$ we have
  \begin{align*}
    \alpha
    =
    q^{\ell k}
    & =
    \lvert S\rvert^\ell
    + \lvert S\rvert^\ell \sum_{\substack{T\in L_{\FF_q}(\FF_q^n)\\S\subseteq T}} \beta(T)
    + \sum_{\substack{T\in L_{\FF_q}(\FF_q^n)\\ T\subseteq S}} \gamma(T)
    \\
    & =
    \lvert \FF_q\rvert^{\ell k}
    + \lvert\FF_q\rvert^{\ell k} \sum_{\substack{T\in L_{\FF_q}(\FF_q^n)\\S\subseteq T}} \beta(T)
    + \sum_{\substack{T\in L_{\FF_q}(\FF_q^n)\\T\subseteq S}} \gamma(T)
  \end{align*}
  and both $\beta$ and $\gamma$ must be non-negative, we must have $\beta(T) = 0$ whenever
  $\dim_{\FF_q}(T) \geq k$ and  $\gamma(T)=0$ whenever
  $\dim_{\FF_q}(T) \leq k$.

  Let us in fact set $\gamma(T)=0$ for every $T\in L_{\FF_q}(\FF_q^n)$. For $\beta$, it will be
  convenient (and sufficient) to consider $\beta(T) = \widetilde{\beta}_{\dim_{\FF_q}(T)}$, namely, these
  variables will only depend on the dimension. Then for a space $S\in L_{\FF_q}(\FF_q^n)$ of
  dimension $s$, the equality to objective constraint reads
  \begin{align*}
    \alpha = q^{\ell k}
    & =
    \lvert S\rvert^\ell
    + \lvert S\rvert^\ell \sum_{i=\dim_{\FF_q}(S)}^n
    \sum_{\substack{T\in L_{\FF_q}(\FF_q^n)\\ S\subseteq T\\\dim_{\FF_q}(T)=i}} \widetilde{\beta}_i
    \\
    & =
    q^{\ell s}
    + q^{\ell s} \sum_{i=s}^{k-1}
    \sum_{\substack{T\in L_{\FF_q}(\FF_q^n)\\ S\subseteq T\\\dim_{\FF_q}(T)=i}} \widetilde{\beta}_i
    & & (\text{Since $\widetilde{\beta}_i =0$ whenever $i \geq k$.})
    \\
    & =
    q^{\ell s}
    + q^{\ell s} \sum_{i=s}^{k-1} \binom{n-s}{i-s}_q \widetilde{\beta}_i.
  \end{align*}

  Thus, to satisfy all equality to objective constraints, the following recurrence must hold for
  every $s\in\{0,\ldots,k-1\}$:
  \begin{align}\label{eq:widetildebeta}
    \widetilde{\beta}_s
    & =
    q^{\ell(k-s)} - 1 - \sum_{i=s+1}^{k-1} \binom{n-s}{i-s}_q \widetilde{\beta}_i.
  \end{align}

  Our objective is then to prove by reverse induction in $s\in\{0,\ldots,k-1\}$ that defining
  $\widetilde{\beta}$ by~\eqref{eq:widetildebeta} above yields $\widetilde{\beta}_s\geq 0$ for every
  $s\in\{0,\ldots,k-1\}$.

  First note that~\eqref{eq:widetildebeta} for $s=k-1$ yields $\widetilde{\beta}_{k-1}=q^\ell-1\geq
  0$. Suppose now that $s\in\{0,\ldots,k-2\}$ and note that using~\eqref{eq:widetildebeta} for
  $\widetilde{\beta}_{s+1}$ in its version for $\widetilde{\beta}_s$, we get
  \begin{align*}
    \widetilde{\beta}_s
    & =
    q^{\ell(k-s)} - 1 - \sum_{i=s+2}^{k-1} \binom{n-s}{i-s}_q \widetilde{\beta}_i
    - \binom{n-s}{1}_q\left(
    q^{\ell(k-s-1)} - 1 - \sum_{i=s+2}^{k-1} \binom{n-s-1}{i-s-1}_q \widetilde{\beta}_i
    \right)
    \\
    & =
    q^{\ell(k-s)}\left(1 - \frac{[n-s]_q}{q^\ell}\right)
    + [n-s]_q - 1
    + \sum_{i=s+2}^{k-1}\left([n-s]_q\binom{n-s-1}{i-s-1}_q - \binom{n-s}{i-s}_q\right)\widetilde{\beta}_i
    & \geq
    0,
  \end{align*}
  where the inequality follows since
  \begin{align*}
    1 - \frac{[n-s]_q}{q^\ell}
    & \geq
    1 - q^{n-s-\ell}
    \geq
    0
    & & (\text{since $\ell\geq n$}),
    \\
    [n-s]_q - 1
    & \geq 0
    & & (\text{since $s\leq k-2 < n$}),
    \\
    [n-s]_q\binom{n-s-1}{i-s-1}_q - \binom{n-s}{i-s}_q
    & =
    \binom{n-s}{i-s}_q([i-s]_q - 1)
    \geq
    0
    & & (\text{for every $i\geq s+2$}),
  \end{align*}
  and since inductively we have $\widetilde{\beta}_i\geq 0$ for every $i\geq s+2$.

  Thus, we conclude that setting
  \begin{align*}
    \alpha & \df q^{\ell k}, &
    \beta(T) & \df \widetilde{\beta}_{\dim_{\FF_q}(T)}, &
    \gamma(T) & \df 0,
  \end{align*}
  (where $\widetilde{\beta}_s$ is given recursively by~\eqref{eq:widetildebeta} for
  $s\in\{0,\ldots,k-1\}$ and is zero when $s\geq k$) yields a feasible solution of~\eqref{eq:Mdual}
  (for both $\Valid_n$ and $\Valid_n^{\dim\leq k}$) whose value is $q^{\ell k}$.
\end{proof}

\section{Spectral-based Dual solutions for Balanced codes}\label{sec:spectral_construction}

In this section, we construct a spectral-based solution at level $\ell$ for $\epsilon$-balanced codes over
$\FF_2$ whose values are comparable with the MRRW solution. The set of \emph{(linear)
  $\epsilon$-balanced codes (over $\FF_2^n$)} is defined as
\begin{align*}
  \Valid_n^\epsilon
  & \df
  \left\{C\in L_{\FF_2}(\FF_2^n) \;\middle\vert\;
  \forall x\in C\setminus\{0\},
  \left((1-\epsilon)\frac{n}{2}
  \leq
  \lvert x\rvert
  \leq
  (1+\epsilon)\frac{n}{2}\right)
  \right\},
\end{align*}
so we have
\begin{align*}
  \Valid_{n,\ell}^\epsilon
  & =
  \left\{X\in\FF_2^{\ell\times n} \;\middle\vert\;
  \forall u\in\FF_2^\ell,
  \left(
  uX\neq 0\to
  \left(
  (1-\epsilon)\frac{n}{2}
  \leq
  \lvert uX\rvert
  \leq
  (1+\epsilon)\frac{n}{2}\right)
  \right)\right\}.
\end{align*}

We recall that for an $\epsilon$-balanced code, the MRRW bound on the rate is of the form
\begin{align}\label{eq:MRRWrate}
  \frac{1+o(1)}{4}\epsilon^2\lg\frac{1}{\epsilon} + O_\epsilon\left(\frac{\lg(n)}{n}\right)
\end{align}
as $n\to\infty$ and $\epsilon\to 0$ (in the above, the error term $O_\epsilon(\lg(n)/n)$ hides
multiplicative factors dependent on $\epsilon$, but the error term $o(1)$ only hides multiplicative
factors that do not depend on $n$ nor on $\epsilon$). We will retrieve this bound on every constant
level of the hierarchy. However, we point out right away that the error terms hidden are slightly
worse than the MRRW bound and get worse as the level increases.

Recall that the LP~\eqref{eq:LPdual} is symmetric under the action of $S_n$, and so is the solution
we construct. Namely, it is constant on the orbits $\FF_2^{\ell\times n}/S_n$. As it turns out,
$S_n$-orbits can be characterized in terms of \emph{configurations}, defined below
in~\eqref{eq:config_def}. In \cref{subsec:spectral_defs} we develop the language and tools necessary
to work with symmetric functions.

In \cref{subsec:spectral_functions} we construct a family of 
feasible solutions of the form\footnote{In \cref{sec:overview}, we used the notation
$\Gamma$ which relates to $\Lambda$ by $\Gamma = \widehat{\Lambda}$, up to a positive
multiplicative factor.}
\begin{align*}
  f(X)
  & \df
  \frac{\Phi_m(X)\cdot \widehat{\Lambda}^2(X)}{(\widehat{\Phi}_m*\Lambda*\Lambda)(0)},
\end{align*}
where $\Phi_m$ is non-positive on $X\in\Valid_{n,\ell}^\epsilon$, and
$\Lambda(X)\df\One[\config_{n,\ell}(X)=h]$ for some $h\in\Config_{n,\ell}$.

The definition of $\Phi_m$ is given in~\eqref{eq:Phim}, and its necessary properties in \cref{lem:Mm}. It can
be viewed, informally, as the product of $2^{\ell}-1$ cylinders in $\RR^{\FF_2^\ell\setminus\{0\}}$ (see
\cref{fig:tetrahedron,fig:valid,fig:cylinders,fig:tetrahedronVD,fig:validVD,fig:cylindersVD}). Each cylinder is negative on the inside and positive on the outside. The cylinders are
centered and rotated so that every $X\in\Valid_{n,\ell}^\epsilon$ is inside an odd number of cylinders, and
hence $\Phi_m(X)\leq 0$.

\input{cylinders}

In \cref{theo:spectral_construction} we prove that the construction yields a
feasible solution, given that $\Lambda$ satisfies certain conditions.
The theorem also provides an upper bound on the objective value
attained by this construction,
and hence on $|C|^\ell$ for $C\in\Valid_{n}$.

Finally, in \cref{sec:find_good_configs} we find a satisfactory $\Lambda$
by choosing a configuration $h\in\Config_{n,\ell}$, and showing that it
satisfies \cref{theo:spectral_construction} and gives the correct value.

\subsection{Basic definitions and properties}\label{subsec:spectral_defs}

This section is dedicated to basic definitions and properties working up to \cref{lem:Apower}, which
provides an easier formula for the action of powers of the matrix $A_v$ defined below.

For $X\in\FF_q^{\ell\times n}$, the \emph{(Venn diagram) configuration} of $X$ is the function
$\config_{n,\ell}(X)\colon\FF_q^\ell\to\NN$ given by letting for each $u\in\FF_q^\ell$
\begin{align*}
  \config_{n,\ell}(X)(u)
  & \df
  \lvert\{k\in[n] \mid \forall j\in[\ell], X_{jk} = u_j\}\rvert
\end{align*}
be the number of columns of $X$ that are equal to $u$. It is straightforward to check that two
elements $X$ and $Y$ of $\FF_q^{\ell\times n}$ are in the same $S_n$-orbit if and only if
$\config_{n,\ell}(X)=\config_{n,\ell}(Y)$. The set of all configurations is denoted by
\begin{align}\label{eq:config_def}
  \Config_{n,\ell}\df\config_{n,\ell}(\FF_q^{\ell\times n})
  & =
  \{g\colon\FF_q^\ell\to\NN\mid\sum_{u\in\FF_q^\ell} g(u) = n\}.
\end{align}
It will be convenient to use the set
\begin{align*}
  \NConfig_\ell
  & \df
  \left\{G\colon\FF_2^\ell\to\RR_+ \;\middle\vert\;
  \sum_{v\in\FF_2^\ell} G(v)=1\right\}
\end{align*}
of normalized Venn diagram configurations over $\FF_2$ (note that we can naturally interpret
elements of $\NConfig_\ell$ as probability distributions on $\FF_2^\ell$).

For $h\in\Config_{n,\ell}$, we let $A_h\in\RR^{\FF_2^{\ell\times n}\times\FF_2^{\ell\times n}}$ and
$L_h\in\RR^{\FF_2^{\ell\times n}}$ be given by
\begin{align*}
  A_h(x,y) & \df \One[\config_{n,\ell}(x-y) = h], &
  L_h(x) & \df 2^{n\ell} \One[\config_{n,\ell}(x) = h],
\end{align*}
and note that
\begin{align*}
  A_h \Lambda & = L_h * \Lambda.
\end{align*}

For every $u\in\FF_2^\ell\setminus\{0\}$, define $h_u\in\Config_{n,\ell}$ by
\begin{align*}
  h_u(v)
  & \df
  \begin{dcases*}
    1, & if $u = v$,\\
    n-1, & if $u = 0$,\\
    0, & otherwise,
  \end{dcases*}
\end{align*}
and define the shorthand notations $A_u\df A_{h_u}$ and $L_u\df L_{h_u}$.

\begin{lemma}\label{lem:volume}
  For $\ell,n\in\NN_+$ and $g\in\Config_{n,\ell}$, we have
  \begin{align*}
    \lvert\config_{n,\ell}^{-1}(g)\rvert
    & =
    \binom{n}{g}.
  \end{align*}

  In particular, if $G\in\NConfig_\ell$ is such that $G(u) > 0$ for every $u\in\FF_2^\ell$ and
  $n\cdot G\in\Config_{n,\ell}$, then
  \begin{align*}
    \lvert\config_{n,\ell}^{-1}(n\cdot G)\rvert
    =
    (1+o(1))
    \cdot\sqrt{\frac{(2\pi n)^{(1-2^\ell)}}{\prod_{u\in\FF_2^\ell}G(u)}}
    \cdot 2^{H_2(G)\cdot n}
  \end{align*}
  as $n\to\infty$ with $\ell$ fixed, where $H_2(G)$ is the binary entropy of $G$ (as a probability
  distribution over $\FF_2^\ell$).
\end{lemma}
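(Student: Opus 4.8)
The plan is to establish the two assertions in turn: first the exact combinatorial identity $\lvert\config_{n,\ell}^{-1}(g)\rvert=\binom{n}{g}$, and then deduce the stated asymptotic from it by a routine application of Stirling's formula.

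For the exact count, I would argue that specifying a matrix $X\in\FF_q^{\ell\times n}$ with $\config_{n,\ell}(X)=g$ is the same thing as choosing, for each column index $k\in[n]$, a vector $X_{\place k}\in\FF_q^\ell$ in such a way that the resulting multiset of columns contains exactly $g(u)$ copies of each $u\in\FF_q^\ell$. Equivalently, one partitions the index set $[n]$ into labelled blocks $(I_u)_{u\in\FF_q^\ell}$ with $\lvert I_u\rvert=g(u)$ (this is consistent since $g\in\Config_{n,\ell}$ means $\sum_u g(u)=n$) and declares $X_{\place k}=u$ for $k\in I_u$. The number of such partitions is exactly the multinomial coefficient $\binom{n}{g}=n!\big/\prod_{u\in\FF_q^\ell}g(u)!$, which proves the first claim.

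For the asymptotic statement, I would specialize to $q=2$, write $N\df 2^\ell$, and apply Stirling's approximation $m!=(1+o(1))\sqrt{2\pi m}\,(m/e)^m$ to each of the $1+N$ factorials occurring in $\binom{n}{n\cdot G}=n!\big/\prod_{u\in\FF_2^\ell}(nG(u))!$. Two small points must be checked: the hypothesis $n\cdot G\in\Config_{n,\ell}$ is precisely what guarantees each $nG(u)$ is a nonnegative integer, so the factorials make sense, and combined with $G(u)>0$ it forces $nG(u)\to\infty$ for every $u$ along the relevant sequence of $n$'s; moreover, since $\ell$ (hence $N$) is fixed, the product of the $N+1$ individual $(1+o(1))$ error factors is itself $(1+o(1))$, so no uniformity issue arises. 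The exponential part then collapses: the powers of $e$ cancel because $\sum_u G(u)=1$ yields $e^{-\sum_u nG(u)}=e^{-n}$, the powers of $n$ cancel for the same reason since $n^{\sum_u nG(u)}=n^n$, and the surviving factor $\prod_{u\in\FF_2^\ell}G(u)^{-nG(u)}$ equals $2^{H_2(G)\cdot n}$ by definition of the binary entropy $H_2(G)=-\sum_u G(u)\lg G(u)$. The polynomial prefactor is $\sqrt{2\pi n}\big/\prod_{u\in\FF_2^\ell}\sqrt{2\pi n\,G(u)}=\sqrt{(2\pi n)^{1-N}\big/\prod_{u\in\FF_2^\ell}G(u)}$. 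Multiplying the two contributions gives exactly the asserted expression.

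I do not expect any genuine obstacle in this lemma; the content is entirely bookkeeping. The only steps deserving explicit mention are the identification of $\config_{n,\ell}^{-1}(g)$ with labelled partitions of $[n]$ into blocks of sizes $g(u)$ (making the first claim immediate), and the verification that Stirling applies uniformly to all $2^\ell$ factorials in the denominator because $\ell$ is held fixed while $n\to\infty$.
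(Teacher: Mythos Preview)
Your proposal is correct and follows essentially the same approach as the paper: identify $\config_{n,\ell}^{-1}(g)$ with labelled partitions of $[n]$ to obtain the multinomial coefficient, then apply Stirling's approximation termwise for the asymptotic. If anything, you spell out more carefully than the paper why the error factors combine to $(1+o(1))$ and how the exponential and polynomial parts simplify.
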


\begin{proof}
  By definition, every $X\in\config_{n,\ell}^{-1}(g)$ must be such that for every
  $u\in\FF_2^\ell$, exactly $g(u)$ of the $n$ columns of $X$ must be equal to $u$. Thus, we conclude
  that
  \begin{align*}
    \lvert\config_{n,\ell}^{-1}(g)\rvert
    & =
    \binom{n}{g}
    =
    \frac{n!}{\prod_{u\in\FF_2^\ell} g(u)!}.
  \end{align*}

  Finally, if $G\in\NConfig_\ell$ is such that $G(u) > 0$ for every $u\in\FF_2^\ell$ and $n\cdot
  G\in\Config_{n,\ell}$, then
  \begin{align*}
    \lvert\config_{n,\ell}^{-1}(n\cdot G)\rvert
    & =
    \binom{n}{n\cdot G}
    =
    (1+o(1))
    \cdot\sqrt{\frac{(2\pi n)^{(1-2^\ell)}}{\prod_{u\in\FF_2^\ell}G(u)}}
    \cdot\frac{1}{\prod_{u\in\FF_2^\ell} G(u)^{G(u)\cdot n}}
    \\
    & =
    (1+o(1))
    \cdot\sqrt{\frac{(2\pi n)^{(1-2^\ell)}}{\prod_{u\in\FF_2^\ell}G(u)}}
    \cdot 2^{H_2(G)\cdot n},
  \end{align*}
  where the second equality follows from Stirling's Approximation.
\end{proof}

\begin{lemma}\label{lem:cFgh}
  Let $g,h\in\Config_{n,\ell}$ and let
  \begin{align}\label{eq:cFgh}
    \cF_{g,h}
    & \df
    \left\{F\colon \FF_2^\ell\times\FF_2^\ell\to\NN \;\middle\vert\;
    \sum_{u\in\FF_2^\ell} F(u,\place) = g\land
    \sum_{v\in\FF_2^\ell} F(\place,v) = h
    \right\}.
  \end{align}

  Then the following hold for $Y\in\config_{n,\ell}^{-1}(g)$.
  \begin{enumerate}
  \item\label{lem:cFgh:supp} For every $X\in\config_{n,\ell}^{-1}(h)$, let
    $F_X\colon\FF_2^\ell\times\FF_2^\ell\to\NN$ be given by letting
    \begin{align}\label{eq:cFgh:Fx}
      F_X(u,v)
      & \df
      \lvert\{k\in[n] \mid \forall j\in[\ell], (X_{jk} = u_j\land Y_{jk} = v_j)\}\rvert
    \end{align}
    be the number of indices $k\in[n]$ such that the $k$th column of $X$ is $u$ and the $k$th column
    of $Y$ is $v$. Then $F_X\in\cF_{g,h}$.
  \item\label{lem:cFgh:count} For $F\in\cF_{g,h}$, we have
    \begin{align*}
      \lvert\{X\in\config_{n,\ell}^{-1}(h) \mid F_X=F\}\rvert
      & =
      \prod_{v\in\FF_2^\ell}\binom{g(v)}{F(\place,v)},
    \end{align*}
    where $F_X$ is given by~\eqref{eq:cFgh:Fx}.
  \end{enumerate}
\end{lemma}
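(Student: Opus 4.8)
The plan is to prove both parts by direct counting, organized by the fibers of $\config_{n,\ell}$. Throughout I fix $Y\in\config_{n,\ell}^{-1}(g)$ and recall that, for a matrix $X\in\FF_2^{\ell\times n}$, the quantity $F_X(u,v)$ of~\eqref{eq:cFgh:Fx} is the number of indices $k\in[n]$ for which column $k$ of $X$ equals $u$ and column $k$ of $Y$ equals $v$; in particular $F_X$ takes values in $\NN$.

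For \cref{lem:cFgh:supp}, I would compute the two marginals of $F_X$ by summing out one coordinate. Summing over $u$, each index $k$ with column $k$ of $Y$ equal to $v$ is counted exactly once, so
\[
  \sum_{u\in\FF_2^\ell} F_X(u,v) = \config_{n,\ell}(Y)(v) = g(v);
\]
summing over $v$, each index $k$ with column $k$ of $X$ equal to $u$ is counted exactly once, so $\sum_{v\in\FF_2^\ell} F_X(u,v)=\config_{n,\ell}(X)(u)=h(u)$, where the last equality uses $X\in\config_{n,\ell}^{-1}(h)$. Thus $\sum_u F_X(u,\place)=g$ and $\sum_v F_X(\place,v)=h$, i.e.\ $F_X\in\cF_{g,h}$.

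For \cref{lem:cFgh:count}, I would partition $[n]$ according to the column value of the fixed matrix $Y$: set $K_v\df\{k\in[n] \mid \text{column $k$ of $Y$ equals $v$}\}$, so $\lvert K_v\rvert=g(v)$ and $[n]=\bigsqcup_{v\in\FF_2^\ell}K_v$. Specifying a matrix $X\in\FF_2^{\ell\times n}$ with $F_X=F$ then amounts to choosing column $k$ of $X$ for each $k\in[n]$, subject --- for each $v$ separately, and independently across different $v$ --- to the requirement that exactly $F(u,v)$ of the indices $k\in K_v$ receive column value $u$. Because $F\in\cF_{g,h}$ gives $\sum_{u}F(u,v)=g(v)=\lvert K_v\rvert$, the number of such choices on the block $K_v$ is the multinomial coefficient $\binom{g(v)}{F(\place,v)}=g(v)!/\prod_{u}F(u,v)!$, and since the blocks are disjoint these counts multiply to $\prod_{v\in\FF_2^\ell}\binom{g(v)}{F(\place,v)}$. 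It remains to observe that any $X$ obtained this way automatically lies in $\config_{n,\ell}^{-1}(h)$, since $\config_{n,\ell}(X)(u)=\sum_{v}\lvert\{k\in K_v \mid \text{column $k$ of $X$ equals $u$}\}\rvert=\sum_{v}F(u,v)=h(u)$, again by $F\in\cF_{g,h}$; hence $\{X\in\config_{n,\ell}^{-1}(h) \mid F_X=F\}$ is exactly the set just counted, and the claimed identity follows.

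I do not expect a real obstacle: the argument is a two-line marginalization followed by one block-wise multinomial count. The only place where care is needed is keeping the marginal conventions consistent --- checking that summing the first coordinate of $F_X$ reproduces the configuration $g$ of the \emph{fixed} matrix $Y$, while summing the second reproduces the configuration $h$ of the \emph{varying} matrix $X$ --- together with the trivial but necessary remark that $\binom{g(v)}{F(\place,v)}$ is well-defined precisely because $\sum_u F(u,v)=g(v)$, one of the two defining conditions of $\cF_{g,h}$.
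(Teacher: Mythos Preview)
Your proof is correct and follows essentially the same approach as the paper: both parts are handled by marginalizing $F_X$ to recover $g$ and $h$, and then by partitioning $[n]$ according to the columns of $Y$ and counting via a product of multinomials. Your explicit verification that any $X$ produced by the block count automatically has configuration $h$ is a small clarification the paper leaves implicit.
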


\begin{proof}
  \Cref{lem:cFgh:supp} follows since for every $v\in\FF_2^\ell$, we have
  \begin{align*}
    \sum_{u\in\FF_2^\ell} F_X(u,v)
    & =
    \lvert\{k\in[n] \mid \forall j\in[\ell], Y_{jk} = v_j\}\rvert
    =
    \config_{n,\ell}(Y)(v)
    =
    g(v)
  \end{align*}
  and for every $u\in\FF_2^\ell$, we have
  \begin{align*}
    \sum_{v\in\FF_2^\ell} F_x(u,v)
    & =
    \lvert\{k\in[n] \mid \forall j\in[\ell], X_{jk} = u_j\}\rvert
    =
    \config_{n,\ell}(X)(u)
    =
    h(u).
  \end{align*}

  For \cref{lem:cFgh:count}, we note that to count the number of
  $X\in\config_{n,\ell}^{-1}(h)$ with $F_X=F$, we consider $[n]$ partitioned naturally into
  $2^\ell$ parts indexed by $v\in\FF_2^\ell$ as
  \begin{align*}
    P_v & \df \{k\in[n] \mid \forall j\in[\ell], Y_{jk} = v_j\}
  \end{align*}
  and note that to get $F_X=F$ for each $u\in\FF_2^\ell$, we must have exactly $F(u,v)$ elements of
  $P_v$ in
  \begin{align*}
    \{k\in[n] \mid \forall j\in[\ell], (X_{jk} = u_j\land Y_{jk} = v_j)\},
  \end{align*}
  since the above are pairwise disjoint and $\lvert P_v\rvert=g(v)$, we conclude that the number of
  such choices amounts to the multinomial
  \begin{align*}
    \binom{g(v)}{F(\place,v)}
  \end{align*}
  (recall that $\sum_{u\in\FF_2^\ell} F(u,v)=g(v)$, so the multinomial above is non-zero). Since all
  such choices are independent for the different $v\in\FF_2^\ell$, we conclude that
  \begin{align*}
    \lvert\{X\in\config_{n,\ell}^{-1}(h) \mid F_X=F\}\rvert
    & =
    \prod_{v\in\FF_2^\ell}\binom{g(v)}{F(\place,v)},
  \end{align*}
  as desired.
\end{proof}

\begin{lemma}\label{lem:Ahpsi}
  Let $\Psi\colon\Config_{n,\ell}\to\RR$, let $\psi\df\Psi\comp\config_{n,\ell}$, let
  $g,h\in\Config_{n,\ell}$ and let $Y\in\config_{n,\ell}^{-1}(g)$. Then
  \begin{align*}
    A_h \psi(Y)
    & =
    \sum_{F\in\cF_{g,h}} \prod_{w\in\FF_2^\ell}\binom{g(w)}{F(\place,w)}
    \Psi(g+\Delta_F),
  \end{align*}
  where $\cF_{g,h}$ is given by~\eqref{eq:cFgh} and
  \begin{align*}
    \Delta_F(v) & \df \sum_{u\in\FF_2^\ell} (F(u,u+v) - F(u,v)).
  \end{align*}
\end{lemma}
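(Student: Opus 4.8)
The plan is to evaluate $A_h\psi(Y)$ straight from the definitions and then re-index the resulting sum by the ``overlap pattern'' between $Y$ and the shifting matrix, invoking~\cref{lem:cFgh}. Recalling that $A_h(Y,x)=\One[\config_{n,\ell}(Y-x)=h]$ and that $\psi=\Psi\comp\config_{n,\ell}$, I would first make the substitution $z\df Y-x$, which is a bijection of $\FF_2^{\ell\times n}$ with inverse $x=Y-z$, to get
\[
  A_h\psi(Y)
  =\sum_{x\in\FF_2^{\ell\times n}}\One[\config_{n,\ell}(Y-x)=h]\cdot\psi(x)
  =\sum_{z\in\config_{n,\ell}^{-1}(h)}\Psi\bigl(\config_{n,\ell}(Y-z)\bigr).
\]

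Next I would express $\config_{n,\ell}(Y-z)$ through the overlap pattern $F_z$ of~\cref{lem:cFgh}, taking $z$ in the role of the variable ``$X$'' there (and keeping the same ``$Y$''). By~\cref{lem:cFgh:supp}, $F_z\in\cF_{g,h}$. Since column $k$ of $Y-z$ equals $w$ precisely when its $Y$-column $v$ and its $z$-column $u$ satisfy $v=u+w$ over $\FF_2$, one obtains, for every $w\in\FF_2^\ell$,
\[
  \config_{n,\ell}(Y-z)(w)=\sum_{u\in\FF_2^\ell}F_z(u,u+w).
\]
A one-line check — combining $g(w)=\sum_u F_z(u,w)$ with the definition of $\Delta_{F_z}$ — identifies the right-hand side with $(g+\Delta_{F_z})(w)$; moreover $g+\Delta_{F_z}$ has non-negative entries summing to $n$, so it lies in $\Config_{n,\ell}$ and $\Psi$ is legitimately applied to it. In particular $\config_{n,\ell}(Y-z)$ depends on $z$ only through $F_z$.

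Finally I would group the terms in the sum over $z$ by the value of $F_z$. Since~\cref{lem:cFgh:count} gives $\lvert\{z\in\config_{n,\ell}^{-1}(h)\mid F_z=F\}\rvert=\prod_{w\in\FF_2^\ell}\binom{g(w)}{F(\place,w)}$ for each $F\in\cF_{g,h}$, and since all overlap patterns $F_z$ lie in $\cF_{g,h}$ by the previous paragraph, collecting terms yields
\[
  A_h\psi(Y)=\sum_{F\in\cF_{g,h}}\prod_{w\in\FF_2^\ell}\binom{g(w)}{F(\place,w)}\cdot\Psi(g+\Delta_F),
\]
which is exactly the asserted identity. None of these steps is genuinely hard; the only points requiring care are keeping the bookkeeping of the overlap pattern consistent with~\cref{lem:cFgh} (which is phrased with $Y$ fixed and the other matrix varying, so the shift $z$ must be put in the role of ``$X$''), and carrying out the $\FF_2$-arithmetic correctly in the substitution $v=u+w$ that produces the $\Delta_F$ formula.
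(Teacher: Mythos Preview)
Your proof is correct and follows essentially the same approach as the paper: rewrite $A_h\psi(Y)$ as a sum over matrices of configuration $h$, group by the joint overlap pattern with $Y$ via \cref{lem:cFgh}, and verify that $\config_{n,\ell}$ of the translate equals $g+\Delta_F$. The only cosmetic difference is that the paper writes the translate as $X+Y$ rather than $Y-z$, which is of course the same thing over $\FF_2$.
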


\begin{proof}
  First note that
  \begin{align*}
    A_h\psi(Y)
    & =
    \sum_{\substack{Z\in\FF_2^{\ell\times n}\\\config_{n,\ell}(Z-Y)=h}} \psi(Z)
    =
    \sum_{X\in\config_{n,\ell}^{-1}(h)} \psi(X+Y).
  \end{align*}

  We now split the sum above based on the joint configuration of $X$ and $Y$, that is, for
  $X\in\config_{n,\ell}^{-1}(h)$, we let $F_X\colon\FF_2^\ell\times\FF_2^\ell\to\NN$ be given
  by~\eqref{eq:cFgh:Fx}, i.e., we have
  \begin{align*}
    F_X(u,v)
    & \df
    \lvert\{k\in[n] \mid \forall j\in[\ell], (X_{jk} = u_j\land Y_{jk} = v_j)\}\rvert.
  \end{align*}

  Note that sets in the above partition $[n]$ naturally into $2^\ell\times 2^\ell$ parts indexed by
  $(u,v)\in\FF_2^\ell\times\FF_2^\ell$. Recalling that $\config_{n,\ell}(Y)=g$, we note that
  \begin{align*}
    \config_{n,\ell}(X+Y)(v)
    & =
    \lvert\{k\in[n] \mid \forall j\in[\ell], (X+Y)_{jk}=v_j\}\rvert
    \\
    & =
    \sum_{u\in\FF_2^\ell}
    \lvert\{k\in[n] \mid \forall j\in[\ell], (X_{jk}=u_j\land Y_{jk}=u_j+v_j)\}\rvert
    \\
    & =
    g(v) + \Delta_F(v),
  \end{align*}
  where the last equality follows since $\sum_{u\in\FF_2^\ell} F(u,v) = g(v)$.
\end{proof}

\begin{lemma}\label{lem:Apower}
  Let $v\in\FF_2^\ell\setminus\{0\}$ and $g_0\in\Config_{n,\ell}$ be such that for every
  $u\in\FF_2^\ell$, if $g_0(u)\neq 0$, then $g_0(u)\geq\Omega(n)$. Let also
  $\Lambda\df\One_{\config_{n,\ell}^{-1}(g_0)}$ and $X\in\config_{n,\ell}^{-1}(g_0)$.

  Then
  \begin{align*}
    A_v^m \Lambda(X)
    & =
    \sum_{F\in\cF_{m,v}} \binom{m}{F}\prod_{u\in\FF_2^\ell} g_0(u)^{F(u)} + o(n^m),
  \end{align*}
  as $n\to\infty$ with $m$ and $\ell$ fixed, where
  \begin{align}\label{eq:cFmv}
    \cF_{m,v}
    & \df
    \left\{F\colon\FF_2^\ell\to\NN
    \;\middle\vert\;
    \sum_{u\in\FF_2^\ell} F(u) = m\land
    \forall u\in\FF_2^\ell, F(u+v) = F(u)
    \right\}.
  \end{align}
\end{lemma}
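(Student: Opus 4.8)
The plan is to unwind $A_v^m\Lambda(X)$ combinatorially as a count of walks. Recall that $A_v=A_{h_v}$ is the adjacency matrix of the graph on $\FF_2^{\ell\times n}$ in which $Y$ and $Y'$ are adjacent exactly when $Y-Y'$ has a single column equal to $v$ and all its remaining columns equal to $0$, so $A_v^m\Lambda(X)=\sum_{Y}(A_v^m)(X,Y)\Lambda(Y)$ is the number of length-$m$ walks in this graph that start at $X$ and end inside $\config_{n,\ell}^{-1}(g_0)$. Each such walk is encoded bijectively by the sequence $\mathbf k=(k_1,\dots,k_m)\in[n]^m$ of columns updated at the successive steps, and, working over $\FF_2$, if $S\subseteq[n]$ is the set of indices appearing an odd number of times in $\mathbf k$, then the endpoint $Y$ of the walk satisfies $Y_{jk}=X_{jk}$ for $k\notin S$ and $Y_{jk}=X_{jk}+v_j$ for $k\in S$. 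Hence, setting $a_w\df\lvert\{k\in S\mid\forall j\in[\ell],\ X_{jk}=w_j\}\rvert$ for $w\in\FF_2^\ell$ and using $X\in\config_{n,\ell}^{-1}(g_0)$, the configuration of $Y$ is $w\mapsto g_0(w)-a_w+a_{w+v}$, so the walk is good if and only if $a_w=a_{w+v}$ for every $w$.

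Next I would split the sequences $\mathbf k$ according to whether their entries are pairwise distinct. For fixed $m$, the number of $\mathbf k\in[n]^m$ with fewer than $m$ distinct entries is $O(n^{m-1})$, so the good walks among them contribute only $o(n^m)$ and may be discarded. For a sequence with pairwise distinct entries, $S=\{k_1,\dots,k_m\}$, and the function $F\colon\FF_2^\ell\to\NN$ defined by $F(w)\df\lvert\{i\in[m]\mid\forall j\in[\ell],\ X_{jk_i}=w_j\}\rvert$ satisfies $\sum_{w\in\FF_2^\ell}F(w)=m$ and $a_w=F(w)$; by the previous paragraph the walk is then good precisely when $F(w)=F(w+v)$ for all $w$, i.e.\ when $F\in\cF_{m,v}$ in the sense of~\eqref{eq:cFmv}. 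Counting the distinct sequences that realize a given $F\in\cF_{m,v}$ amounts to choosing a type pattern on the $m$ positions (a multinomial $\binom{m}{F}$) and then, for each $w\in\FF_2^\ell$, an ordered tuple of $F(w)$ distinct columns of $X$ among the $g_0(w)$ equal to $w$ (a falling factorial $g_0(w)(g_0(w)-1)\cdots(g_0(w)-F(w)+1)$, which vanishes automatically when $F(w)>g_0(w)$). Altogether this yields
\begin{align*}
  A_v^m\Lambda(X)
  & =
  \sum_{F\in\cF_{m,v}}\binom{m}{F}\prod_{w\in\FF_2^\ell}\,g_0(w)\,(g_0(w)-1)\cdots(g_0(w)-F(w)+1)\ +\ o(n^m).
\end{align*}

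Finally I would convert the falling factorials into plain powers. By hypothesis each $g_0(w)$ is either $0$ or $\Omega(n)$: in the first case only $F(w)=0$ contributes and both the falling factorial and $g_0(w)^{F(w)}$ equal $1$ (with the usual convention $0^0=1$); in the second case $g_0(w)(g_0(w)-1)\cdots(g_0(w)-F(w)+1)=g_0(w)^{F(w)}\bigl(1-O(1/n)\bigr)=g_0(w)^{F(w)}+O(n^{F(w)-1})$. Multiplying over the at most $2^\ell$ values of $w$ and using $\sum_w F(w)=m$ gives $\prod_w g_0(w)(g_0(w)-1)\cdots(g_0(w)-F(w)+1)=\prod_w g_0(w)^{F(w)}+O(n^{m-1})$, and since $\cF_{m,v}$ is finite with $\binom{m}{F}$ bounded, this error together with the previously discarded $o(n^m)$ is absorbed into $o(n^m)$, which is exactly the claimed identity.

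I expect the only real obstacle to be the bookkeeping of lower-order terms: bounding the repeated-index walks by $O(n^{m-1})$ and controlling the falling-factorial-to-power conversion uniformly over $F\in\cF_{m,v}$. Both steps are routine once the structural hypothesis $g_0(w)\in\{0\}\cup[\Omega(n),n]$ is invoked, and the rest is a direct bijective count; alternatively one could iterate \cref{lem:Ahpsi}, but unwinding the resulting $m$-step recursion on configurations seems strictly more cumbersome than the walk count above.
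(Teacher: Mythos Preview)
Your argument is correct. Both your proof and the paper's arrive at the same walk count, but they parameterize the walks differently and handle the error term in different places.

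The paper does not work with the column-index sequences $(k_1,\dots,k_m)\in[n]^m$ directly. Instead it first specializes \cref{lem:Ahpsi} to $h=h_v$ to obtain the one-step identity $A_v\psi(Y)=\sum_{t\in\FF_2^\ell} g(t)\,\Psi(g+\One_{\{v+t\}}-\One_{\{t\}})$, and then iterates this $m$ times to get a sum over type sequences $t\in(\FF_2^\ell)^m$ with coefficient $\prod_{j}g_{t,j-1}(t_j)$, where $g_{t,j}$ is the configuration after $j$ steps. The $\Omega(n)$ hypothesis is used to argue that each $g_{t,j}(u)=g_0(u)+O(1)$, so the coefficient is $\prod_j g_0(t_j)+o(n^m)$; finally the returning condition $(g_0)_{t,m}=g_0$ is seen to be exactly $F_t\in\cF_{m,v}$, where $F_t(u)=\lvert t^{-1}(u)\rvert$, and the multinomial count follows.

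So your decomposition is by the pair (distinct vs.\ repeated indices, falling factorials vs.\ powers), whereas the paper's is by (exact evolving configuration vs.\ frozen configuration). Your route is more self-contained in that it bypasses \cref{lem:Ahpsi} entirely and makes the walk interpretation explicit; the paper's route has the advantage that the single-step formula through \cref{lem:Ahpsi} is reusable for general $A_h$, not just $A_v$, and the error analysis is packaged into the single approximation $g_{t,j}(u)\approx g_0(u)$ rather than two separate steps. Either way the bookkeeping is of the same order of difficulty.
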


\begin{proof}
  Applying \cref{lem:Ahpsi} for the particular case when $h = h_v$, every $F\in\cF_{g,h}$ is of
  the form $F=F_t$ for some $t\in\FF_2^\ell$, where
  \begin{align*}
    F_t(u,w) & \df
    \begin{dcases*}
      1, & if $u = v$ and $w=t$,\\
      g(t)-1, & if $u = 0$ and $w=t$,\\
      g(w), & if $u = 0$ and $w\neq t$,\\
      0, & otherwise.
    \end{dcases*}
  \end{align*}
  Furthermore, note that we have $\Delta_{F_t} = \One_{\{v+t\}} - \One_{\{t\}}$ and
  \begin{align*}
    \prod_{w\in\FF_2^\ell} \binom{g(w)}{F_t(\place,w)} & = g(t).
  \end{align*}

  Thus we have
  \begin{align*}
    A_v \psi(Y) & = \sum_{t\in\FF_2^\ell} g(t)\Psi(g + \One_{\{v+t\}} - \One_{\{t\}})
  \end{align*}
  and with a simple induction, we have
  \begin{align*}
    A_v^m \psi(Y)
    & =
    \sum_{t\in T_m(g)} \left(\prod_{j=1}^m g_{t,j-1}(t_j)\right)\Psi(g_{t,m}),
  \end{align*}
  where
  \begin{align*}
    g_{t,j} & \df g + \sum_{k=1}^{j-1} (\One_{\{v+t_k\}} - \One_{\{t_k\}}),\\
    T_m(g)
    & \df
    \{t\in(\FF_2^\ell)^m \mid \forall j\in[m], g_{t,j}\in\Config_{n,\ell}, g_{t,j-1}(t_j)\neq 0\}.
  \end{align*}

  For our particular case, we have $\psi = \Lambda = \One_{\config_{n,\ell}^{-1}(g_0)}$ and $\Psi =
  \One_{\{g_0\}}$ and since $m$ is constant and for every $u\in\FF_2^\ell$, if $g_0(u)\neq 0$, then
  $g_0(u)\geq\Omega(n)$, it follows that for $n$ sufficiently large, we have $T_m(g_0) =
  (\FF_2^\ell)^m$ and for every $t\in (\FF_2^\ell)^m$, $j\in[m]$ and $u\in\FF_2^\ell$, we have
  $(g_0)_{t,j}(u) = g_0(u) + o(n)$. Thus, since $X\in\config_{n,\ell}^{-1}(g_0)$, we have
  \begin{align*}
    A_v^m \psi(X)
    & =
    \sum_{t\in(\FF_2^\ell)^m} \left(\prod_{j=1}^m g_0(t_j)\right) \Psi((g_0)_{t,m})
    +
    o(n^m),
  \end{align*}
  where the error term follows since both $m$ and $\ell$ are constants. Thus, we get
  \begin{align*}
    A_v^m \Lambda(X)
    & =
    \sum_{t\in T} \prod_{j=1}^m g_0(t_j) + o(n^m),
  \end{align*}
  where
  \begin{align*}
    T
    & \df
    \{t\in(\FF_2^\ell)^m \mid (g_0)_{t,m} = g_0\}.
  \end{align*}

  For each $t\in T$, let us define a function $F_t\colon\FF_2^\ell\to\NN$ by $F_t(u)\df\lvert
  t^{-1}(u)\rvert$. Note that we must have $\sum_{u\in\FF_2^\ell} F_t(u) = m$ and since $(g_0)_{t,m}
  = g_0$, we must have
  \begin{align*}
    \sum_{u\in\FF_2^\ell} F_t(u)(\One_{\{v+u\}} - \One_{\{u\}}) = 0,
  \end{align*}
  which is equivalent to
  \begin{align*}
    \forall u\in\FF_2^\ell, F_t(u+v) = F_t(u).
  \end{align*}

  It is straightforward to check that for $\cF_{m,v}$ as in~\eqref{eq:cFmv}, we have $\{F_t \mid t\in
  T\} = \cF_{m,v}$ and that for each $F\in\cF_{m,v}$, we have
  \begin{align*}
    \lvert\{t\in T \mid F_t = F\}\rvert
    & =
    \binom{m}{F}.
  \end{align*}
  Thus, we get
  \begin{align*}
    A_v^m \Lambda(X)
    & =
    \sum_{F\in\cF_{m,v}} \binom{m}{F}\prod_{u\in\FF_2^\ell} g_0(u)^{F(u)} + o(n^m),
  \end{align*}
  as desired.
\end{proof}

\subsection{The key functions and matrices}\label{subsec:spectral_functions}

In this section, we provide an abstract way of constructing dual solutions
(\cref{theo:spectral_construction}). We refer the reader to \cref{subsec:overview:spectral} for an
informal description.

Given $\ell,n\in\NN_+$ and $\epsilon\in(0,1)$, for every $m\in\NN$ and every
$u\in\FF_2^\ell\setminus\{0\}$, we let
\begin{align*}
  \phi_{m,u}(X)
  & \df
  \sum_{\substack{v\in\FF_2^\ell\\\langle u,v\rangle=1}}
  \bigl((n - 2\lvert vX\rvert)^m - (\epsilon n)^m\bigr),
  \\
  B_{m,u}
  & \df
  \sum_{\substack{v\in\FF_2^\ell\\\langle u,v\rangle=1}}
  (A_v^m - (\epsilon n)^m I),
\end{align*}
where $\langle u,v\rangle\df\sum_{j\in[\ell]} u_j v_j$.

We also define
\begin{align}\label{eq:Phim}
  \Phi_m & \df \prod_{u\in\FF_2^\ell\setminus\{0\}} \phi_{m,u}, &
  M_m & \df \prod_{u\in\FF_2^\ell\setminus\{0\}} B_{m,u}.
\end{align}
Note that these definitions ensure that
\begin{align}\label{eq:hatPhim}
  2^{n\ell}\widehat{\Phi}_m * \Lambda & = M_m \Lambda
\end{align}
for every $\Lambda\colon\FF_2^{\ell\times n}\to\RR$.

\begin{lemma}\label{lem:constraints}
  For every $u\in\FF_2^\ell\setminus\{0\}$, every $X\in\Valid_{n,\ell}^\epsilon$ and every $m$ even
  such that
  \begin{align}\label{lem:constraints:m}
    m
    & \geq
    \frac{\ell-1}{\lg(1/\epsilon)},
  \end{align}
  where $\lg\df\log_2$ is the binary log, the following hold.
  \begin{enumerate}
  \item\label{lem:constraints:nonnegative} If there exists $v\in\FF_2^\ell$ with $\langle u,v\rangle=1$ and
    $vX = 0$, then $\phi_{m,u}(X)\geq 0$.
  \item\label{lem:constraints:nonpositive} If $vX\neq 0$ for every $v\in\FF_2^\ell$ with
    $\langle u,v\rangle=1$, then $\phi_{m,u}(X)\leq 0$.
  \item\label{lem:constraints:final} If $X\neq 0$, then $\Phi_m(X)\leq 0$.
  \item\label{lem:constraints:value} We have
    \begin{align*}
      \Phi_m(0) & = (2^{\ell-1} (1-\epsilon^m) n^m)^{2^\ell-1}.
    \end{align*}
  \end{enumerate}
\end{lemma}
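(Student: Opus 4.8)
The plan is to hang the whole argument on the subspace $K_X \df \{v\in\FF_2^\ell : vX = 0\}$, the kernel of $X$ regarded as the linear map $\FF_2^\ell\to\FF_2^n$, $v\mapsto vX$; note $K_X$ is a \emph{proper} subspace exactly when $X\neq 0$. Before the case analysis I would record two elementary facts used throughout: for $u\neq 0$ the functional $v\mapsto\langle u,v\rangle$ is nonzero, so $\lvert\{v\in\FF_2^\ell:\langle u,v\rangle=1\}\rvert = 2^{\ell-1}$; and if $X\in\Valid_{n,\ell}^\epsilon$ and $vX\neq 0$, then $\lvert n-2\lvert vX\rvert\rvert\le\epsilon n$, hence, since $m$ is even, $0\le(n-2\lvert vX\rvert)^m\le(\epsilon n)^m$ (and $1-\epsilon^m\ge 0$ since $\epsilon\in(0,1)$).

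Part \ref{lem:constraints:nonpositive} is then immediate: its hypothesis says every $v$ with $\langle u,v\rangle=1$ has $vX\neq 0$, so every summand $(n-2\lvert vX\rvert)^m-(\epsilon n)^m$ defining $\phi_{m,u}(X)$ is $\le 0$. For Part \ref{lem:constraints:nonnegative} I would single out the summand coming from some $v_0$ with $\langle u,v_0\rangle=1$ and $v_0X=0$, which equals $n^m(1-\epsilon^m)\ge 0$, and bound each of the remaining $2^{\ell-1}-1$ summands below by $-(\epsilon n)^m$ using $(n-2\lvert vX\rvert)^m\ge 0$; this yields $\phi_{m,u}(X)\ge n^m\bigl(1-2^{\ell-1}\epsilon^m\bigr)$, and $1-2^{\ell-1}\epsilon^m\ge 0$ is precisely hypothesis \eqref{lem:constraints:m} rewritten as $\ell-1\le m\lg(1/\epsilon)$. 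Matching this crude constant $2^{\ell-1}$ against the threshold on $m$ is the one quantitatively delicate point; everything else is bookkeeping.

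For Part \ref{lem:constraints:final} the key observation is that the hypotheses of Parts \ref{lem:constraints:nonnegative} and \ref{lem:constraints:nonpositive} are exactly complementary in $u$: ``some $v\in K_X$ has $\langle u,v\rangle=1$'' fails precisely when $u\in K_X^\perp$, which is exactly ``$K_X\subseteq u^\perp$''. Hence for every $u\in\FF_2^\ell\setminus\{0\}$ we have $\phi_{m,u}(X)\le 0$ if $u\in K_X^\perp$ and $\phi_{m,u}(X)\ge 0$ otherwise. If any factor of $\Phi_m(X)=\prod_{u\neq 0}\phi_{m,u}(X)$ vanishes we are done; otherwise the strictly negative factors are exactly those with $u\in K_X^\perp\setminus\{0\}$, which number $\lvert K_X^\perp\rvert-1 = 2^{\ell-\dim K_X}-1$, an odd integer since $X\neq 0$ forces $\dim K_X\le\ell-1$, so $\Phi_m(X)<0$. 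This is the ``odd number of cylinders'' phenomenon from the overview made precise, and it is the conceptual heart of the lemma. Finally, Part \ref{lem:constraints:value} is a one-line computation: at $X=0$ all $vX$ vanish, so $\phi_{m,u}(0)=2^{\ell-1}(1-\epsilon^m)n^m$ for each of the $2^\ell-1$ nonzero $u$, and taking the product gives $\Phi_m(0)=\bigl(2^{\ell-1}(1-\epsilon^m)n^m\bigr)^{2^\ell-1}$.
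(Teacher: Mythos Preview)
Your proof is correct and follows essentially the same approach as the paper: the paper likewise uses the kernel $V=\{v:vX=0\}$ (your $K_X$), proves Parts~\ref{lem:constraints:nonnegative} and~\ref{lem:constraints:nonpositive} by the same termwise bounds yielding $\phi_{m,u}(X)\ge n^m-2^{\ell-1}(\epsilon n)^m$ and $\phi_{m,u}(X)\le 0$ respectively, and then deduces Part~\ref{lem:constraints:final} from the fact that $\lvert V^\perp\setminus\{0\}\rvert$ is odd when $X\neq 0$. Your handling of the zero-factor case and the explicit count $2^{\ell-\dim K_X}-1$ are cosmetic variations on the same argument.
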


\begin{proof}
  For \cref{lem:constraints:nonnegative}, note that since $m$ is even and $\langle u,v\rangle =
  1$, we have
  \begin{align*}
    \phi_{m,u}(X)
    & =
    \sum_{\substack{v'\in\FF_2^\ell\\\langle u,v'\rangle=1}}
    \bigl((n - 2\lvert v'X\rvert)^m - (\epsilon n)^m\bigr)
    \geq
    (n - 2 \lvert vX\rvert)^m
    - 2^{\ell-1}\cdot (\epsilon n)^m
    \\
    & \geq
    n^m - 2^{\ell-1}\cdot (\epsilon n)^m
    \geq
    0,
  \end{align*}
  where the last inequality follows from~\eqref{lem:constraints:m}.

  \medskip

  For \cref{lem:constraints:nonpositive}, note that since $X\in\Valid_{n,\ell}^\epsilon$ and
  \begin{align*}
    \phi_{m,u}(X)
    & =
    \sum_{\substack{v'\in\FF_2^\ell\\\langle u,v'\rangle=1}}
    \bigl((n - 2\lvert v'X\rvert)^m - (\epsilon n)^m\bigr),
  \end{align*}
  each $n - 2\lvert v'X\rvert$ in the above is between $-\epsilon n$ and $\epsilon n$, so since $m$
  is even, we get $\phi_{m,u}(X)\leq 0$.

  \medskip

  For \cref{lem:constraints:final}, let $V\df\{v\in\FF_2^\ell \mid vX =
  0\}$. Clearly $V$ is a linear subspace of $\FF_2^\ell$ and since $X\neq 0$, we have
  $V\neq\FF_2^\ell$.

  Note now the following chain of equivalences
  \begin{align*}
    u\in V^\bot
    & \iff
    \forall v\in\FF_2^\ell, (vX = 0\to\langle v,u\rangle = 0)
    \iff
    \forall v\in\FF_2^\ell, (\langle v,u\rangle = 1\to vX\neq 0),
  \end{align*}
  so by \cref{lem:constraints:nonpositive}, we get $\phi_{m,u}(X)\leq 0$ for every $u\in
  V^\bot\setminus\{0\}$.

  On the other hand, note that if $u\in\FF_2^\ell\setminus V^\bot$, then the equivalence above
  implies that there exists $v\in\FF_2^\ell$ with $\langle v,u\rangle = 1$ and $vX = 0$, so
  \cref{lem:constraints:nonnegative} implies $\phi_{m,u}(X)\geq 0$.

  Since $V\neq\FF_2^\ell$, we have $V^\bot\neq\{0\}$, so $\lvert V^\bot\setminus\{0\}\rvert$ is odd,
  hence $\Phi_{m,u}(X)\leq 0$ as it is a product of an odd number of non-positive factors and some
  non-negative factors.

  \medskip

  Finally, \cref{lem:constraints:value} follows by direct calculation.
\end{proof}

We now compute an alternative formula for $M_m$.
\begin{lemma}\label{lem:Mm}
  We have
  \begin{align}
    M_m
    & =
    \sum_{\substack{S\subseteq\FF_2^\ell\setminus\{0\}\\\lvert S\rvert\text{ odd}}}
    \sum_{i\in S}
    \left(
    \prod_{u\in S\setminus\{i\}}
    \sum_{\substack{v\in\FF_2^\ell\\\langle u,v\rangle=1}} A_v^m
    \right)
    \cdot
    (\epsilon n)^{m(2^\ell-1-\lvert S\rvert)}
    \left(\frac{1}{\lvert S\rvert}\cdot\sum_{\substack{v\in\FF_2^\ell\\\langle i,v\rangle=1}} A_v^m
    - \frac{2^{\ell-1}\cdot (\epsilon n)^m}{2^\ell-\lvert S\rvert}\right).
    \label{eq:Mm}
  \end{align}
\end{lemma}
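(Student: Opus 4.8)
The plan is to prove this purely algebraically, by manipulating a product of \emph{commuting} operators. First I would record the structural fact that drives everything: the matrices $A_v$ for $v\in\FF_2^\ell\setminus\{0\}$ pairwise commute, since each acts as convolution by $L_v$ (recall $A_h\Lambda=L_h*\Lambda$) and convolution is commutative and associative. Hence $M_m=\prod_{u\in\FF_2^\ell\setminus\{0\}}B_{m,u}$ is independent of the order of its factors and may be treated as an ordinary polynomial in the commuting ``variables'' $A_v^m$. I would also note the elementary count that for each fixed $u\neq 0$ the map $v\mapsto\langle u,v\rangle$ is a surjective $\FF_2$-linear functional, so exactly $2^{\ell-1}$ vectors $v$ satisfy $\langle u,v\rangle=1$; writing $P_u\df\sum_{v:\langle u,v\rangle=1}A_v^m$ this rewrites each factor as $B_{m,u}=P_u-2^{\ell-1}(\epsilon n)^m I$, so $M_m=\prod_{u\in\FF_2^\ell\setminus\{0\}}\bigl(P_u-2^{\ell-1}(\epsilon n)^m I\bigr)$.

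The cleanest route is then to start from the right-hand side of~\eqref{eq:Mm} and show that it collapses to this product. For a fixed odd $S$, the inner sum over $i\in S$ splits into two families of terms. The ``$\tfrac1{\lvert S\rvert}P_i$'' terms recombine, via $\sum_{i\in S}\bigl(\prod_{u\in S\setminus\{i\}}P_u\bigr)\tfrac1{\lvert S\rvert}P_i=\prod_{u\in S}P_u$, into a single monomial $\prod_{u\in S}P_u$ carrying the scalar coefficient displayed for that $S$. The ``$-\tfrac{2^{\ell-1}(\epsilon n)^m}{2^\ell-\lvert S\rvert}$'' terms each produce a monomial $\prod_{u\in S\setminus\{i\}}P_u$; I would re-index these by the \emph{even} set $R=S\setminus\{i\}$, using that the pairs $(S,i)$ with $S\setminus\{i\}=R$ are in bijection with the choices $i\in(\FF_2^\ell\setminus\{0\})\setminus R$, of which there are exactly $2^\ell-1-\lvert R\rvert=2^\ell-\lvert S\rvert$ — precisely the denominator — so that all contributions producing a given $\prod_{u\in R}P_u$ coalesce into one monomial with a clean coefficient, with the power of $(\epsilon n)^m$ tracking through the shift $\lvert S\rvert=\lvert R\rvert+1$.

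Carrying out this bookkeeping turns the right-hand side of~\eqref{eq:Mm} into a single alternating sum $\sum_{T\subseteq\FF_2^\ell\setminus\{0\}}(-1)^{2^\ell-1-\lvert T\rvert}\bigl(2^{\ell-1}(\epsilon n)^m\bigr)^{2^\ell-1-\lvert T\rvert}\prod_{u\in T}P_u$ (here one uses that $2^\ell-1$ is odd, so the ``$+$ for odd $\lvert T\rvert$, $-$ for even $\lvert T\rvert$'' sign pattern arising from the two families above is exactly $(-1)^{2^\ell-1-\lvert T\rvert}$), and I would then recognize this as the ordinary binomial-type expansion of $\prod_{u\in\FF_2^\ell\setminus\{0\}}\bigl(P_u-2^{\ell-1}(\epsilon n)^m I\bigr)=M_m$. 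This closes the argument; \eqref{eq:hatPhim} then transports the operator identity to the convolution form $2^{n\ell}\widehat\Phi_m*\Lambda=M_m\Lambda$ stated just before the lemma.

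I expect the main obstacle to be exactly this middle step: making the two re-indexings (the ``$\tfrac1{\lvert S\rvert}$'' collapse over $i\in S$ on one side, and the multiplicity-versus-$\tfrac1{2^\ell-\lvert S\rvert}$ cancellation over extensions $R\mapsto S$ on the other) fully rigorous while keeping all powers of $(\epsilon n)^m$ correct, since the shift $\lvert S\rvert=\lvert R\rvert+1$ couples the exponent of the coefficient to which of the two families a term came from. If the combinatorics become unwieldy, a robust fallback is to evaluate both sides on the simultaneous eigenbasis of characters $\{\chi_Z\}_{Z\in\FF_2^{\ell\times n}}$: each $A_v$ acts diagonally there with eigenvalue $n-2\lvert vZ\rvert$, hence $A_v^m$ with eigenvalue $(n-2\lvert vZ\rvert)^m$ and $M_m$ with eigenvalue $\Phi_m(Z)=\prod_{u}\phi_{m,u}(Z)$, so the identity reduces to a scalar equation which is, again, just the elementary expansion of a product of affine expressions — and it suffices to verify it coefficient-by-coefficient in those scalars.
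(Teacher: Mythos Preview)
Your plan is essentially the paper's argument run in reverse: both rest on the same regrouping that writes $\prod_{u\in S}P_u=\tfrac1{|S|}\sum_{i\in S}P_i\prod_{u\in S\setminus\{i\}}P_u$ and pairs each even set $R$ with its odd extensions $S=R\cup\{i\}$. Where the paper expands $M_m=\sum_{v\in V}\prod_u(A_{v(u)}^m-(\epsilon n)^mI)$ over selectors $v\colon u\mapsto v(u)$, regroups, and only then sums over $v$, you collapse $B_{m,u}$ directly to $P_u-2^{\ell-1}(\epsilon n)^mI$ and try to match the displayed sum against $\prod_u(P_u-2^{\ell-1}(\epsilon n)^mI)$.

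That last step does not close as you describe. Carrying out your bookkeeping, the right-hand side of~\eqref{eq:Mm} has coefficient $(\epsilon n)^{m(2^\ell-1-|T|)}$ on $\prod_{u\in T}P_u$ for odd $T$ and $-2^{\ell-1}(\epsilon n)^{m(2^\ell-1-|T|)}$ for even $T$, whereas the genuine expansion of $\prod_u(P_u-2^{\ell-1}(\epsilon n)^mI)$ has coefficient $(-2^{\ell-1})^{\,2^\ell-1-|T|}(\epsilon n)^{m(2^\ell-1-|T|)}$; these differ by the factor $(2^{\ell-1})^{2^\ell-1-|T|}$ (respectively $(2^{\ell-1})^{2^\ell-2-|T|}$), so the alternating sum you wrote down is not what the regrouping actually produces. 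A sanity check at $\ell=2$ with all $P_u$ set to a common scalar $p$ gives $p^3-6p^2(\epsilon n)^m+3p(\epsilon n)^{2m}-2(\epsilon n)^{3m}$ from the displayed sum but $(p-2(\epsilon n)^m)^3=p^3-6p^2(\epsilon n)^m+12p(\epsilon n)^{2m}-8(\epsilon n)^{3m}$ from $M_m$; your eigenbasis fallback would expose the same discrepancy rather than repair it. The paper's own proof makes the matching slip in its final ``$\sum_{v\in V}$'' step, where it silently drops the factor $(2^{\ell-1})^{2^\ell-1-|S|}$ coming from the free coordinates $v(u)$ with $u\notin S$. The correct identity has $(\epsilon n)^{m(2^\ell-1-|S|)}$ replaced by $(2^{\ell-1}(\epsilon n)^m)^{2^\ell-1-|S|}$; since this extra factor is positive, the downstream nonnegativity argument in \cref{theo:spectral_construction} is unaffected.
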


\begin{proof}
  Let
  \begin{align*}
    V
    & \df
    \{v\colon\FF_2^\ell\setminus\{0\}\to\FF_2^\ell \mid
    \forall u\in\FF_2^\ell\setminus\{0\}, \langle u,v(u)\rangle = 1\},
    \\
    M_{m,v}
    & \df
    \sum_{\substack{S\subseteq \FF_2^\ell\setminus\{0\}\\\lvert S\rvert\text{ odd}}}
    \sum_{i\in S}
    \left(\prod_{u\in S\setminus\{i\}} A_{v(u)}^m\right)
    (\epsilon n)^{m(2^\ell-1-\lvert S\rvert)}
    \left(\frac{A_{v(i)}^m}{\lvert S\rvert}
    - \frac{(\epsilon n)^m}{2^\ell - \lvert S\rvert}\right) \qquad (v\in V).
  \end{align*}
  We will first show that $M_m = \sum_{v\in V} M_{m,v}$.
  
  Note that
  \begin{align*}
    M_m
    & =
    \prod_{u\in\FF_2^\ell\setminus\{0\}} B_{m,u}
    =
    \prod_{u\in\FF_2^\ell\setminus\{0\}}
    \sum_{\substack{v\in\FF_2^\ell\\\langle u,v\rangle=1}}
    (A_v^m - (\epsilon n)^m I)
    =
    \sum_{v\in V}
    \prod_{u\in\FF_2^\ell\setminus\{0\}} (A_{v(u)}^m - (\epsilon n)^m I).
  \end{align*}
  Our objective is then to show that the inner product in the above is equal to $M_{m,v}$. To prove
  this, note that
  \begin{align*}
    \prod_{u\in\FF_2^\ell\setminus\{0\}} (A_{v(u)}^m - (\epsilon n)^m I)
    & =
    \sum_{S\subseteq \FF_2^\ell\setminus\{0\}}
    \left(\prod_{u\in S} A_{v(u)}^m\right)
    (-(\epsilon n)^m)^{2^\ell-1-\lvert S\rvert}.
  \end{align*}

  We now group the terms in the sum above as follows: we sum over only
  $S\subseteq\FF_2^\ell\setminus\{0\}$ such that $\lvert S\rvert$ is odd and we
  redistribute the terms with $\lvert S\rvert$ even equally among $S\cup\{i\}$ where $i$
  ranges in $\FF_2^\ell\setminus(\{0\}\cup S)$. With this redistribution, we have
  \begin{align*}
    & \!\!\!\!\!\!
    \prod_{u\in\FF_2^\ell\setminus\{0\}} (A_{v(u)}^m - (\epsilon n)^m I)
    \\
    & =
    \sum_{\substack{S\subseteq\FF_2^\ell\setminus\{0\}\\\lvert S\rvert\text{ odd}}}
    \left(
    \left(\prod_{u\in S} A_{v(u)}^m\right)
    (-(\epsilon n))^{m(2^\ell-1-\lvert S\rvert)}
    +
    \sum_{i\in S}
    \frac{1}{2^\ell-\lvert S\rvert}
    \left(\prod_{u\in S\setminus\{i\}} A_{v(u)}^m\right)
    (-(\epsilon n)^m)^{2^\ell-\lvert S\rvert}
    \right)
    \\
    & =
    \sum_{\substack{S\subseteq\FF_2^\ell\setminus\{0\}\\\lvert S\rvert\text{ odd}}}
    \sum_{i\in S}
    \left(\prod_{u\in S\setminus\{i\}} A_{v(u)}^m\right) (\epsilon n)^{m(2^\ell-1-\lvert S\rvert)}
    \left(\frac{A_{v(i)}^m}{\lvert S\rvert}
    - \frac{(\epsilon n)^m}{2^\ell-\lvert S\rvert}\right)
    \\
    & =
    M_{m,v},
  \end{align*}
  so we conclude that $M_m = \sum_{v\in V} M_{m,v}$.

  Finally, note that
  \begin{align*}
    M_m
    & =
    \sum_{v\in V} M_{m,v}
    \\
    & =
    \sum_{v\in V}
    \sum_{\substack{S\subseteq\FF_2^\ell\setminus\{0\}\\\lvert S\rvert\text{ odd}}}
    \sum_{i\in S}
    \left(\prod_{u\in S\setminus\{i\}} A_{v(u)}^m\right)
    (\epsilon n)^{m(2^\ell-1-\lvert S\rvert)}
    \left(\frac{A_{v(i)}^m}{\lvert S\rvert}
    - \frac{(\epsilon n)^m}{2^\ell-\lvert S\rvert}\right)
    \\
    & =
    \sum_{\substack{S\subseteq\FF_2^\ell\setminus\{0\}\\\lvert S\rvert\text{ odd}}}
    \sum_{i\in S}
    \left(
    \prod_{u\in S\setminus\{i\}}
    \sum_{\substack{v\in\FF_2^\ell\\\langle u,v\rangle=1}} A_v^m
    \right)
    \cdot
    (\epsilon n)^{m(2^\ell-1-\lvert S\rvert)}
    \sum_{\substack{v\in\FF_2^\ell\\\langle i,v\rangle=1}}
    \left(\frac{A_v^m}{\lvert S\rvert}
    - \frac{(\epsilon n)^m}{2^\ell-\lvert S\rvert}\right)
    \\
    & =
    \sum_{\substack{S\subseteq\FF_2^\ell\setminus\{0\}\\\lvert S\rvert\text{ odd}}}
    \sum_{i\in S}
    \left(
    \prod_{u\in S\setminus\{i\}}
    \sum_{\substack{v\in\FF_2^\ell\\\langle u,v\rangle=1}} A_v^m
    \right)
    \cdot
    (\epsilon n)^{m(2^\ell-1-\lvert S\rvert)}
    \left(\frac{1}{\lvert S\rvert}\cdot\sum_{\substack{v\in\FF_2^\ell\\\langle i,v\rangle=1}} A_v^m
    - \frac{2^{\ell-1}\cdot (\epsilon n)^m}{2^\ell-\lvert S\rvert}\right),
  \end{align*}
  so~\eqref{eq:Mm} follows.
\end{proof}

\begin{theorem}\label{theo:spectral_construction}
  Let $\ell,m\in\NN_+$ with $m$ even such that
  \begin{align}\label{eq:adhoc:m}
    m
    & \geq
    \frac{\ell-1}{\lg(1/\epsilon)},
  \end{align}
  where $\lg\df\log_2$ is the binary log.

  Suppose further $G\in\NConfig_\ell$ is such that $G(u) > 0$ for every $u\in\FF_2^\ell$.

  Let further $n\in\NN_+$ and suppose that $n\cdot G(u)\in\NN$ for every $u\in\FF_2^\ell$ and that
  for $\Lambda\df\One_{\config_{n,\ell}^{-1}(n\cdot G)}$ and every $i\in\FF_2^\ell\setminus\{0\}$,
  there exists $v\in\FF_2^\ell$ with $\langle i,v\rangle=1$ and
  \begin{align}\label{eq:adhoc:AvmLambda}
    A_v^m \Lambda & \geq (2^{2\ell-1}\epsilon^m n^m + 1)\Lambda.
  \end{align}

  Finally, let
  \begin{align*}
    F & \df \Phi_m\cdot\widehat{\Lambda}^2, &
    f & \df \frac{F}{\widehat{F}(0)},
  \end{align*}
  where $\Phi_m$ is given by~\eqref{eq:Phim}.

  Then $f$ is a feasible solution of~\eqref{eq:LPdual} with
  \begin{align}\label{eq:adhoc:lgf}
    \frac{\lg f(0)}{n} & \leq H_2(G) + O\left(\frac{\lg(n)}{n}\right)
  \end{align}
  as $n\to\infty$ with $\ell$ fixed.
\end{theorem}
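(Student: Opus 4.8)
The plan is to view $f$ as an instance of the $\phi\cdot\Gamma^2$ template described in \cref{subsec:overview:spectral}, with $\phi=\Phi_m$ and $\Gamma=\widehat\Lambda$, and to establish the three constraints of~\eqref{eq:LPdual} (over $\FF_2$ we may take the symmetry variable $\beta\equiv 0$, so these are $\widehat f(0)=1$, $f(X)\le 0$ for $X\in\Valid_{n,\ell}^\epsilon\setminus\{0\}$, and $\widehat f(X)\ge 0$) together with the objective bound. Almost everything reduces to the single entrywise inequality $M_m\Lambda\ge\Lambda$, which I isolate as the main claim.

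I would begin with the Fourier bookkeeping. Using the convolution theorem together with the identity~\eqref{eq:hatPhim}, a direct computation gives $\widehat F(X)=2^{-2n\ell}\sum_{W}(M_m\Lambda)(X+W)\,\Lambda(W)$, hence $\widehat F(0)=2^{-2n\ell}\sum_W(M_m\Lambda)(W)\,\Lambda(W)$. Granting the main claim, $M_m\Lambda\ge\Lambda\ge 0$ pointwise forces $\widehat F(X)\ge 0$ for all $X$ (so $\widehat f=\widehat F/\widehat F(0)\ge 0$ and $\widehat f(0)=1$), and also $\widehat F(0)\ge 2^{-2n\ell}\sum_W\Lambda(W)^2=2^{-2n\ell}\binom{n}{nG}>0$, since $\Lambda$ is $\{0,1\}$-valued with support the non-empty orbit $\config_{n,\ell}^{-1}(nG)$. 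For Validity, $\widehat F(0)>0$ and $\widehat\Lambda(X)^2\ge 0$ mean that $f(X)$ has the same sign as $\Phi_m(X)$, and \cref{lem:constraints} (whose hypotheses on $m$ hold by~\eqref{eq:adhoc:m}) gives $\Phi_m(X)\le 0$ for every $X\in\Valid_{n,\ell}^\epsilon$ with $X\ne 0$. This settles feasibility.

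The main claim $M_m\Lambda\ge\Lambda$ is the step I expect to demand the most care. I would expand $M_m$ via \cref{lem:Mm} as a sum over odd-size $S\subseteq\FF_2^\ell\setminus\{0\}$ and $i\in S$ of terms $\bigl(\prod_{u\in S\setminus\{i\}}\sum_{\langle u,v\rangle=1}A_v^m\bigr)(\epsilon n)^{m(2^\ell-1-\lvert S\rvert)}\bigl(\tfrac1{\lvert S\rvert}\sum_{\langle i,v\rangle=1}A_v^m-\tfrac{2^{\ell-1}(\epsilon n)^m}{2^\ell-\lvert S\rvert}I\bigr)$. Each $A_v$ has non-negative entries, so the prefix operator preserves non-negativity and is monotone; it therefore suffices that the bracketed factor, applied to $\Lambda$, be $\ge 0$. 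Picking (via hypothesis~\eqref{eq:adhoc:AvmLambda}) a $v$ with $\langle i,v\rangle=1$ and $A_v^m\Lambda\ge(2^{2\ell-1}\epsilon^m n^m+1)\Lambda$, discarding the remaining non-negative $A_v^m\Lambda$, and using $1\le\lvert S\rvert\le 2^\ell-1$ (so $2^\ell-\lvert S\rvert\ge 1$ and $\lvert S\rvert\,2^{\ell-1}<2^{2\ell-1}$), one obtains $\bigl(\tfrac1{\lvert S\rvert}\sum_{\langle i,v\rangle=1}A_v^m-\tfrac{2^{\ell-1}(\epsilon n)^m}{2^\ell-\lvert S\rvert}I\bigr)\Lambda\ge\bigl(\tfrac{2^{2\ell-1}\epsilon^m n^m+1}{\lvert S\rvert}-2^{\ell-1}\epsilon^m n^m\bigr)\Lambda\ge\tfrac1{\lvert S\rvert}\Lambda\ge 0$. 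Thus every term is entrywise non-negative, and the $2^\ell-1$ singleton terms $S=\{i\}$ (empty prefix product, so the operator is the identity) each contribute at least $(\epsilon n)^{m(2^\ell-2)}\Lambda$; hence $M_m\Lambda\ge(2^\ell-1)(\epsilon n)^{m(2^\ell-2)}\Lambda$, which is $\ge\Lambda$ once $n$ is large (harmless, as the conclusion is asymptotic). The delicate point is precisely that the constant $2^{2\ell-1}$ in~\eqref{eq:adhoc:AvmLambda} is exactly what beats the combinatorial weights $1/\lvert S\rvert$ and $2^{\ell-1}/(2^\ell-\lvert S\rvert)$ produced by the redistribution in \cref{lem:Mm}.

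Finally I would bound the objective. We have $F(0)=\Phi_m(0)\,\widehat\Lambda(0)^2=\Phi_m(0)\bigl(2^{-n\ell}\binom{n}{nG}\bigr)^2$, and combining with $\widehat F(0)\ge 2^{-2n\ell}\binom{n}{nG}$ from above (and $\Phi_m(0)>0$) yields $f(0)=F(0)/\widehat F(0)\le\Phi_m(0)\binom{n}{nG}$. Taking $\lg$, \cref{lem:constraints} gives $\lg\Phi_m(0)=(2^\ell-1)\bigl(\ell-1+\lg(1-\epsilon^m)+m\lg n\bigr)=O(\lg n)$ for fixed $\ell,m,\epsilon$, while the Stirling estimate in \cref{lem:volume} gives $\lg\binom{n}{nG}=H_2(G)\,n+O(\lg n)$; together these give $\lg f(0)/n\le H_2(G)+O(\lg(n)/n)$, which is~\eqref{eq:adhoc:lgf}.
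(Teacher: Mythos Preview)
Your proposal is correct and follows essentially the same route as the paper: feasibility reduces to $M_m\Lambda\ge 0$ via the decomposition in \cref{lem:Mm}, each bracketed factor applied to $\Lambda$ is bounded below using the hypothesis~\eqref{eq:adhoc:AvmLambda} together with $1\le\lvert S\rvert\le 2^\ell-1$, and the objective is controlled by pairing $F(0)=\Phi_m(0)\widehat\Lambda(0)^2$ with a lower bound on $\widehat F(0)$ coming from $M_m\Lambda\ge c\,\Lambda$. The only cosmetic differences are that the paper bounds the bracket by replacing $\lvert S\rvert\mapsto 2^\ell$ and summing over all admissible $v$, whereas you keep a single $v$ and retain the explicit $\tfrac{1}{\lvert S\rvert}$ slack; and the paper simply asserts $M_m\Lambda\ge\poly(n)\Lambda$ from the extra $+1$, while you pinpoint the singleton-$S$ terms as the source of the quantitative lower bound $(2^\ell-1)(\epsilon n)^{m(2^\ell-2)}\Lambda$.
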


\begin{proof}
  It is clear that $\widehat{f}(0)=1$.

  On the other hand, if $X\in\Valid_{n,\ell}^\epsilon\setminus\{0\}$, then by
  \cref{lem:constraints}, we have $\Phi_m(X)\leq 0$, so we get $f(X)\leq 0$.

  For the Fourier constraints, by~\eqref{eq:hatPhim}, we have
  \begin{align*}
    \widehat{f}
    & =
    \frac{\widehat{\Phi_m} * \Lambda * \Lambda}{\widehat{F}(0)}
    =
    \frac{M_m\Lambda * \Lambda}{2^{n\ell}\widehat{F}(0)}.
  \end{align*}
  Since $\Lambda\geq 0$, to show that $\widehat{f}\geq 0$, it suffices to show that $M_m\Lambda\geq
  0$.

  By \cref{lem:Mm}, we have
  \begin{align*}
    M_m
    & =
    \sum_{\substack{S\subseteq\FF_2^\ell\setminus\{0\}\\\lvert S\rvert\text{ odd}}}
    \sum_{i\in S}
    \left(
    \prod_{u\in S\setminus\{i\}}
    \sum_{\substack{v\in\FF_2^\ell\\\langle u,v\rangle=1}} A_v^m
    \right)
    \cdot
    (\epsilon n)^{m(2^\ell-1-\lvert S\rvert)}
    \left(\frac{1}{\lvert S\rvert}\cdot\sum_{\substack{v\in\FF_2^\ell\\\langle i,v\rangle=1}} A_v^m
    - \frac{2^{\ell-1}\cdot (\epsilon n)^m}{2^\ell-\lvert S\rvert}\right)
  \end{align*}
  and from the factoring above, it suffices to show that for every
  $S\subseteq\FF_2^\ell\setminus\{0\}$ with $\lvert S\rvert$ odd and every $i\in S$, we have
  \begin{align*}
    \frac{1}{\lvert S\rvert}\cdot\sum_{\substack{v\in\FF_2^\ell\\\langle i,v\rangle=1}} A_v^m
    \Lambda
    & \geq
    \frac{2^{\ell-1}\cdot (\epsilon n)^m}{2^\ell-\lvert S\rvert} \Lambda.
  \end{align*}

  Since $1\leq\lvert S\rvert\leq 2^\ell-1$, it suffices to then show that
  \begin{align*}
    \frac{1}{2^\ell}\cdot\sum_{\substack{v\in\FF_2^\ell\\\langle i,v\rangle=1}} A_v^m
    \Lambda
    & \geq
    2^{\ell-1}\cdot (\epsilon n)^m \Lambda,
  \end{align*}
  which follows directly from our assumption~\eqref{eq:adhoc:AvmLambda} (and the fact that all entries of
  $A_v^m$ and $\Lambda$ are non-negative). Note that since we have an extra $1$
  in~\eqref{eq:adhoc:AvmLambda}, the argument above in fact implies
  \begin{align}\label{eq:MmLambda}
    M_m\Lambda & \geq \poly(n)\Lambda.
  \end{align}

  It remains to show~\eqref{eq:adhoc:lgf}. By \cref{lem:volume,lem:constraints:value}, we have
  \begin{align*}
    F(0)
    & =
    \Phi_m(0)\cdot\widehat{\Lambda}(0)^2
    =
    (2^{\ell-1} (1-\epsilon^m) n^m)^{2^\ell-1}
    \cdot
    \left(
    \frac{\lvert\config_{n,\ell}^{-1}(n\cdot G)\rvert}{2^{n\ell}}
    \right)^2
    =
    \poly(n)\cdot 2^{2(H_2(G) - \ell)n}.
  \end{align*}

  On the other hand, we have
  \begin{align*}
    \widehat{F}(0)
    & =
    (\widehat{\Phi_m} * \Lambda * \Lambda)(0)
    =
    \frac{(M_m\Lambda * \Lambda)(0)}{2^{n\ell}}
    \geq
    \frac{\poly(n)}{2^{n\ell}}(\Lambda * \Lambda)(0)
    =
    \poly(n)\cdot 2^{(H_2(G) - 2\ell)n},
  \end{align*}
  where the inequality follows from~\eqref{eq:MmLambda} and the last equality follows from
  \cref{lem:volume}. Thus, we get
  \begin{align*}
    \frac{\lg(f(0))}{n}
    & \leq
    H_2(G) + O\left(\frac{\lg(n)}{n}\right),
  \end{align*}
  as desired.
\end{proof}

\subsection{Finding Good Configurations}
\label{sec:find_good_configs}

\Cref{theo:spectral_construction} leaves open only one question: which normalized configurations $G$
are such that the corresponding function $\Lambda$ satisfies~\eqref{eq:adhoc:AvmLambda} while having
small binary entropy $H_2(G)$ so as to yield a good value to~\eqref{eq:LPdual}? In this section, we
will see that two kinds of normalized configurations can attain same rates as MRRW
(see~\eqref{eq:MRRWrate}) up to lower order terms via \cref{theo:spectral_construction}.

\begin{definition}
  Given $\ell\in\NN_+$ and $\tau\in[0,1/\ell]$, the \emph{$\tau$-vertex uniform normalized
    configuration} (at level $\ell$) is defined as $\Gvu\in\NConfig_\ell$ given by
  \begin{align*}
    \Gvu(u) & \df
    \begin{dcases*}
      (1-\ell\tau), & if $u=0$,\\
      \tau, & if $\lvert u\rvert=1$,\\
      0, & otherwise.
    \end{dcases*}
  \end{align*}

  Given $\tau\in[0,1]$, the \emph{$\tau$-quasirandom normalized configuration} (at level $\ell$) is
  defined as $\GQR\in\NConfig_\ell$ given by
  \begin{align*}
    \GQR(u) & \df \tau^{\lvert u\rvert}(1-\tau)^{\ell-\lvert u\rvert}.
  \end{align*}

  Given further $n\in\NN_+$, we let $\gvu,\gQR$ be obtained by rounding $n\cdot\Gvu$
  and $n\cdot\GQR$ respectively to integer values so that the result is in $\Config_{n,\ell}$.
\end{definition}

\begin{lemma}\label{lem:gvu}
  Let $\epsilon\in(0,1)$, let $\ell\in\NN_+$, let $\tau\in(0,1/\ell)$, let $n,m\in\NN_+$ with $m$
  even and let $\Lambda\df\One_{\config_{n,\ell}^{-1}(\gvu)}$. Then the following hold:
  \begin{enumerate}
  \item\label{lem:gvu:AvmLambda} For every $v\in\FF_2^\ell$ with $\lvert v\rvert=1$ and every
    $X\in\config_{n,\ell}^{-1}(\gvu)$, we have
    \begin{align*}
      A_v^m \Lambda(X)
      & = 
      \binom{m}{m/2} (1-\ell\tau)^{m/2}\tau^{m/2} n^m + o(n^m).
    \end{align*}
  \item\label{lem:gvu:entropy} We have
    \begin{align*}
      H_2(\Gvu)
      & =
      \ell\left(\tau\lg\frac{1}{\tau} +
      (1-\ell\tau)\lg\frac{1}{1-\ell\tau}\right)
      =
      \ell\tau\lg(\tau) + \ell\tau + O(\tau^2),
    \end{align*}
    as $\tau\to 0$ with $\ell$ fixed.
  \item\label{lem:gvu:walks} If
    \begin{align}\label{eq:gvu:tau}
      \tau
      & =
      \frac{1 - \sqrt{1 - \ell 2^{(4\ell-1)/m} m^{1/m}\epsilon^2}}{2\ell},
    \end{align}
    then
    \begin{align}\label{eq:gvu:tauasymp}
      \tau
      =
      \frac{2^{(4\ell-1)/m} m^{1/m}}{4} \epsilon^2 + O(\epsilon^4)
    \end{align}
    as $\epsilon\to 0$ with $\ell$ and $m$ fixed and
    \begin{align}\label{eq:gvu:AvmLambdatau}
      A_v^m \Lambda & \geq 2^{2\ell-1}\epsilon^m n^m \Lambda + o(n^m)
    \end{align}
    for every $v\in\FF_2^\ell$ with $\lvert v\rvert=1$ as $n\to\infty$ with $\epsilon$, $\ell$ and
    $m$ fixed.
  \end{enumerate}
\end{lemma}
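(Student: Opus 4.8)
The plan is to prove the three items in order, with \cref{lem:gvu:AvmLambda} carrying the combinatorial weight and the other two reducing to calculations on top of it. For \cref{lem:gvu:AvmLambda} I would invoke \cref{lem:Apower} with $g_0\df\gvu$: since $\tau\in(0,1/\ell)$ is fixed, the rounding used to define $\gvu$ perturbs each coordinate by $O(1)$, so $\gvu(0)=(1-\ell\tau)n+O(1)$, $\gvu(u)=\tau n+O(1)$ whenever $\lvert u\rvert=1$, and $\gvu(u)=0$ whenever $\lvert u\rvert\geq 2$; in particular the hypothesis of \cref{lem:Apower} (nonzero coordinates are $\Omega(n)$) holds, and it gives
\begin{align*}
  A_v^m\Lambda(X)
  & =
  \sum_{F\in\cF_{m,v}}\binom{m}{F}\prod_{u\in\FF_2^\ell}\gvu(u)^{F(u)}+o(n^m).
\end{align*}
A summand vanishes unless $F$ is supported on $\{u : \gvu(u)\neq 0\}=\{0,e_1,\dots,e_\ell\}$. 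Writing $v=e_j$, the relation $F(u+v)=F(u)$ from~\eqref{eq:cFmv} forces $F(e_j)=F(0)$ and, for $i\neq j$, $F(e_i)=F(e_i+e_j)=0$ because $\lvert e_i+e_j\rvert=2$. Hence the only surviving $F$ has $F(0)=F(e_j)=m/2$ (here we use that $m$ is even), for which $\binom{m}{F}=\binom{m}{m/2}$ and $\prod_u\gvu(u)^{F(u)}=\gvu(0)^{m/2}\gvu(e_j)^{m/2}=(1-\ell\tau)^{m/2}\tau^{m/2}n^m+o(n^m)$, which yields the claimed identity.

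For \cref{lem:gvu:entropy} I would simply evaluate $H_2(\Gvu)$ from the definition: as the entropy of the distribution placing mass $1-\ell\tau$ on $0$ and mass $\tau$ on each of the $\ell$ weight-one vectors, $H_2(\Gvu)=(1-\ell\tau)\lg\frac{1}{1-\ell\tau}+\ell\tau\lg\frac{1}{\tau}$, and the stated $\tau\to 0$ expansion follows by Taylor-expanding $\lg(1-\ell\tau)$ around $0$.

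For \cref{lem:gvu:walks}, set $c\df\ell 2^{(4\ell-1)/m}m^{1/m}$. Isolating the radical in~\eqref{eq:gvu:tau} and squaring shows that~\eqref{eq:gvu:tau} is equivalent to the clean identity $4\ell\tau(1-\ell\tau)=c\epsilon^2$, i.e. $(1-\ell\tau)\tau=\frac{2^{(4\ell-1)/m}m^{1/m}}{4}\epsilon^2$; this also exhibits $\tau\in(0,\tfrac{1}{2\ell})$ for $\epsilon$ small. Substituting $\sqrt{1-c\epsilon^2}=1-\tfrac{1}{2}c\epsilon^2+O(\epsilon^4)$ into~\eqref{eq:gvu:tau} gives~\eqref{eq:gvu:tauasymp}. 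For~\eqref{eq:gvu:AvmLambdatau}, on inputs outside $\config_{n,\ell}^{-1}(\gvu)$ the inequality is vacuous since $\Lambda$ vanishes there and all entries of $A_v^m\Lambda$ are nonnegative; on inputs in $\config_{n,\ell}^{-1}(\gvu)$, \cref{lem:gvu:AvmLambda} reduces it to verifying $\binom{m}{m/2}\bigl((1-\ell\tau)\tau\bigr)^{m/2}\geq 2^{2\ell-1}\epsilon^m$. Using the identity above, $\bigl((1-\ell\tau)\tau\bigr)^{m/2}=2^{2\ell-1/2-m}\sqrt m\,\epsilon^m$, so this reduces further to $\binom{m}{m/2}\geq \frac{2^m}{\sqrt{2m}}$, which is the standard lower bound for the central binomial coefficient (valid for even $m\geq 2$).

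The step I expect to demand the most care is \cref{lem:gvu:AvmLambda}: although it is merely an unwinding of \cref{lem:Apower}, one must check that the combinatorial reduction to the single configuration $F(0)=F(e_j)=m/2$ is exhaustive, and that the $o(n^m)$ errors — arising both from rounding $n\cdot\Gvu$ to $\gvu$ and from \cref{lem:Apower} itself — are uniform over the constantly many surviving terms, so that they can be collected into the single $o(n^m)$ of the final formula.
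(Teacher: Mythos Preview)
Your proposal is correct and follows essentially the same approach as the paper: you invoke \cref{lem:Apower} and use the support constraint together with the symmetry $F(u+v)=F(u)$ to isolate the single surviving $F$ with $F(0)=F(v)=m/2$, you compute the entropy directly, and for \cref{lem:gvu:walks} you derive the same quadratic identity $4\ell\tau(1-\ell\tau)=\ell 2^{(4\ell-1)/m}m^{1/m}\epsilon^2$ and reduce to the central binomial bound $\binom{m}{m/2}\geq 2^m/\sqrt{2m}$, exactly as the paper does. Your write-up is in fact slightly more explicit than the paper's about the quadratic identity and the bookkeeping of the $o(n^m)$ errors.
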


\begin{proof}
  First note that since $0 < \tau < 1/\ell$, it follows that for every $u\in\FF_2^\ell$, if
  $\gvu(u)\neq 0$, then $\gvu(u)\geq\Omega(n)$, so by \cref{lem:Apower} with $g_0\df\gvu$, we have
  \begin{align*}
    A_v^m \Lambda(X)
    & =
    \sum_{F\in\cF_{m,v}} \binom{m}{F}\prod_{u\in\FF_2^\ell} \gvu(u)^{F(u)} + o(n^m),
  \end{align*}
  where $\cF_{m,v}$ is given by~\eqref{eq:cFmv}.

  Since $\gvu(u) = 0$ whenever $\lvert u\rvert\geq 2$, it follows that the only terms of the sum
  above that are non-zero correspond to $F\in\cF_{m,v}$ that are entirely supported on
  $\{u\in\FF_2^\ell \mid \lvert u\rvert\leq 1\}$. Since all $F\in\cF_{m,v}$ further satisfy
  $F(u)=F(u+v)$ for every $u\in\FF_2^\ell$, we conclude that only one term of the sum above can be
  non-zero, namely the one corresponding to $F_0\in\cF_{m,v}$ given by
  \begin{align*}
    F_0(u) & \df
    \begin{dcases*}
      \frac{m}{2}, & if $u=0$ or $u=v$,\\
      0, & otherwise,
    \end{dcases*}
  \end{align*}
  so we get
  \begin{align*}
    A_v^m \Lambda(X)
    & =
    \binom{m}{m/2} (1-\ell\tau)^{m/2}\tau^{m/2} n^m + o(n^m),
  \end{align*}
  so \cref{lem:gvu:AvmLambda} holds.

  \medskip

  For \cref{lem:gvu:entropy}, note that
  \begin{align*}
    H_2(\Gvu)
    & =
    \sum_{u\in\FF_2^\ell}\Gvu(u)\lg\frac{1}{\Gvu(u)}
    =
    \ell\tau\lg\frac{1}{\tau} + (1-\ell\tau)\lg\frac{1}{1-\ell\tau}
    \\
    & =
    \ell\tau\lg\frac{1}{\tau} + (1-\ell\tau)(\ell\tau + O(\tau^2))
    =
    \ell\tau\lg\frac{1}{\tau} + \ell\tau + O(\tau^2).
  \end{align*}

  \medskip

  For \cref{lem:gvu:walks}, first note that~\eqref{eq:gvu:tauasymp} follows
  from~\eqref{eq:gvu:tau} and the fact that $\sqrt{1+t} = 1 + t/2 + O(t^2)$ as $t\to 0$.

  Finally, note that~\eqref{eq:gvu:AvmLambdatau} is trivial when evaluated on a point $X$ not in the
  support of $\Lambda$ as the left-hand side is clearly non-negative.

  On the other hand, for $X\in\supp(\Lambda)$, that is, for $X\in\config_{n,\ell}^{-1}(\gvu)$, by
  \cref{lem:gvu:AvmLambda}, we have
  \begin{align*}
    A_v^m \Lambda(X)
    & =
    \binom{m}{m/2} (1-\ell\tau)^{m/2}\tau^{m/2} n^m + o(n^m)
    \\
    & \geq
    \frac{2^m}{\sqrt{2m}}
    \cdot\left(\frac{1 - (1-\ell 2^{(4\ell-1)/m} m^{1/m} \epsilon^2)}{4\ell}\right)^{m/2}
    + o(n^m)
    \\
    & =
    2^{2\ell-1}\epsilon^m + o(n^m),
  \end{align*}
  as desired.
\end{proof}

\begin{lemma}\label{lem:gQR}
  Let $\epsilon\in(0,1)$, let $\ell\in\NN_+$, let $\tau\in(0,1)$, let $n,m\in\NN_+$ with $m$ even
  and let $\Lambda\df\One_{\config_{n,\ell}^{-1}(\gQR)}$. Then the following hold:
  \begin{enumerate}
  \item\label{lem:gQR:AvmLambda} For every $v\in\FF_2^\ell$ with $\lvert v\rvert=1$ and every
    $X\in\config_{n,\ell}^{-1}(\gQR)$, we have
    \begin{align*}
      A_v^m \Lambda(X)
      & =
      \binom{m}{m/2}
      \tau^{m/2}(1-\tau)^{\ell m/2}(1-2\tau+2\tau^2)^{(\ell-1)m/2}
      n^m
      + o(n^m).
    \end{align*}
  \item\label{lem:gQR:entropy} We have
    \begin{align*}
      H_2(\GQR)
      & =
      \ell\left(\tau\lg\frac{1}{\tau} + (1-\tau)\lg\frac{1}{1-\tau}\right)
      =
      \ell\tau\lg\frac{1}{\tau} + \ell\tau + O(\tau^2),
    \end{align*}
    as $\tau\to 0$ with $\ell$ fixed.
  \item\label{lem:gQR:walks} If $\tau$ is the first non-negative root of
    \begin{align}\label{eq:gQR:tau}
      4\tau(1-\tau)^\ell(1-2\tau+2\tau^2)^{\ell-1} - 2^{(4\ell-1)/m} m^{1/m}\epsilon^2
    \end{align}
    then
    \begin{align}\label{eq:gQR:tauasymp}
      \tau
      & =
      \frac{2^{(4\ell-1)/m} m^{1/m}}{4} \epsilon^2 + O(\epsilon^{2(1+\ell)})
    \end{align}
    as $\epsilon\to 0$ with $\ell$ and $m$ fixed and
    \begin{align}\label{eq:gQR:AvmLambdatau}
      A_v^m \Lambda & \geq 2^{2\ell-1}\epsilon^m n^m \Lambda + o(n^m)
    \end{align}
    for every $v\in\FF_2^\ell$ with $\lvert v\rvert=1$ as $n\to\infty$ with $\epsilon$, $\ell$ and
    $m$ fixed.
  \end{enumerate}
\end{lemma}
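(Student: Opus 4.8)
The plan is to follow the template of \cref{lem:gvu} for the quasi-random configuration, the new feature being that $\GQR$ has \emph{full} support on $\FF_2^\ell$, so the walk count of \cref{lem:Apower} involves the whole family $\cF_{m,v}$ rather than a single surviving term.

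For \cref{lem:gQR:AvmLambda}: since $\tau\in(0,1)$ we have $\GQR(u)>0$, hence $\gQR(u)=\Omega(n)$, for every $u$, so \cref{lem:Apower} applies with $g_0\df\gQR$ and gives $A_v^m\Lambda(X)=\sum_{F\in\cF_{m,v}}\binom{m}{F}\prod_{u}\gQR(u)^{F(u)}+o(n^m)$ with $\cF_{m,v}$ as in~\eqref{eq:cFmv}. Fix $v=e_j$ of weight $1$; an $F\in\cF_{m,v}$ is a function constant on the cosets $\{u,u+e_j\}$ of total mass $m$, so encoding it by its common value $\bar F(w)$ on the coset indexed by $w\in\FF_2^{\ell-1}$ (the coordinates $\neq j$) we get $\sum_w\bar F(w)=m/2$, the identity $\binom{m}{F}=\binom{m}{m/2}\binom{m/2}{\bar F}^2$ (a multinomial in $\bar F$), and $\gQR(u)\gQR(u+e_j)=n^2\tau(1-\tau)(\tau^2)^{\lvert w\rvert}((1-\tau)^2)^{\ell-1-\lvert w\rvert}$, which depends only on $\lvert w\rvert$. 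Substituting, using $\binom{m/2}{\bar F}^2\ge\binom{m/2}{\bar F}$, and applying the multinomial theorem together with $\sum_{w\in\FF_2^{\ell-1}}(\tau^2)^{\lvert w\rvert}((1-\tau)^2)^{\ell-1-\lvert w\rvert}=(1-2\tau+2\tau^2)^{\ell-1}$ collapses the sum to $\binom{m}{m/2}\bigl(\tau(1-\tau)(1-2\tau+2\tau^2)^{\ell-1}\bigr)^{m/2}n^m+o(n^m)$, which yields the stated expression (and a lower bound of this order is all that \cref{lem:gQR:walks} requires).

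For \cref{lem:gQR:entropy}: $\GQR$ is the product distribution $\textup{Bern}(\tau)^{\otimes\ell}$, so by additivity of entropy $H_2(\GQR)=\ell\bigl(\tau\lg\tfrac1\tau+(1-\tau)\lg\tfrac1{1-\tau}\bigr)$, and Taylor-expanding $(1-\tau)\lg\tfrac1{1-\tau}=O(\tau)$ as $\tau\to0$ gives the asymptotic form. For \cref{lem:gQR:walks}: put $P(\tau)\df 4\tau(1-\tau)^\ell(1-2\tau+2\tau^2)^{\ell-1}-2^{(4\ell-1)/m}m^{1/m}\epsilon^2$; then $P(0)<0$ and $4\tau(1-\tau)^\ell(1-2\tau+2\tau^2)^{\ell-1}=4\tau(1+O(\tau))$ is strictly increasing for small $\tau>0$, so $P$ has a unique nonnegative root near $0$, and solving $4\tau(1+O(\tau))=2^{(4\ell-1)/m}m^{1/m}\epsilon^2$ and inverting gives the asymptotics in~\eqref{eq:gQR:tauasymp}. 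Inequality~\eqref{eq:gQR:AvmLambdatau} is trivial off $\supp(\Lambda)$ (its left side is nonnegative); on $\supp(\Lambda)$ we combine \cref{lem:gQR:AvmLambda}, the defining identity $\tau(1-\tau)^\ell(1-2\tau+2\tau^2)^{\ell-1}=\tfrac14\cdot2^{(4\ell-1)/m}m^{1/m}\epsilon^2$, and $\binom{m}{m/2}\ge 2^m/\sqrt{2m}$, after which the powers of $2$, of $m$, and of $\epsilon$ telescope to exactly $2^{2\ell-1}\epsilon^m n^m$.

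The step I expect to be hardest is the combinatorial identity in \cref{lem:gQR:AvmLambda}: unlike in \cref{lem:gvu}, where the support of $\gvu$ kills all but one $F$, here one must evaluate (or lower-bound with the right leading order) a squared-multinomial sum over all $\langle v\rangle$-periodic $F$ weighted by the product configuration, and show that it factorizes through the binomial theorem in the coordinate flipped by $v$.
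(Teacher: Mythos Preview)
Your approach is essentially identical to the paper's: the same invocation of \cref{lem:Apower}, the same bijection $\cF_{m,v}\leftrightarrow\{\bar F\colon\FF_2^{\ell-1}\to\NN\mid\sum\bar F=m/2\}$, the same factorization $\binom{m}{F}=\binom{m}{m/2}\binom{m/2}{\bar F}^2$ followed by the drop $\binom{m/2}{\bar F}^2\ge\binom{m/2}{\bar F}$ and the multinomial theorem (so, as you note, only a lower bound is actually established). Your entropy computation via additivity for the product measure $\textup{Bern}(\tau)^{\otimes\ell}$ is slightly slicker than the paper's direct binomial expansion, but lands in the same place.

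One slip to flag: your computation in \cref{lem:gQR:AvmLambda} gives the lower bound $\binom{m}{m/2}\bigl(\tau(1-\tau)(1-2\tau+2\tau^2)^{\ell-1}\bigr)^{m/2}n^m$, i.e.\ with exponent $m/2$ on $(1-\tau)$, whereas the stated formula has $(1-\tau)^{\ell m/2}$; so ``yields the stated expression'' is not quite right. (The paper's own proof makes exactly your computation and then writes $(1-\tau)^{\ell m/2}$ in its last line, so this is a typo in the statement.) This does not break \cref{lem:gQR:walks}: since $(1-\tau)^{m/2}\ge(1-\tau)^{\ell m/2}$ on $(0,1)$, your derived lower bound dominates the one encoded by the defining equation~\eqref{eq:gQR:tau}, and the telescoping to $2^{2\ell-1}\epsilon^m n^m$ still goes through.
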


\begin{proof}
  First note that since $0 < \tau < 1$, it follows that $\gQR(u)\geq\Omega(n)$ for every
  $u\in\FF_2^\ell$, so by \cref{lem:Apower} with $g_0\df\gQR$, we have
  \begin{align*}
    A_v^m \Lambda(X)
    & =
    \sum_{F\in\cF_{m,v}} \binom{m}{F}\prod_{u\in\FF_2^\ell} \gQR(u)^{F(u)} + o(n^m),
  \end{align*}
  where $\cF_{m,v}$ is given by~\eqref{eq:cFmv}.

  Let $i_0\in\supp(v)$ and note that there is a natural one-to-one correspondence between
  $\cF_{m,v}$ and the set
  \begin{align*}
    \cF
    & \df
    \left\{F\colon\FF_2^{[\ell]\setminus\{i_0\}}\to\NN \;\middle\vert\;
    \sum_{u\in\FF_2^{[\ell]\setminus\{i_0\}}} F(u) = \frac{m}{2}\right\}
  \end{align*}
  in which $F\in\cF_{m,v}$ corresponds to $F\rest_{\FF_2^{[\ell]\setminus\{i_0\}}}$. Thus, we get
  \begin{align*}
    A_v^m \Lambda(X)
    & =
    \sum_{F\in\cF} \binom{m}{F,F}\prod_{u\in\FF_2^{[\ell]\setminus\{i_0\}}}
    (\gQR(u)\gQR(u+v))^{F(u)} + o(n^m)
    \\
    & =
    \binom{m}{m/2}
    \sum_{F\in\cF}\binom{m/2}{F}^2\prod_{u\in\FF_2^{[\ell]\setminus\{i_0\}}}
    (\tau^{2\lvert u\rvert+1}(1-\tau)^{2\ell-2\lvert u\rvert-1})^{F(u)}
    n^m
    + o(n^m)
    \\
    & \geq
    \binom{m}{m/2}
    \sum_{F\in\cF}\binom{m/2}{F}\prod_{u\in\FF_2^{[\ell]\setminus\{i_0\}}}
    (\tau^{2\lvert u\rvert+1}(1-\tau)^{2\ell-2\lvert u\rvert-1})^{F(u)}
    n^m
    + o(n^m)
    \\
    & =
    \binom{m}{m/2}
    \left(\sum_{u\in\FF_2^{[\ell]\setminus\{i_0\}}}
    \tau^{2\lvert u\rvert+1}(1-\tau)^{2\ell-2\lvert u\rvert-1}
    \right)^{m/2}
    n^m
    + o(n^m)
    \\
    & =
    \binom{m}{m/2}
    \tau^{m/2}(1-\tau)^{(2\ell-1)m/2}
    \left(1 + \left(\frac{\tau}{1-\tau}\right)^2\right)^{(\ell-1)m/2}
    n^m
    + o(n^m)
    \\
    & =
    \binom{m}{m/2}
    \tau^{m/2}(1-\tau)^{\ell m/2}(1-2\tau+2\tau^2)^{(\ell-1)m/2}
    n^m
    + o(n^m),
  \end{align*}
  where the third equality follows from the Multinomial Theorem and the fourth equality follows from
  the Binomial Theorem. Thus, \cref{lem:gQR:AvmLambda} holds.

  \medskip

  For \cref{lem:gQR:entropy}, note that
  \begin{align*}
    H_2(\GQR)
    & =
    \sum_{u\in\FF_2^\ell}
    \tau^{\lvert u\rvert}(1-\tau)^{\ell-\lvert u\rvert}
    \lg\frac{1}{\tau^{\lvert u\rvert}(1-\tau)^{\ell-\lvert u\rvert}}
    \\
    & =
    \sum_{j=0}^\ell\binom{\ell}{j}
    \tau^j(1-\tau)^{\ell-j}
    \left(j\cdot\lg\frac{1}{\tau} + (\ell-j)\lg\frac{1}{1-\tau}\right)
    \\
    & =
    \ell\left(\tau\lg\frac{1}{\tau} + (1-\tau)\lg\frac{1}{1-\tau}\right)
    \\
    & =
    \ell\tau\lg\frac{1}{\tau} + \ell\tau + O(\tau^2).
  \end{align*}

  \medskip

  For \cref{lem:gQR:walks}, first note that the expression in~\eqref{eq:gQR:tau} takes a
  negative value when $\tau = 0$ and takes the value
  \begin{align*}
    2^{3-2\ell} - 2^{(4\ell-1)/m} m^{1/m}\epsilon^2
  \end{align*}
  when $\tau=1/2$, which is positive if $\epsilon > 0$ is small enough, so the expression
  in~\eqref{eq:gQR:tau} has a non-negative root before $1/2$. If $\tau$ is the first non-negative
  root in~\eqref{eq:gQR:tau}, then~\eqref{eq:gQR:tauasymp} follows straightforwardly.

  Finally, note that~\eqref{eq:gQR:AvmLambdatau} is trivial when evaluated on a point $X$ not in the
  support of $\Lambda$ as the left-hand side is clearly non-negative.

  On the other hand, for $X\in\supp(\Lambda)$, that is, for $X\in(\config_{n,\ell}^v)^{-1}(\gQR)$, by
  \cref{lem:gQR:AvmLambda}, we have
  \begin{align*}
    A_v^m \Lambda(X)
    & =
    \binom{m}{m/2}
    \tau^{m/2}(1-\tau)^{\ell m/2}(1-2\tau+2\tau^2)^{(\ell-1)m/2}
    n^m
    + o(n^m)
    \\
    & \geq
    \frac{2^m}{\sqrt{2m}}
    \tau^{m/2}(1-\tau)^{\ell m/2}(1-2\tau+2\tau^2)^{(\ell-1)m/2}
    n^m
    + o(n^m)
    \\
    & =
    2^{2\ell-1} m^{1/m}\epsilon^m n^m + o(n^m),
  \end{align*}
  where the second equality follows since $\tau$ is a root of~\eqref{eq:gQR:tau}.
\end{proof}

\begin{corollary}\label{cor:spectral_construction}
  Let $\epsilon\in(0,1)$, let $\ell,m\in\NN_+$ with $m$ even such that
  \begin{align*}
    m
    & \geq
    \frac{\ell-1}{\lg(1/\epsilon)},
  \end{align*}
  where $\lg\df\log_2$ is the binary log.

  Then for every sufficiently large $n$, there exist $g_1,g_2\in\Config_{n,\ell}$ with
  \begin{align*}
    \lvert g_1(u) - n\cdot\Gvu(u)\rvert & \leq o(n), &
    \lvert g_2(u) - n\cdot\GQR(u)\rvert & \leq o(n)
  \end{align*}
  for every $u\in\FF_2^\ell$ such that for
  \begin{align*}
    \Lambda_i & \df \One_{\config_{n,\ell}^{-1}(g_i)}, &
    F_i & \df \Phi_m\cdot\widehat{\Lambda}_i^2, &
    f_i & \df \frac{F_i}{\widehat{F}_i(0)},
  \end{align*}
  where $\Phi_m$ is given by~\eqref{eq:Phim}, we have that $f_1$ and $f_2$ are feasible solutions
  of~\eqref{eq:LPdual} with
  \begin{align*}
    \frac{\lg f_i(0)}{n}
    & \leq
    \frac{2^{(4\ell-1)/m} m^{1/m}}{4} \epsilon^2\lg\frac{1}{\epsilon} + O(\epsilon^4) +
    O_\epsilon\left(\frac{\lg(n)}{n}\right)
    \\
    & =
    \frac{1 + o(1)}{4} \epsilon^2\lg\frac{1}{\epsilon}
    + O_\epsilon\left(\frac{\lg(n)}{n}\right)
  \end{align*}
  as $n\to\infty$ and $\epsilon\to 0$ with $\ell$ and $m$ fixed (in the above, the error term
  $O_\epsilon(\lg(n)/n)$ hides multiplicative factors dependent on $\epsilon$, but the error terms
  $o(1)$ and $O(\epsilon^4)$ only hide multiplicative factors that do not depend on $n$ nor on
  $\epsilon$).
\end{corollary}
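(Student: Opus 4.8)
The plan is to instantiate \cref{theo:spectral_construction} twice --- once with the vertex-uniform family of normalized configurations and once with the quasirandom family --- using \cref{lem:gvu} and \cref{lem:gQR} to discharge its two hypotheses (the walk inequality~\eqref{eq:adhoc:AvmLambda} and the entropy estimate~\eqref{eq:adhoc:lgf}) and to read off the objective value; the two solutions obtained are $f_1$ and $f_2$.

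\emph{Parameters and configurations.} For $f_1$ I would take $\tau$ to be the value in~\eqref{eq:gvu:tau}, and for $f_2$ the first non-negative root of~\eqref{eq:gQR:tau}; by \cref{lem:gvu:walks} and \cref{lem:gQR:walks} these exist for all sufficiently small $\epsilon$ and satisfy $\tau=\Theta_{\ell,m}(\epsilon^2)$, so in particular $\tau\in(0,1/\ell)$ (resp.\ $\tau\in(0,1)$) once $\epsilon$ is small, making $\Gvu$ (resp.\ $\GQR$) a bona fide element of $\NConfig_\ell$. Take $g_2\df\gQR\in\Config_{n,\ell}$ and $g_1\df\gvu\in\Config_{n,\ell}$, the integer roundings of $n\cdot\GQR$ and $n\cdot\Gvu$; these satisfy $\lvert g_i(u)-n\cdot G(u)\rvert=O(1)=o(n)$, and $g_2(u)>0$ for every $u$ since every coordinate of $\GQR$ is a positive constant. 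The one wrinkle is that $\Gvu$ vanishes outside $\{u:\lvert u\rvert\le 1\}$, so \cref{theo:spectral_construction} as stated (which asks $G(u)>0$ for all $u$) does not apply to it verbatim. I would dispose of this by observing that the proof of \cref{theo:spectral_construction} only uses the exponential order $2^{H_2(G)n+O(\lg n)}$ of $\lvert\config_{n,\ell}^{-1}(nG)\rvert$ --- which, by \cref{lem:volume}, persists when some $G(u)=0$ --- together with \cref{lem:Mm} and the walk inequality, none of which requires strict positivity; so the theorem applies to $g_1=\gvu$ unchanged. (Equivalently one could perturb $\gvu$ to a strictly positive configuration within $O(\sqrt n)=o(n)$ and invoke a routine robustness of \cref{lem:Apower}, which is why the corollary only demands $\lvert g_1(u)-n\Gvu(u)\rvert\le o(n)$ rather than equality.)

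\emph{Checking~\eqref{eq:adhoc:AvmLambda}.} Put $\Lambda_i\df\One_{\config_{n,\ell}^{-1}(g_i)}$. Fix $i\in\FF_2^\ell\setminus\{0\}$, choose $j\in\supp(i)$ and set $v\df e_j$, so $\lvert v\rvert=1$ and $\langle i,v\rangle=1$; it is enough to establish~\eqref{eq:adhoc:AvmLambda} for this $v$. On $X\notin\supp(\Lambda_i)$ the inequality is trivial since $A_v^m$ and $\Lambda_i$ are entrywise non-negative. On $X\in\supp(\Lambda_i)$, \cref{lem:gvu:AvmLambda} (resp.\ \cref{lem:gQR:AvmLambda}) evaluates $A_v^m\Lambda_i(X)$ to $c\cdot n^m+o(n^m)$ where, by the choice of $\tau$, $c\ge 2^{2\ell-1}\epsilon^m$ --- and, decisively, with a fixed positive slack: for the quasirandom family $c=\binom{m}{m/2}\tau^{m/2}(1-\tau)^{\ell m/2}(1-2\tau+2\tau^2)^{(\ell-1)m/2}\ge 2^{2\ell-1}m^{1/m}\epsilon^m$ with $m^{1/m}>1$; for the vertex-uniform family $\binom{m}{m/2}>\tfrac{2^m}{\sqrt{2m}}$ as soon as $m\ge 4$ (the boundary case $m=2$ being handled by replacing $\tau$ with a slightly larger value, which alters the entropy only by a $1+o(1)$ factor). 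This $\Theta(n^m)$ gain dominates both the $o(n^m)$ error and the additive $+1$ in~\eqref{eq:adhoc:AvmLambda}, so the hypothesis holds for all large $n$. By \cref{theo:spectral_construction}, $f_1$ and $f_2$ are then feasible solutions of~\eqref{eq:LPdual} with $\lg f_i(0)/n\le H_2(g_i/n)+O(\lg n/n)$.

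\emph{Objective value, and the hard part.} Since $g_i/n$ differs from $\Gvu$ (resp.\ $\GQR$) by $o(1)$ in each coordinate, continuity of Shannon entropy gives $H_2(g_i/n)=H_2(\Gvu)+o(1)$ (resp.\ $H_2(\GQR)+o(1)$). Feeding \cref{lem:gvu:entropy} together with~\eqref{eq:gvu:tauasymp} (resp.\ \cref{lem:gQR:entropy} with~\eqref{eq:gQR:tauasymp}) into this turns $\lg f_i(0)/n$ into the first displayed bound, and rewriting $2^{(4\ell-1)/m}m^{1/m}$ as $1+o(1)$ yields $\tfrac{1+o(1)}{4}\epsilon^2\lg\tfrac1\epsilon+O_\epsilon(\lg n/n)$, i.e.\ the MRRW rate~\eqref{eq:MRRWrate} up to the advertised lower-order terms. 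I expect the two points just flagged to be where all the real work sits: reconciling the zero coordinates of $\Gvu$ with \cref{theo:spectral_construction} (cleanly, by noticing that its proof never uses strict positivity beyond the exponential order supplied by \cref{lem:volume}), and verifying that the $\binom{m}{m/2}>\tfrac{2^m}{\sqrt{2m}}$ margin --- and the $m^{1/m}>1$ margin --- genuinely absorbs the $o(n^m)$ term and the $+1$ in~\eqref{eq:adhoc:AvmLambda}; everything else is substitution into \cref{lem:gvu} and \cref{lem:gQR}.
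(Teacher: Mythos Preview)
Your proposal is correct and follows the same approach as the paper: instantiate \cref{theo:spectral_construction} with the two configuration families via \cref{lem:gvu} and \cref{lem:gQR}. The paper's own proof is a single sentence pointing to exactly this combination, with the parenthetical remark that ``the small adjustment to the configurations is needed both due to the error terms in~\eqref{eq:gvu:AvmLambdatau} and~\eqref{eq:gQR:AvmLambdatau} and to obtain the extra $1$ term needed in~\eqref{eq:adhoc:AvmLambda}''.

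The only difference is in how the slack is procured. The paper absorbs the $o(n^m)$ error and the $+1$ uniformly by perturbing the configuration (this is what the $o(n)$ leeway in the statement is for), and it does not single out the zero-coordinate issue of $\Gvu$ --- the same $o(n)$ perturbation makes all entries positive if desired. You instead argue case-by-case, using the strict inequalities $\binom{m}{m/2}>2^m/\sqrt{2m}$ (for $m\ge 4$) and $m^{1/m}>1$ to exhibit explicit $\Theta(n^m)$ margins, and separately note that the proof of \cref{theo:spectral_construction} only uses strict positivity through the exponential order in \cref{lem:volume}, which survives zeros. Both routes are fine; the paper's perturbation is tidier and covers the $m=2$ boundary without special pleading, while your margin analysis is more informative about \emph{why} no real tuning is needed.
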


\begin{proof}
  Follows by combining \cref{lem:gvu,lem:gQR} with \cref{theo:spectral_construction}
  (note that the small adjustment to the configurations is needed both due to the error terms
  in~\eqref{eq:gvu:AvmLambdatau} and~\eqref{eq:gQR:AvmLambdatau} and to obtain the extra $1$ term
  needed in~\eqref{eq:adhoc:AvmLambda}).
\end{proof}

\section{Conclusion}\label{sec:conclusion}

Establishing tight bounds on the rate-vs-distance trade-off of binary codes has remained a major open question in coding theory.
The best existential constructions given by the Gilbert--Varshamov bound have not been improved for over 70 years, and
the best upper bounds given by MRRW bound have not been improved for almost 50 years. These known bounds are the same
even for the important class of linear codes. With the inception of complete linear programming hierarchies for linear
codes extending Delsarte's LPs, an ambitious research program of analyzing these higher-order Delsarte LPs is launched.
On one hand their similarity with the original Delsarte LPs gives hope this might be a viable task. On the other hand,
the higher-order structure poses non-trivial challenges.

We view the contributions of this work as establishing important milestones in this research program as we are able
to construct higher-order dual feasible solutions for the first time. This is done in two complementary ways. First,
by explicitly lifting dual solutions from lower levels to higher levels of these hierarchies. Second, by constructing
higher-order dual solutions from scratch generalizing spectral-based techniques. Given that these constructions
either match or approximately match the best known bounds, together with the proven strength of these complete hierarchies,
they open up important avenues of further exploration. For instance, very interesting concrete questions made possible by
this work are the following.
\begin{itemize}
  \item After lifting a dual solution of the original Delsate LP to a higher-level $\ell$ of these hierarchies, can we improve its objective value and
        improve over the MRRW bound?
 \item We saw that the spectral-based construction has some degrees of freedom, namely, there is a choice of function $\phi$ capturing the sign of
       the valid region and a choice of configurations for an eigenvalue-like problem. Can we find suitable choices to improve the MRRW bound?
\end{itemize}

\section*{Acknowledgments}

The authors are very thankful for the support and hospitality of IAS and the Simons Institute.
LC, FGJ and EL thank Avi Wigderson for hosting them at the wonderful IAS, where part of this work was done.
In particular, we would like to highlight the importance to us of the programs and clusters:
``HDX and Codes'', ``Analysis and TCS: New Frontiers'', ``Error-Correcting Codes: Theory and Practice'',
and ``Quantum Algorithms, Complexity, and Fault Tolerance''.
FGJ thanks Venkat Guruswami for kindly hosting him in his fantastic research groups.
FGJ thanks the support in part as a Google Research Fellow.

CJ is a member of the Bocconi Institute for Data Science and Analytics (BIDSA).
Work supported in part by the European Research Council (ERC) under the European
Union’s Horizon 2020 research and innovation programme (grant agreement Nos. 834861 and 101019547).

\bibliographystyle{alphaurl}
\bibliography{refs}

\appendix

\clearpage

\section{Other Formulations of the Hierarchy}\label{sec:otherform}

In this section we state other formulations that are not used in the current work.

\subsection{\Lovasz\ $\vartheta'$ Formulation}
\label{subsec:Lovasz}

The $\vartheta'$ formulation mentioned in \cref{sec:hierarchies} is~\eqref{eq:SDPprimal}, whose dual
is~\eqref{eq:SDPdual}; a linear code $C\in\Valid_n$ yields a natural solution $M_C$
of~\eqref{eq:SDPprimal} given by $M_C(X,Y) \df \One[X_1,\ldots,X_\ell,Y_1,\ldots,Y_\ell\in C]/\lvert
C\rvert^\ell$, whose value is $\lvert C\rvert^\ell$.

\begin{empheq}[box=\fbox]{equation}\label{eq:SDPprimal}
  \begin{aligned}
    \text{Variables: }
    & \mathrlap{M\colon\FF_q^{\ell\times n}\times\FF_q^{\ell\times n}\to\RR\text{ symmetric}}
    \\
    \max \qquad
    & \sum_{\mathclap{X,Y\in\FF_q^{\ell\times n}}} M(X,Y)
    \\
    \text{s.t.} \qquad
    & \tr(M) = 1
    & &
    & (\text{Normalization})
    \\
    & M(X,Y) = 0
    & & \forall X,Y\in\FF_q^{\ell\times n} \text{ with } X-Y\notin\Valid_{n,\ell}
    & (\text{Validity})
    \\
    & M \succeq 0
    & &
    & \mathllap{(\text{Positive semidefiniteness})}
    \\
    & M(X,Y) \geq 0
    & & \forall X,Y\in\FF_q^{\ell\times n}
    & (\text{Non-negativity})
  \end{aligned}
\end{empheq}

\begin{empheq}[box=\fbox]{equation}\label{eq:SDPdual}
  \begin{aligned}
    \text{Variables: }
    & \mathrlap{N\colon\FF_q^{\ell\times n}\times\FF_q^{\ell\times n}\to\RR\text{ symmetric}}
    \\
    \min\qquad
    & \beta
    \\
    \text{s.t.} \qquad
    & \beta I - N \succeq 0
    & &
    & (\text{Maximum eigenvalue})
    \\
    & N(X,Y) \geq 1
    & & \forall X,Y\in\FF_q^{\ell\times n} \text{ with } X-Y\in\Valid_{n,\ell}
    & (\text{Validity})
  \end{aligned}
\end{empheq}

\subsection{LP Formulation}
\label{subsec:LP}

To get from the $\vartheta'$ formulation of~\eqref{eq:SDPprimal} to the LP formulation
of~\eqref{eq:LPprimal}, one first notes that all $\FF_q^n$-symmetric solutions must lie in the span
of the matrices
\begin{align*}
  E_Z(X,Y) & \df \One[X-Y = Z] & (X,Y,Z\in\FF_q^{\ell\times n}).
\end{align*}
On the other hand, the space of $\FF_q^n$-invariant solutions is the also the span of the Fourier
matrices
\begin{align*}
  F_Z(X,Y) & \df \chi_Z(X)\overline{\chi}_Z(Y)
  &
  (X,Y,Z\in\FF_q^{\ell\times n}),
\end{align*}
which are positive semidefinite. The corresponding change of variables is summarized by
\begin{align*}
  \sum_{Z\in\FF_q^{\ell\times n}} f(Z) E_Z & = \sum_{Z\in\FF_q^{\ell\times n}} \widehat{f}(Z) F_Z, &
  \widehat{f}(Z) & \df \frac{1}{q^{n\ell}}\sum_{X\in\FF_q^{\ell\times n}} f(X)\overline{\chi_Z(X)}
  & (f\in\CC^{\FF_q^{\ell\times n}}).
\end{align*}
Since any $\FF_q^n$-symmetric solution is of the first form above for some $f\colon\FF_q^{\ell\times
  n}\to\CC$, the semidefinite constraint amounts to non-negativity of $\widehat{f}$ and all other
constraints translate easily to linear constraints on $f$.

\subsection{Krawtchouk Formulation}
\label{subsec:kraw}

The Krawtchouk formulation mentioned in \cref{sec:hierarchies} uses the $S_n$-symmetry to
rewrite the Fourier transform in terms of the higher-order Krawtchouk polynomials
$K_h\colon\Config_{n,\ell}\to\CC$ ($h\in\Config_{n,\ell}$) given by
\begin{gather*}
  K_h(g)
  \df
  \sum_{F\in\cF_{g,h}}
  \prod_{w\in\FF_q^\ell}\binom{g(w)}{F(\place,w)}
  \prod_{u,w\in\FF_q^\ell}\chi_u(w)^{F(u,w)},
  \\
  \cF_{g,h}
  \df
  \left\{F\colon\FF_q^\ell\times\FF_q^\ell\to\NN \;\middle\vert\;
  \sum_{u\in\FF_q^\ell} F(u,\place) = g\land
  \sum_{w\in\FF_q^\ell} F(\place,w) = h
  \right\},
  \\
  \begin{aligned}
  \binom{g(w)}{F(\place,w)}
  & \df
  \frac{g(w)!}{\prod_{u\in\FF_q^\ell} F(u,w)!},
  &
  \qquad
  \chi_u(w)
  & \df
  \exp\left(\frac{2\pi i u w}{q}\right).
  \end{aligned}
\end{gather*}

In both the Krawtchouk formulation of~\eqref{eq:KLPprimal} and its dual in~\eqref{eq:KLPdual} below,
$g^-\in\Config_{n,\ell}$ denotes the configuration given by $g^-(u)\df g(-u)$; a linear code
$C\in\Valid_n$ yields a natural solution $f_C$ of~\eqref{eq:KLPprimal} given by $f_C(g) \df
\lvert\{X\in(\config_{n,\ell})^{-1}(g) \mid X_1,\ldots,X_\ell\in C\}\rvert$.

\begin{empheq}[box=\fbox]{equation}\label{eq:KLPprimal}
  \begin{aligned}
    \text{Variables: }
    & \mathrlap{f\colon\Config_{n,\ell}\to\RR}
    \\
    \max \qquad
    & \sum_{g\in\Config_{n,\ell}} f(g)
    \\
    \text{s.t.} \qquad
    & f(\config_{n,\ell}(0)) = 1
    & &
    & (\text{Normalization})
    \\
    & f(g) = 0
    & & \forall g\in\config_{n,\ell}(\FF_q^{\ell\times n}\setminus\Valid_{n,\ell})
    & (\text{Validity})
    \\
    & \sum_{g\in\Config_{n,\ell}} K_h(g) f(g) \geq 0
    & & \forall h\in\Config_{n,\ell}
    & (\text{Krawtchouk})
    \\
    & f(g) \geq 0
    & & \forall g\in\Config_{n,\ell}
    & (\text{Non-negativity})
    \\
    & f(g) = f(g^-)
    & & \forall g\in\Config_{n,\ell}
    & (\text{Symmetry})
  \end{aligned}
\end{empheq}

\begin{empheq}[box=\fbox]{gather}\label{eq:KLPdual}
  \begin{aligned}
    \text{Variables: }
    & \mathrlap{f\colon\Config_{n,\ell}\to\RR, \beta\colon\Config_{n,\ell}\to\RR}
    \\
    \min \quad
    & 1 + \sum_{\mathclap{g\in\Config_{n,\ell}}} K_g(0) f(g)
    \\
    \text{s.t.} \quad
    & 1 + \sum_{\mathclap{g\in\Config_{n,\ell}}} K_g(h) f(g) + \beta(g) - \beta(g^-)\leq 0
    & & \forall h\in\config_{n,\ell}(\Valid_{n,\ell}\setminus\{0\})
    & (\text{Validity})
    \\
    & f(g) \geq 0
    & & \forall g\in\Config_{n,\ell}
    & \mathllap{(\text{Non-negativity})}
  \end{aligned}
\end{empheq}

An alternative way of obtaining~\eqref{eq:KLPprimal} is directly from the \Lovasz\ $\vartheta'$
formulation~\eqref{eq:SDPprimal} by symmetrizing the action of the natural semidirect product
$\FF_q^n\rtimes S_n$ that joins the actions of $\FF_q^n$ and $S_n$ into a single action. In turn,
this amounts to the observation that this $\FF_q^n\rtimes S_n$-action turns $\FF_q^{\ell\times n}$
naturally into a association scheme that is both a translation scheme and Schurian.

\clearpage

\section{Notation}\label{sec:notation}

The set of non-negative integers is denoted by $\NN$ and the set of positive integers is denoted by
$\NN_+\df\NN\setminus\{0\}$. For $n\in\NN$, we let $[n]\df\{1,\ldots,n\}$. We also let $\RR_+$ be
the set of non-negative reals.

For $q,n\in\NN$, we denote the \emph{$n$th geometric sum of ratio $q$} by
\begin{align*}
  [n]_q
  & \df
  \sum_{j=0}^{n-1} q^j
  =
  \begin{dcases*}
    \frac{q^n-1}{q-1}, & if $q\neq 1$,\\
    n, & if $q = 1$.
  \end{dcases*}
\end{align*}
We extend the notation above to when $n\leq 0$ in the natural way so that $\sum_{j=a}^{a-1} c_j = 0$
and $\sum_{j=a}^b c_j = -\sum_{j=b+1}^{a-1} c_j$.

Given further $k\in\ZZ$, we denote the \emph{$q$-Gaussian falling factorial of $n$ by $k$}, the
\emph{$q$-Gaussian factorial} and the \emph{$q$-Gaussian binomial of $n$ by $k$} by
\begin{align*}
  (n)_{k,q} & \df \prod_{j=0}^{k-1} [n-j]_q,
  &
  k!_q & \df (k)_{k,q},
  &
  \binom{n}{k}_q & \df
  \begin{dcases*}
    \frac{(n)_{k,q}}{k!_q}, & if $k\geq 0$,\\
    0, & otherwise,
  \end{dcases*}
\end{align*}
respectively. When $k\leq 0$, products should be interpreted in the usual fashion so that
$\prod_{j=a}^{a-1} c_j = 1$ and $\prod_{j=a}^b c_j = \prod_{j=b+1}^{a-1} c_j^{-1}$. We will omit
$q$ from the notation when $q=1$, so that the above match the usual falling factorial, factorial
and binomial, respectively.

For a set $V$ and $k\in\ZZ$, we denote by $\binom{V}{k}$ the set of all subsets of $V$ of size $k$
(so $\lvert\binom{V}{k}\rvert = \binom{\lvert V\rvert}{k}$ when $V$ is finite).

For a prime power $q\in\NN$, we denote by $\FF_q$ the field with $q$ elements and for $x\in\FF_q^n$,
we denote by $\lvert x\rvert\df\lvert\supp(x)\rvert$ the \emph{Hamming weight} of $x$. For an
$\FF_q$-vector space $V$, we denote by $L_{\FF_q}(V)$ the set of all $\FF_q$-linear subspaces of $V$
and we denote by $\GL_\ell(\FF_q)$ the general linear group of degree $\ell$ over $\FF_q$ (i.e., the
group of non-singular $\ell\times\ell$ matrices over $\FF_q$). For a matrix $X$, we denote by $X_i$
the $i$th row of $X$ and by $X_{i_1,\ldots,i_t}$ the matrix obtained by restricting $X$ to the rows
indexed by $i_1,\ldots,i_t$.

A \emph{distance-$d$ code} is a code $C \subseteq \F_q^n$ such that $\lvert x-y\rvert\geq d$ for all
$x,y \in C$ with $x \neq y$.  We denote by $A_q(n,d)$ the size of the largest distance-$d$ code in
$\F_q^n$ and by $A_q^{\Lin}(n,d)$ the size of the largest distance-$d$ code in $\F_q^n$ that
is also a subspace of $\F_q^n$.

\end{document}